\documentclass[12pt]{article}

\usepackage[utf8]{inputenc}
\usepackage[T1]{fontenc}
\usepackage[english]{babel}

\usepackage[a4paper, total={6.3in, 8.9in}]{geometry}

\usepackage{tocloft}
\usepackage{mathptmx} 

\usepackage{verbatim}
\usepackage{enumitem}
\usepackage{array,multirow}
\usepackage{amssymb,amsfonts,amsmath,amsthm}
\usepackage{aliascnt} 
\usepackage{url}
\usepackage{tikz}
\usepackage{bbm}
\usepackage{accents}
\usepackage{thmtools}

\usepackage{placeins}

\usetikzlibrary{arrows.meta}
\usetikzlibrary{calc}

\usepackage{xcolor}
\usepackage{ctable}
\usetikzlibrary{matrix,arrows}
\usetikzlibrary{decorations.markings}

\definecolor{citec}{HTML}{324FDA} 
\definecolor{linkc}{HTML}{A0111A}
\definecolor{urlc}{HTML}{b55c87}
\usepackage[colorlinks,
    citecolor=citec,
    linkcolor=linkc,
    urlcolor=urlc,
    bookmarks = true,
    breaklinks,
]{hyperref}
\usepackage[nameinlink, noabbrev, capitalize]{cleveref}

\theoremstyle{plain}
\newtheorem{theorem}{Theorem}[section]

\newtheorem{lemma}[theorem]{Lemma}
\newtheorem{corollary}[theorem]{Corollary}
\newtheorem{proposition}[theorem]{Proposition}

\newtheorem{definition}[theorem]{Definition}
\newtheorem{fact}[theorem]{Fact}
\newtheorem{remark}[theorem]{Remark}

\AtBeginDocument{%
}

\numberwithin{equation}{section}

\newcommand{\eps}{\varepsilon}
\newcommand{\Tr}{\operatorname{Tr}}

\def\poly{{\mathrm{poly}}}

\DeclareMathOperator{\im}{im}

\renewcommand{\Pr}{\mathop{\bf Pr\/}}

\newcommand{\FF}{{\mathbb{F}}}

\newcommand{\ZZ}{{\mathbb{Z}}}

\begin{document}
\pagenumbering{gobble} 

\title{Algebraic-Geometric Parvaresh--Vardy Subspace Designs and Rank Condensers}

\date{}
        \author{
  Gil Cohen\thanks{Tel Aviv University. \texttt{gil@tauex.tau.ac.il}. Supported by ERC starting grant 949499 and by the Israel Science Foundation grant 2989/24.  	
  }\and
  Dean Doron\thanks{Ben Gurion University. \texttt{deand@bgu.ac.il}. Supported in part by NSF-BSF grant 2022644.}
  \and
  Noam Goldgraber\thanks{Ben Gurion University and Tel Aviv University. \texttt{goldgrab@post.bgu.ac.il}. Supported by NSF-BSF grant 2022644.}
}

\maketitle

\begin{abstract}
A subspace design is a collection of subspaces $H_1,\ldots,H_n$ of $\mathbb{F}_q^k$ with the property that no low-dimensional subspace $W$ intersects the collection ``too much’’. Subspace designs and related objects in linear-algebraic pseudorandomness have found a broad range of applications, ranging from list decoding and recovery, to derandomizing algorithms. 

We construct explicit strong subspace designs over \emph{every} finite field. In the extremal case where the co-dimension $t$ of each $H_i$ is equal to the dimension of $W$, for every constant field size our construction attains $n=\Omega(k)$ and matches the probabilistic intersection bound up to a constant factor. All previous constructions required the field size to grow with $t$ (or $k$).
Our subspace designs also imply new construction of rank condensers over arbitrary finite fields. This result is the first to achieve an optimal dependence on $k$ while maintaining both a constant output entropy rate and a constant field size.
As an application, we construct lossless rank extractors for linear sources of rank $r$, for all $r < q$, with parameters matching those of Guo, Raj, Shangguan and Zhang (FOCS '26), thereby generalizing their result to prime fields and smaller field sizes.

Our construction is based on an algebraic-geometric version of the Parvaresh--Vardy codes (Parvaresh--Vardy FOCS '05, Guruswami ECCC '05), extending the framework underlying the condensers of Guruswami, Umans and Vadhan (JACM '09). We view our construction as a linear-algebraic analysis -- tailored to affine sources -- of the GUV construction, generalized to functions over algebraic curves.
More specifically, inspired by Ta-Shma and Umans (CCC 12') we develop a two-level evaluation scheme, where we first evaluate a function on a curve at extension-field points, and then evaluate a corresponding affine-linear polynomial to obtain outputs over the base field.
\end{abstract}

\newpage
\tableofcontents
\newpage
\pagenumbering{arabic}  

\section{Introduction}

Algebraic pseudorandomness studies algebraic analogues of traditional, combinatorial, pseudorandom objects, wherein the families of tests we wish to fool are characterized by \emph{linear subspaces} rather than arbitrary sets. Often, we seek to efficiently construct a collection of linear subspaces that enjoy random-like properties with respect to an arbitrary linear subspace. Such examples include subspace designs, rank extractors and condensers, and dimension expanders
(see \cite{GR08a, FS12, GK16, FG15, CI17, GXY18} for a sample of earlier works). 
Beyond being natural and interesting in their own right, the linear-algebraic pseudorandom objects have found many applications, the earlier of which include, e.g., list decoding \cite{GX12}, randomness extractors \cite{GR08a}, and polynomial identity testing \cite{KS11, FS12}.

Very recently, research in linear-algebraic pseudorandomness has regained renewed momentum,
with new constructions \cite{GRSZ26, GGH26} and exciting applications. These applications include,
most prominently, subspace design codes and new results in list decoding and list recovery
\cite{CZ25, BCDZ26, GGH26, GG26}. Other applications include strong blocking sets \cite{GRSZ26}, tensors rank \cite{dvir2025}, and the breakthrough result of bipartite matching in $\mathbf{NC}$ \cite{CGGRT26} (see also \cite{KS26}).

\paragraph{Subspace Designs.} The object this work focuses on 
 is \emph{subspace designs}, first defined by Guruswami and Xing \cite{GX12}.

\begin{definition}[Subspace design]
Let $q$ be a prime power, let $1\leq r\leq t\leq k$, let $n\geq1$ be an integer and $A\geq0$, and let
$\mathcal{H}=(H_i)_{i\in[n]}$ be a collection of subspaces of
$\FF_q^k$, each of codimension at most $t$. We say that $\mathcal{H}$ is an
\emph{$(r,A)$-weak subspace design} if every $r$-dimensional subspace
$W\leq\FF_q^k$ satisfies
\[
    \bigl|\{i\in[n]:H_i\cap W\neq\{0\}\}\bigr|\leq A.
\]
We say that $\mathcal{H}$ is an \emph{$(r,A)$-strong subspace design} if for
every such $W$ we have
\[
    \sum_{i=1}^n\dim(H_i\cap W)\leq A.
\]
In either case, we say that the design has codimension $t$.
\end{definition}

Naturally, our goal is to minimize $A$ and maximize $n$, while keeping $t$ as close to $r$ as possible.
In this paper, our primary focus is on strong subspace designs, which we simply refer to as \emph{subspace designs}.

The probabilistic method guarantees the existence of a subspace design with $n = q^{\Omega(t)}$ 
and $A = O(rk/t)$ when $t \geq 2r$\footnote{More precisely, this holds for any $t \geq \alpha r$ where $\alpha > 1$.} (see \cite[Lemma 4]{GK16}). 
In the regime $r = t$,\footnote{This is the extremal case, as the sum of the dimensions of a subspace in the collection and the test subspace exactly equals the dimension of the ambient space. Any increase in either dimension trivially forces a non-zero intersection; thus, this is the regime where one expects the largest bound $A$.} the behavior is quite different. There, it yields $n = \Omega(rk\,\log^{1/2}(q))$ and $A = (1+o_q(1))r(k-r)$ (see \cite[Theorem~A.1]{GRSZ26}).

Our goal is to provide \emph{explicit} constructions,
meaning we seek an algorithm that outputs such an object in time $\text{poly}(k, q, n, A)$.
Similarly, throughout this paper, by \emph{explicit} we mean that the object can be constructed by a deterministic algorithm running in time polynomial in the relevant parameters. We allow the running time to be $\text{poly}(q)$ rather than $\text{poly}(\log q)$ because whenever $q > \text{poly}(k)$, there are already strictly better constructions \cite{GK16} that achieve polylogarithmic time. We therefore disregard this technicality and simply bound the running time by $\text{poly}(q)$.

When $t \geq 2r$,
the first explicit constructions with essentially optimal intersection bounds,
due to Guruswami and Kopparty \cite{GK16}, require $q>k$. Later, Guruswami, Xing, and
Yuan \cite{GXY18} generalized the \cite{GK16} 
construction to function fields, which provided a construction over every finite field, but with the
weaker bound
\[
    A=O\left(\frac{rk\,\log_q k}{t-r+1}\right).
\]
When $t = O(r)$, the size of this collection is $n = \frac{k}{t}q^{O(1)}$, which again requires $q$ to grow with $k$ in order to provide a non-trivial result.
Recently, Goyal, Guruswami and Hsieh \cite{GGH26} used the expander-based Alon-Edmonds-Luby (AEL)
framework to construct subspace designs over any finite field with $A = O(rk/t)$, but with $t = \text{poly}(r)\cdot q^{r^2}$.

In this paper, our focus is on the regime $r = t$.
Guo, Raj, Shangguan, and Zhang \cite{GRSZ26} recently
constructed explicit subspace designs with $A = O(rk)$ and $n \geq q^{1/4}A$, for all non-prime field sizes
$q$ such that $q \geq \text{poly}(r)$. For prime fields in this regime, and for smaller field sizes, they obtain
subspace designs with larger bounds on $A$; That is, either multiplied by a superpolynomial factor in $r$, or by an exponential factor in $r$, respectively.
Hence, their construction achieves the optimal value of $A$ up to a constant, whenever $q \geq \text{poly}(r)$.
This is the first construction that achieves near-optimal parameters and field size independent of the ambient-space dimension $k$,
but rather depends only on the dimension of the test subspace $r$.

The main contribution of this work is a new, explicit construction of strong subspace designs.
\begin{theorem}[\cref{cor:all-field-strong-subspace-designs}, simplified]\label{thm:intro-subspace-designs}
    Let $q$ be a prime power, let $0 < \gamma < 1$, and let $k\geq2$. For every
    $1\leq r\leq k$,
    there exists an explicit $(r,A)$ subspace design $\mathcal{H}$ of $n$ subspaces of
    $\FF_q^k$, each of codimension at most $r$, satisfying
    \[
        n = O\left(q^3k/\gamma^{5}\right), \qquad A=\left(\frac{1}{q}+\gamma\right)rn.
    \]
    Moreover, every nonzero subspace $W\leq\FF_q^k$ of dimension at most
    $r$ satisfies
    \[
        \mathbb{E}_{H\in\mathcal{H}}
        \dim(W\cap H)
        <\left(\frac{1}{q}+\gamma\right)\dim W.
    \]
\end{theorem}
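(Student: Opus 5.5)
We will build the design from an algebraic-geometric Parvaresh--Vardy type encoding followed by a two-level evaluation, as the abstract describes, and derive the subspace-design property from a linear-algebraic (affine-source) version of the Guruswami--Umans--Vadhan list-size argument.

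\textbf{Setup.} Fix a function field $F/\FF_q$ with many rational places, e.g.\ a Garcia--Stichtenoth tower over $\FF_{q^2}$ or a suitable constant-field extension. Its number of degree-$1$ places should be a constant fraction times its genus $g$. We identify $\FF_q^k$, via an $\FF_q$-linear isomorphism, with a Riemann--Roch space $L(\ell P_\infty)$ over the base field. Choose an irreducible place $E$ of large degree and the Parvaresh--Vardy power map $f\mapsto f^{h^j}\bmod E$ for $j<m$.

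\textbf{Two-level evaluation.} First evaluate each $f_j$ at a rational point $P$ of an extension-field curve; this gives values in $\FF_{Q}$ with $Q=q^{s}$. Then, following Ta-Shma--Umans, for each $\beta$ in a small evaluation set take an $\FF_q$-linear functional, i.e.\ evaluate the affine-linear polynomial $\sum_j f_j(P)\,\beta^{j}$ (or a linear form) to land in $\FF_q$. The subspaces are
\[
H_{P,\beta}=\{f: \text{the output at }(P,\beta)\text{ vanishes}\},
\]
each of codimension at most $r$ once we group $r$ outputs. This gives $n=(\#P)\cdot(\#\beta)$, which we tune to be $O(q^3k/\gamma^5)$.

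\textbf{Key lemma.} For a subspace $W$ of dimension $d\le r$, we must bound $\sum_i\dim(W\cap H_i)$. Since $\dim(W\cap H)$ counts, up to a $q$-power, the number of $f\in W$ killed by $H$, we rewrite the quantity as
\[
\mathbb E_H\dim(W\cap H)=\mathbb E_H\log_q|W\cap H|.
\]
We prove the bound by the GUV interpolation argument adapted to affine sources. Suppose the average exceeds $(1/q+\gamma)d$. Interpolate a nonzero low-degree polynomial $Q(Y_0,\dots,Y_{m-1})$ with coefficients in a Riemann--Roch space that vanishes on the image of $W$ under the encoding. Linearity of $W$ lets us count degrees of freedom via $\dim W$ rather than $|W|$, which is the point of the affine analysis. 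If many $(P,\beta)$ have large intersection, then for each $f\in W$ the function $Q(f,f^h,\dots)$ has more zeros than its pole order and therefore vanishes identically. Reducing modulo $E$, $f$ is a root of the univariate polynomial $Q^*(Y)=Q(Y,Y^h,\dots)$ over the residue field. That polynomial has degree below $|W|$, which is a contradiction.

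The $1/q$ term comes from the second level. A random $\FF_q$-linear functional kills a fixed nonzero vector with probability $1/q$, so the base loss is $d/q$ and the GUV slack supplies $\gamma d$. The main obstacle is making the degree counting work with a genus-$g$ curve over tiny $q$. We must choose $\ell$, $\deg E$, $h$, $m$ and the extension degree $s$ so that
\[
\#\text{points}\cdot(\text{zeros per point})>\text{pole order}+g,
\]
while keeping the total loss at most $\gamma$. We would first try $m\approx 1/\gamma$ and $h\approx q/\gamma$, set the extension degree so that the number of rational points exceeds $g/\gamma$, and verify that $n$ comes out to $O(q^3k/\gamma^5)$.
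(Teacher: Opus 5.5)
\textbf{Your interpolation route cannot reach the stated parameters, and your parameter choices break the construction.} With $h\approx q/\gamma$, the map $f\mapsto f^{h^j}\bmod E$ is not $\FF_q$-linear unless $h$ is a power of $q$. So your sets $H_{P,\beta}$ are not even subspaces. You are forced to take $h=q$: the lifted Frobenius maps $\tau_i(f)=I(\pi_Q(f)^{q^i})$, where $I$ is a linear right inverse of the residue map on a larger space $\mathcal{L}(\widetilde dP_\infty)$. Taking $h$ a larger power of $q$ only makes the degree count worse. To get codimension at most $r$ while still handling $r$-dimensional $W$, you need exactly $r$ components, not $m\approx 1/\gamma$. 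Your second-level map also does not work as written: $\sum_j f_j(P)\beta^j$ merges the different PV coordinates into one value, and it lands in $\FF_{q^s}$, not $\FF_q$.

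Even with $h=q$, the GUV argument fails quantitatively, for three reasons.
\begin{itemize}
\item A large value of $\mathbb{E}_H\dim(W\cap H)$ does not make the interpolation set $\bigcup_y\{y\}\times E_y(W)$ small. The intersections may be concentrated on a few seeds, leaving its size close to $nq^{\dim W}$. So you must restrict to the bad seeds.
\item After that restriction, $Q(f,\tau_1(f),\ldots)$ has individual degree up to $q-1$ in $\dim W$ variables, hence pole order roughly $(q-1)\dim W\cdot\widetilde d$.
\item Vanishing at $P$ certifies only one zero per bad place, not $\dim(W\cap H_P)$ zeros. So you bound the number of bad places rather than the sum of intersection dimensions.
\end{itemize}
The result is weaker than needed by roughly a factor $q\dim W$. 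Over rational places it exceeds the $O(\sqrt q\,k)$ places any curve over $\FF_q$ has. With degree-$\ell$ places it would force $q^{\ell/2}\gtrsim qr/\gamma$, so $n$ grows with $r$ instead of being $O(q^3k/\gamma^5)$. The paper notes this same loss for the GUV/Cheraghchi-type linear condensers.

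\textbf{The missing idea is the PV--Moore determinant, which replaces interpolation.} Take a basis $f_1,\ldots,f_s$ of $W$ and let $\Delta=\det(\tau_i(f_j))_{0\le i<s,\,1\le j\le s}$.
\begin{itemize}
\item $\Delta\neq 0$: modulo a place $Q$ of degree $m>d$ (so that $\pi_Q$ is injective on $V$), the matrix is the Moore matrix of the $\FF_q$-independent elements $\pi_Q(f_j)$. This is exactly where $h=q$ is needed.
\item $\Delta\in\mathcal{L}((d+(s-1)\widetilde d)P_\infty)$, and $d+(s-1)\widetilde d<6sk$.
\item At every place $P$, the kernel of the evaluated matrix has $\kappa(P)$-dimension at most $v_P(\Delta)$.
\end{itemize}
Summing over degree-$\ell$ places gives a bound of $(d+(s-1)\widetilde d)/\ell$. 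It is linear in $s$, with no factor $q$ and no dependence on $r$.

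The $1/q$ term is a Schwartz--Zippel term for a determinant of degree at most $s$, not a ``random functional kills a vector'' heuristic. Write each $\tau_i(f_j)(P)\in\FF_{q^\ell}$ in a basis $1,\omega_1,\ldots,\omega_{\ell-1}$. Apply the same affine form $b_0+\sum_t b_tX_t$ to all $r$ coordinates, at every $\boldsymbol a\in\FF_q^{\ell-1}$. This gives an $s\times s$ matrix $B_P(\boldsymbol X)$ of affine forms whose generic kernel dimension satisfies $u_P\le v_P(\Delta)$. A nonsingular $(s-u_P)$-minor, together with the multiplicity Schwartz--Zippel lemma, gives $\sum_{\boldsymbol a}\dim\ker B_P(\boldsymbol a)\le q^{\ell-2}(s+(q-1)u_P)$.

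Now divide by $n=q^{\ell-1}N$. Here $N=\lceil 6\widetilde k/(\ell\gamma)\rceil$ degree-$\ell$ places are taken from the Garcia--Stichtenoth tower over $\FF_{q^\ell}$, descended to $\FF_q$, with $\ell$ even and $q^{\ell/2}-1\ge 12/\gamma$. The average comes out below $(1/q+\gamma)s$. Univariate forms of degree $\ell-1$ would cost a factor $\ell-1$ in this term. Explicitness also requires constructing a place $Q$ of prime degree $m\in(d,2d)$, which your sketch takes for granted.
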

Crucially, our construction applies to arbitrary field sizes while simultaneously achieving $t = r$.
Moreover, it achieves the optimal value of $A$ up to a $\text{poly}(q)$ factor, which is simply
a constant when $q$ itself is constant. For fields of square size, we further improve this bound
to $A = 6rk$ (see \cref{thm:square-field-subspace-design}).

We also consider related linear-algebraic pseudorandom objects, whose constructions follow primarily from our results on subspace designs.

\paragraph{Lossy rank condensers.} We first consider \emph{lossy rank condensers}, originally introduced by Forbes, Saptharishi and Shpilka in \cite{FSS14}
for applications to read-once algebraic branching programs (though they specifically considered the \emph{lossless} variant; see \cref{def:lossless-rank-extractors}).
Later, Forbes and Guruswami \cite{FG15}
constructed lossy rank condensers via a reduction from the subspace designs of \cite{GK16},
and applied them to construct dimension expanders. We begin with the definition.

\begin{definition}[Lossy rank condenser]\label{def:lossy-rank-condenser}
Let $q$ be a prime power, let $1\leq r\leq t\leq k$, let $n\geq1$ be an integer, let
$\varepsilon,\delta\in[0,1]$, and let
$\mathcal{E}=(E_i)_{i\in[n]}$ be a collection of matrices in
$\FF_q^{t\times k}$. We say that $\mathcal{E}$ is an
\emph{$(r,\varepsilon,\delta)$-lossy rank condenser} if every rank-$r$
matrix $M\in\FF_q^{k\times r}$ satisfies
\[
    \bigl|\{i\in[n]:
    \operatorname{rank}(E_iM)<(1-\varepsilon)r\}\bigr|
    \leq\delta n.
\]
\end{definition}
In \cite{FG15}, lossy rank condensers were defined as the
special case $\delta=1-1/n$ of the above definition. In this case, for every
rank-$r$ matrix $M$, at least one matrix $E_i$ satisfies
$\operatorname{rank}(E_iM)\geq(1-\varepsilon)r$. We find it more natural to refer this object as a \textit{lossy rank disperser}.

Our subspace designs yield lossy rank condensers with output dimension $t = r$.
\begin{theorem}[\cref{cor:all-field-lossy-rank-condensers}]
Let $q$ be a prime power, and let $\varepsilon,\delta\in(0,1]$ satisfy
$q>1/(\varepsilon\delta)$. For every
integers $1\leq r\leq k$,
there exists an explicit $(r,\varepsilon,\delta)$-lossy rank condenser
$\mathcal{E}=(E_i)_{i\in[n]}\subseteq\FF_q^{r\times k}$ of size
\[
    n=O\left(\frac{q^3k}{(\varepsilon\delta-1/q)^5}\right).
\]
\end{theorem}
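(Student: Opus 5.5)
We derive the condenser directly from the subspace design of \cref{thm:intro-subspace-designs}, via the standard reduction of \cite{FG15}. Let $\gamma := \varepsilon\delta - 1/q > 0$; this is positive by the hypothesis $q > 1/(\varepsilon\delta)$. Take the explicit collection $\mathcal{H}=(H_i)_{i\in[n]}$ in $\FF_q^k$ given by \cref{thm:intro-subspace-designs} with parameters $r$ and $\gamma$. Each $H_i$ has codimension at most $r$, and $n=O(q^3k/\gamma^5)$, which is exactly the claimed size.

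For each $i$ we pick a matrix $E_i\in\FF_q^{r\times k}$ with $\ker E_i \supseteq H_i$. Concretely, the rows of $E_i$ span the annihilator $H_i^{\perp}$, whose dimension is $\operatorname{codim} H_i \le r$; if that dimension is below $r$, we pad with zero rows. We may assume $\ker E_i = H_i$ exactly, though the inclusion $\ker E_i\supseteq H_i$ already suffices after a small adjustment (see below). Each $E_i$ is computable by Gaussian elimination, so the collection is explicit.

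Now fix a rank-$r$ matrix $M\in\FF_q^{k\times r}$ and let $W=\operatorname{col}(M)$, so $\dim W=r$. Then
\[
\operatorname{rank}(E_iM)=\dim E_i(W)=r-\dim(W\cap\ker E_i)=r-\dim(W\cap H_i)
\]
when $\ker E_i=H_i$. Hence $\operatorname{rank}(E_iM)<(1-\varepsilon)r$ if and only if $\dim(W\cap H_i)>\varepsilon r$. By the ``moreover'' part of \cref{thm:intro-subspace-designs},
\[
\mathbb{E}_i\dim(W\cap H_i)<(1/q+\gamma)r=\varepsilon\delta r.
\]
Markov's inequality then gives
\[
\Pr_i[\dim(W\cap H_i)>\varepsilon r]<\varepsilon\delta r/(\varepsilon r)=\delta,
\]
so at most $\delta n$ indices are bad, as required.

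The main point to check is the kernel issue. If $\operatorname{codim} H_i=c_i<r$, then $\ker E_i=H_i$ still holds, since the nonzero rows span $H_i^\perp$ and zero rows do not change the kernel, so $\ker E_i=(H_i^{\perp})^{\perp}=H_i$. The identity therefore holds exactly and no adjustment is needed. The remaining details are only the bookkeeping that $\gamma>0$ and that the size bound is $n=O(q^3k/(\varepsilon\delta-1/q)^5)$. If the design's guarantee were instead stated for $\dim W\le r$ with a non-strict inequality, the Markov step would still give at most $\delta n$ bad indices.
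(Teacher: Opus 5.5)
Your proof is correct and follows the paper's own route. The paper likewise sets $\gamma=\varepsilon\delta-1/q$ and builds the same design as in \cref{cor:all-field-strong-subspace-designs}, with the same choices of $\ell$ and $N$, to get average intersection dimension below $\varepsilon\delta r$; it then converts this to a condenser via \cref{prop:strong-design-to-lossy-condenser}, whose counting argument is exactly your Markov step, and the only difference is that you cite the design corollary directly while the paper repeats its parameter choices inline.
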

The primary feature of this construction is that it works over any constant field size
$q$ while maintaining a constant output entropy rate -- by choosing a constant $\varepsilon < 1$. Previously, all constructions required
either a growing field size (e.g., \cite{FG15,GRSZ26}) or a vanishing output entropy rate
(e.g., \cite{GXY18,GGH26}) -- that is, the rank of their promised output subspace was sublinear in the dimension of the output space\footnote{An exception is \cite{GUV09}, which allows any field and achieves constant entropy, but with significantly larger $n$.}. We refer the reader to \cref{subsec:affine-source-condensers},
where we also spell out the implications of our result (and previous results) to linear seeded
condensers for affine sources, which might be useful from a pseudorandomenss perspective.

\paragraph{Lossless rank extractors.} A \emph{lossless rank extractor} is simply an $(r,0, \delta)$ lossy rank condenser in which $r=t$, and therefore retains all the entropy.
\begin{definition}[Lossless rank extractor]\label{def:lossless-rank-extractors}
Let $q$ be a prime power, let $1\leq r\leq k$, let $n\geq1$, and let
$\mathcal{E}=(E_i)_{i\in[n]}$ be a collection of matrices in
$\FF_q^{r\times k}$. For $\delta>0$, we say that $\mathcal{E}$ is an
\emph{$(r,\delta)$-lossless rank extractor} if every rank-$r$ matrix
$M\in\FF_q^{k\times r}$ satisfies
\[
    \bigl|\{i\in[n]:\operatorname{rank}(E_iM)<r\}\bigr|\leq\delta n.
\]
The collection $\mathcal{E}$ is called an \emph{$r$-lossless rank disperser} if for every rank-$r$ matrix
$M\in\FF_q^{k\times r}$ there exists $i\in[n]$ such that $\operatorname{rank}(E_iM)=r$.
\end{definition}
Observe that to obtain $r$-lossless rank dispersers over $\FF_q$, it suffices to construct a subspace design
with $t=r$ such that $A < n$. Any subset of size $A+1$ from this design then yields an $r$-lossless rank disperser.
This observation was used in both \cite{FG15} and \cite{GRSZ26}. The former, which relied on the subspace designs of
\cite{GK16}, required $q > r(n-r)$. The latter provided the first construction over a field size independent of $k$,
requiring only $q \geq \text{poly}(r)$. Furthermore, for non-prime fields they
achieved $A = O(rk)$ for infinitely many $k \geq r$ (or $A = O(qrk)$ for all $k \geq r$), whereas over large prime fields they incurred the weaker bound
$A = O(r^{O(\log r)}rk)$. By contrast, our construction improves the field size requirement to $q > r/\delta$ and applies to both prime and non-prime fields, at the cost of increasing $A$ by a $\text{poly}(r,q)$ multiplicative factor.
\begin{theorem}[\cref{cor:all-field-lossless-rank-extractors}]
Let $1\leq r\leq k$ be integers, and let $0<\delta<1$. For every prime power
$q>r/\delta$,
there exists an explicit $(r,\delta)$-lossless rank extractor
$\mathcal{E}=(E_i)_{i\in[n]}\subseteq\FF_q^{r\times k}$ of size
\[
    n=O\left(\frac{q^3r^5k}{(\delta-r/q)^5}\right).
\]
In particular, for every prime power $q>r$, there exists an explicit
$r$-lossless rank disperser over $\FF_q$ of size $O(r^5kq^8)$.
\end{theorem}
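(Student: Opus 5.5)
We derive this from the strong subspace design of \cref{thm:intro-subspace-designs}, applied with dimension parameter $r$, and the standard passage from subspaces to matrices. Let $\gamma := \delta/r - 1/q$, which is positive because $q>r/\delta$. \cref{thm:intro-subspace-designs} then gives an explicit collection $\mathcal{H}=(H_i)_{i\in[n]}$ of subspaces of $\FF_q^k$, each of codimension at most $r$, with $n=O(q^3k/\gamma^5)$. Since $\gamma^{-5} = r^5/(\delta - r/q)^5$, this is exactly $n=O\bigl(q^3r^5k/(\delta-r/q)^5\bigr)$.

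For each $i$, I would pad $H_i$ if necessary (any subspace of it of codimension exactly $r$ still works) and pick $E_i\in\FF_q^{r\times k}$ with $\ker E_i\supseteq H_i$. Concretely, the rows of $E_i$ span the annihilator $H_i^{\perp}$, padded with zero rows when $\dim H_i^\perp<r$. The padding is harmless because $\ker E_i \supseteq H_i$ is all that is used below. Computing a basis of $H_i^\perp$ is Gaussian elimination, so the construction remains explicit.

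Now fix a rank-$r$ matrix $M\in\FF_q^{k\times r}$ and let $W=\operatorname{col}(M)$, so $\dim W=r$. Then
\[
\operatorname{rank}(E_iM)=\dim E_i(W)=r-\dim(W\cap\ker E_i).
\]
If we take $\ker E_i=H_i$ exactly, then $E_iM$ is rank-deficient precisely when $W\cap H_i\neq\{0\}$.

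The only point needing care is this choice $\ker E_i = H_i$. I can ensure it by first shrinking each $H_i$ to a subspace $H_i'\subseteq H_i$ of codimension exactly $r$. Shrinking only decreases $\dim(W\cap H_i)$, so the ``moreover'' bound of \cref{thm:intro-subspace-designs} still holds for $\mathcal{H}'$. With $\ker E_i = H_i'$, we have $E_i W\cong W/(W\cap H_i')$.

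It remains to bound the number of bad indices, which is an averaging argument. Each bad index contributes $\dim(W\cap H_i')\ge 1$, so
\[
\bigl|\{i:\operatorname{rank}(E_iM)<r\}\bigr|\le\sum_i\dim(W\cap H_i')<\Bigl(\tfrac1q+\gamma\Bigr)rn=\delta n,
\]
using the ``moreover'' part of \cref{thm:intro-subspace-designs}. This gives the $(r,\delta)$-lossless rank extractor.

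For the disperser, take $\delta=1-\tfrac{1}{2}$ adjusted so that $\delta - r/q$ is bounded below. Since $q>r$ are integers, $q\ge r+1$. Choosing $\delta=(r+1/2)/q$ gives $\delta<1$ and $\delta-r/q=1/(2q)$. Any $\delta<1$ yields a disperser, since fewer than $n$ indices are bad. The size bound becomes $O(q^3r^5k\cdot 32q^5)=O(r^5kq^8)$.

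I expect no real obstacle here. The substantive content is entirely inside \cref{thm:intro-subspace-designs}, and this statement is a direct corollary. The only delicate points are parameter bookkeeping ($\gamma>0$ and the $\gamma^{-5}$ conversion) and the codimension-exactly-$r$ normalization that makes $E_i$ have $r$ rows.
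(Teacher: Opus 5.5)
Your argument is correct and is essentially the paper's proof. The paper passes through \cref{cor:all-field-lossy-rank-condensers} with $\varepsilon=\frac12\bigl(\frac1r+\frac1{q\delta}\bigr)$, so that $\lfloor\varepsilon r\rfloor=0$ in \cref{prop:strong-design-to-lossy-condenser}; that step is exactly your observation that each bad index contributes at least $1$ to $\sum_i\dim(W\cap H_i)$. Your direct choice $\gamma=\delta/r-1/q$ only saves a constant factor, and the paper uses the same $\delta_0=(2r+1)/(2q)$ for the disperser.

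One slip to fix: the counting needs $\ker E_i\subseteq H_i$, not $\supseteq$. Zero-padding rows that span $H_i^{\perp}$ already gives $\ker E_i=H_i$ exactly, so your shrinking step is unnecessary.
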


\paragraph{Strong blocking sets.}
Introduced by \cite{RCH56}, blocking sets are classical objects in finite geometry, that characterize the minimal structure necessary to guarantee large intersection with all subspaces of a given codimension. More formally, for $1\leq s\leq k-1$, a strong $s$-blocking set in
$\operatorname{PG}(k-1,q)$ satisfies that its intersection with an arbitrary codimension-$s$ projective
subspace, spans this subspace. The challenge is to explicitly construct strong blocking sets of minimal size. As shown in \cite{FS14lines, FS16lines},
lossless rank dispersers imply strong blocking sets. Applying this reduction to our $(s+1)$-lossless rank dispersers gives,
for every prime power $q>s+1$ and every $k\geq s+1$, an explicit strong
$s$-blocking set of size $O(s^5kq^{s+8})$; see
\cref{cor:prime-field-strong-blocking-sets}.
Let $c_0$ be the absolute
constant in \cite[Corollary~6.2]{GRSZ26}. When $q=s^{O(1)}$, our bound $O\left (kq^s s^{O(1)} \right)$ improves all applicable previous bounds, by an exponential factor.
For a prime
$(2s)^{c_0}\leq q \leq \text{poly}(s)$, it improves upon previous results by a quasipolynomial factor in $s$. We give a detailed discussion, together with the formal definitions and statements, in
\cref{subsec:strong-blocking-sets}.

\subsection{Proof technique}
As in the work of Guruswami, Umans, and Vadhan \cite{GUV09}, our subspace designs are based on Parvaresh--Vardy codes. Conceptually, our basic construction can be viewed as (a function field generalization of) the GUV construction, equipped with a linear-algebraic analysis that exploits the affine structure of the sources.
In this subsection, we adopt the dual perspective of rank condensers to draw a more direct comparison with their construction.

The GUV construction evaluates low degree polynomials over a finite field, and the seed of the condenser is parametrized by field elements. In the context of rank condensers, this means that each map corresponds to a unique element of the underlying field -- and therefore the field size must grow together with other parameters.

To circumvent this, we instead use functions over algebraic curves, with a growing number of points.
Combined with our linear-algebraic analysis, this approach yields lossy rank condensers over fields of square size, larger than $49$.
To extend this result to arbitrary finite fields, we apply a second evaluation step, following a technique from \cite{TSU12}.
We now describe these ideas in more detail.

\paragraph{The GUV construction.}
GUV construct their condenser by evaluating the correlated polynomials of
Parvaresh--Vardy codes \cite{PV05}.
Let $k\geq2$ and $1\leq r\leq k$, and identify $\FF_q^k$ with the space
$V$ of polynomials in $\FF_q[X]$ of degree less than $k$. Fix an
irreducible polynomial $Q\in\FF_q[X]$ of degree $k$ and an integer
$h\geq2$. For $f\in V$ and $0\leq i\leq r-1$, define
\[
    \tau_i(f)=f^{h^i}\bmod Q,
\]
considered as a polynomial of degree less than $k$. In particular, $\tau_0(f)=f$.
For an element $a\in\FF_q$, the GUV construction is
\[
    E_a(f)=\bigl(\tau_0(f)(a),\ldots,\tau_{r-1}(f)(a)\bigr),
    \qquad
    \operatorname{Cond}(f,a)
    =\bigl(a,E_a(f)\bigr).
\]

The GUV analysis follows the list-decoding argument for PV codes. Given an
output set $T\subseteq\FF_q^{r+1}$, we seek to bound, in terms of $|T|$,
the size of the list
\[
    \operatorname{LIST}(T)
    =\{f\in V:\operatorname{Cond}(f,a)\in T
      \text{ for every }a\in\FF_q\}.
\]
We first interpolate a nonzero polynomial
$R\in\FF_q[X,Y_0,\ldots,Y_{r-1}]$ vanishing on $T$. For every
$f\in\operatorname{LIST}(T)$, we have
\[
    R\bigl(a,\tau_0(f)(a),\ldots,\tau_{r-1}(f)(a)\bigr)=0
    \qquad\text{for every }a\in\FF_q.
\]
Provided that $T$ is sufficiently small, we can choose the degree bounds
on $R$ so that the univariate polynomial
$R(X,\tau_0(f),\ldots,\tau_{r-1}(f))$ has degree less than $q$.
Since it vanishes on all $q$ field elements, it is identically zero.

We now view each $f\in V$ as an element of the extension field
$\FF_q[X]/(Q)$. The relations $\tau_i(f)\equiv f^{h^i}\!\!\!\mod Q$
imply that every $f\in\operatorname{LIST}(T)$ is a root of
\[
    R^*(Z)=R(X,Z,Z^h,\ldots,Z^{h^{r-1}})\bmod Q
    \in\bigl(\FF_q[X]/(Q)\bigr)[Z].
\]
The monomials in $Y_0,\ldots,Y_{r-1}$ are chosen to become distinct powers
of $Z$, and $R$ can be chosen to remain nonzero modulo $Q$.
Thus, $R^*$ is nonzero, which yields
$|\operatorname{LIST}(T)|\leq\deg R^*$.

\paragraph{A linear-algebraic analysis.}
Linear condensers for affine sources can equivalently be viewed as lossy rank condensers (as we explain in \cref{subsec:affine-source-condensers}).
Henceforth, we proceed using the latter formulation.
Similarly to \cite[Corollary 2.23]{Che10}, we instantiate the same construction with $h=q$. With this choice, the maps $\tau_i$ and $E_a$ become $\FF_q$-linear, and our lossy rank condenser is simply the collection $\{E_a\}_{a \in \FF_q}$. We next present our analysis.

Fix an $s$-dimensional subspace $W\leq V$, where
$1\leq s\leq r$. The rank loss of ${E_a}$ at $W$ is $s-\dim E_a(W)$.
Our goal is to bound the total rank loss over all evaluation points $a$.
We obtain this bound from the determinant of a matrix associated with a
basis of $W$.

Let $\mathcal{B}=(f_1,\ldots,f_s)$ be an $\FF_q$-basis of $W$. We define
the \emph{PV-Moore matrix} by
\begin{equation}\label{eq:PV-Moore}
    M_{\mathcal{B}}
    =\bigl(\tau_i(f_j)\bigr)_%
      {0\leq i\leq s-1,\,1\leq j\leq s}.
\end{equation}
Reducing $M_{\mathcal{B}}$ modulo $Q$ gives the Moore matrix of the
residue classes of $f_1,\ldots,f_s$, in $\FF_q[X]/(Q) \cong \FF_{q^k}$. Since reduction modulo $Q$ is
injective on $V$, these classes are linearly independent over $\FF_q$.
The Moore-matrix criterion therefore implies that
$\Delta_{\mathcal{B}}:=\det M_{\mathcal{B}}$ is nonzero.

We next relate the rank loss of $E_a$ to the vanishing of
$\Delta_{\mathcal{B}}$ there. The matrix $M_{\mathcal{B}}(a)$ represents
the first $s$ coordinates of $E_a|_W$, so its kernel has dimension at least
$s-\dim E_a(W)$. On the other hand, the dimension of the kernel is at most the multiplicity of $a$ as a
root of $\Delta_{\mathcal{B}}$: by multiplying by an invertible matrix with $\FF_q$-coefficients, we can obtain $\dim \text{ker}(E_a|_W)$ columns of $M_{\mathcal{B}}$ divisible by $X-a$.
Since each entry has degree less than $k$, the determinant has degree at
most $s(k-1)$. Summing over all evaluation points gives
\[
    \sum_{a\in\FF_q}\bigl(s-\dim E_a(W)\bigr)
    \leq s(k-1).
\]
This bound is non-trivial only whenever $q \geq k$; consequently, the field size must grow.

\paragraph{Passing to function fields.}
To keep $q$ fixed while increasing the number of evaluation points, we
pass to function fields. This extension of the Parvaresh--Vardy framework was previously studied by Guruswami \cite{Gur05} in the context of list decoding. Consequently, his analysis relies on polynomial interpolation, and is similar to the later GUV analysis presented above. The idea is to replace low-degree polynomials,
which are evaluated over elements in $\FF_q$, with a space of "low-degree" functions over an algebraic curve,
which can be evaluated over the curve points. 
In our rank condensers setting, by taking curves with many evaluation points relative to their "complexity" -- measured by their \emph{genus} -- we are able to increase the number of evaluation points, and hence the number of linear transformations, without increasing the field size.
We provide the formal definitions in \cref{subsec:function-field-prelim}.

Let $1 \leq r \leq k$, and let $F/\FF_q$ be a function field of genus $g \leq k$ with $N$ rational places (i.e., curve points), plus an additional rational place $P_\infty$.
We identify $\FF_q^k$ with $V=\mathcal{L}(dP_\infty)$,
where $d=k+g-1$ -- that is, the space of functions of "degree" at most $d$. By the Riemann--Roch theorem, it has dimension $k$.
Analogous to the choice of an irreducible polynomial of degree $k$, choose a \emph{place} $Q$ of degree $m = d+1$.\footnote{More precisely, $m$ is chosen to be a prime number between $d$ and $2d$, which allows us to find such a place explicitly (see \cref{prop:explicit-prime-degree-place}). For the sake of simplicity, we omit this technicality here.}
Since $m>d$, the residue map $\pi_Q$ (i.e., reducing a function "$\text{mod}\ Q$") is injective on $V$, so
each function $f \in V$ is uniquely represented by an element of
$\FF_{q^m}$.

To define the correlated functions $\tau_i(f)$, we lift powers of
$\pi_Q(f)$ back to functions in $F$. These powers need not correspond
to functions in $V$, so we allow the lifts to have slightly larger
``degree''. Following \cite{Gur05}, we use the Riemann--Roch theorem
to choose these lifts linearly over $\FF_q$, with ``degree'' at most
$\widetilde d=m+2g-1<6k$. Taking $\tau_0(f)=f$, the resulting maps satisfy
\[
    \tau_i(f)=f^{q^i} \mod Q
    \qquad (0\leq i\leq r-1).
\]
Thus, the functions $\tau_i(f)$ satisfy the same relations
modulo $Q$ as in the polynomial construction.

Let $\mathcal{P}$ denote the $N$ rational evaluation places. Evaluating
the functions $\tau_i(f)$ at each $P\in\mathcal{P}$ gives maps
$E_P\colon V\to\FF_q^r$ as before. The same determinant argument now
applies. For an $s$-dimensional subspace $W\leq V$, where $1\leq s\leq r$,
the PV-Moore matrix of a basis $\mathcal{B}$ of $W$, defined similarly to \cref{eq:PV-Moore}, reduces to a
nonsingular Moore matrix modulo $Q$, so its determinant
$\Delta_{\mathcal{B}}$ is nonzero. This determinant has ``degree'' at most
$d+(s-1)\widetilde d$, and hence at most that many rational zeros,
counted with multiplicity. As before, the rank loss of $E_P$ is bounded
by the multiplicity of $P$ as a zero of $\Delta_{\mathcal{B}}$. Thus,
\[
    \sum_{P\in\mathcal{P}}\bigl(s-\dim E_P(W)\bigr)
    \leq d+(s-1)\widetilde d<6sk,
\]
see \cref{prop:strong-subspace-design}.

To make the construction explicit, we must find $Q$ efficiently; we accomplish this for the Garcia--Stichtenoth tower in \cref{prop:explicit-prime-degree-place}.

This tower has many rational
places relative to its genus. Evaluating it in the above construction gives, for every square $q$ and infinitely many $k$, an explicit
collection of $n=N$ linear maps satisfying
\[
    n\geq(\sqrt q-1)k,
    \qquad
    \frac{1}{n}\sum_{P\in\mathcal{P}}\bigl(s-\dim E_P(W)\bigr)
    <\frac{6s}{\sqrt q-1};
\]
see \cref{thm:square-field-subspace-design}. These maps construct an $(r, 6rk)$ subspace design of size $n$, or dually, an $(r, \varepsilon, \delta)$-lossy rank condenser provided that $6/(\sqrt{q}-1) \leq \varepsilon\delta$.

Two restrictions remain. The tower used in this instantiation requires
$q$ to be a square, and the estimate above is non-trivial only whenever $q > 49$. We next modify the
construction to overcome both restrictions.

\paragraph{Two evaluations over arbitrary finite fields.}
To relax the requirement $6/(\sqrt q-1)\leq\varepsilon\delta$, we seek
a better ratio $A/n$ between the number of evaluation maps and the bound
on their total rank loss. For this, we want more evaluation places.
Places of higher degree can be much more abundant than rational places,
so we consider evaluating the functions $\tau_i(f)$ at places $P$ of
degree $\ell\geq2$. The resulting maps are
$E_P\colon V\to\FF_{q^\ell}^r$. However, representing these outputs
over $\FF_q$ gives output dimension $t=\ell r$, whereas we want
$t=r$. To retain this output dimension, we adapt an idea of Ta-Shma
and Umans \cite{TSU12} and apply a second polynomial evaluation:
we view each output coordinate as a low-degree polynomial over $\FF_q$
and evaluate it at a point over $\FF_q$. We now describe the choice
of places and polynomials for these two steps.

For the first evaluation, we seek for a tower of function fields over $\FF_q$ with
many degree-$\ell$ places. We obtain it from the Garcia--Stichtenoth
tower over $\FF_{q^\ell}$; its defining equations have coefficients
in $\FF_q$, and hence can be considered directly over $\FF_q$.
We refer to this tower as the \emph{$\ell$-descended GS tower}. In \cref{cor:degree-d-places-gs-tower}, we show that this tower
has many degree-$\ell$ places (relative to its genus). More precisely, it contains $N$
degree-$\ell$ places satisfying
\[
    N\geq\frac{q^{\ell/2}-1}{2\ell}\,g.
\]
Moreover, we show that the required function spaces, evaluation places and the large-degree place $Q$ can also be computed efficiently. By restricting $\ell$ to be an even integer, we guarantee that $q^\ell$ is a perfect square regardless of $q$. This bypasses the previous requirement that the base field itself be of square order.

For the second evaluation, we identify $\FF_{q^\ell}$ with
$\FF_q^\ell$, and view each vector $b=(b_0,\ldots,b_{\ell-1})$ as the affine-linear
polynomial
\[
    g_b(\boldsymbol{X})
    =b_0+\sum_{j=1}^{\ell-1}b_jX_j.
\]
Here $\boldsymbol{X}=(X_1,\ldots,X_{\ell-1})$. For each
$\boldsymbol{a}\in\FF_q^{\ell-1}$, we evaluate the polynomials
representing all coordinates of $E_P(f)$ at the same point
$\boldsymbol{a}$. This gives an $\FF_q$-linear map
$E_{P,\boldsymbol{a}}\colon V\to\FF_q^r$, and hence a collection of
$n=Nq^{\ell-1}$ maps. Using linear multivariate polynomials
rather than univariate polynomials of degree $ < \ell$,
 minimizes the degree of each polynomial, although at the cost of additional evaluation points.
In our parameter regime of small $q$ and $\ell$, an overhead of $q^{\ell - 1}$ evaluation points
is sufficiently small, making this trade-off attractive.

In \cref{thm:arbitrary-field-strong-subspace-designs}, we show that,
for every prime power $q$, every even $\ell\geq2$
and every $1\leq r\leq k$, this construction gives an explicit $(r, A)$-subspace design,
where
\[
    \begin{gathered}
    A=\left(\frac{1}{q} +O(q^{-\ell/2}) \right)
      rn,
    \\
    n\leq q^{O(\ell)} \cdot k.
    \end{gathered}
\]
Dually, we obtain an $(r,\varepsilon,\delta)$-lossy rank condenser of size $n$,
provided that $\frac{1}{q}+O(q^{-\ell/2})\leq\varepsilon\delta$.
The bound on the average relative rank loss approaches $1/q$ as
$\ell$ grows.

For any $0 < \gamma < 1$, we can therefore choose an even integer $\ell$ such that the average rank loss is less than $(1/q+\gamma)r$, yielding a collection of $n = O(q^3k/\gamma^5)$ maps. These maps constitute an $(r, \varepsilon, \delta)$-lossy rank condenser provided that $1/q+\gamma \leq \varepsilon\delta$. Consequently, the construction applies to every finite field, including $\FF_2$; see \cref{cor:all-field-lossy-rank-condensers}.

\section{Preliminaries}

Throughout, $\FF$ denotes an arbitrary field, and $\FF_q$ denotes the finite field
with $q$ elements. We write $\log=\log_2$ unless a different base is specified.
We view a matrix $E\in\FF^{t\times k}$ as
the linear map $x\mapsto Ex$ from $\FF^k$ to $\FF^t$, and use
$\operatorname{rank}(E)$ and $\ker(E)$ for its rank and kernel,
respectively.

We will use the following form of the reduction of Forbes and Guruswami \cite[Proposition 6.7]{FG15} from strong subspace designs to lossy rank condensers. A similar strengthening was also used recently in \cite{DG26}. We include the proof here for completeness.

\begin{proposition}[Strong subspace designs as lossy rank condensers]
\label{prop:strong-design-to-lossy-condenser}
Let $\FF$ be a field, let $1\leq r\leq t\leq k$, let $n\geq1$ and $A\geq0$
be integers, and let $\varepsilon,\delta\in[0,1]$. Let
$\mathcal{H}=(H_i)_{i\in[n]}$ be an $(r,A)$-strong subspace design in
$\FF^k$ of size $n$ and codimension $t$. For every $i\in[n]$, let
$E_i\in\FF^{t\times k}$ satisfy $\ker(E_i)=H_i$, and set
$\mathcal{E}=(E_i)_{i\in[n]}$. If
\[
    A<\bigl(\lfloor\delta n\rfloor+1\bigr)
       \bigl(\lfloor\varepsilon r\rfloor+1\bigr),
\]
then $\mathcal{E}$ is an $(r,\varepsilon,\delta)$-lossy rank condenser.
\end{proposition}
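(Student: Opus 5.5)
We argue by contradiction via a double-counting of rank losses. Fix a rank-$r$ matrix $M\in\FF^{k\times r}$ and let $W=\operatorname{im}(M)\leq\FF^k$; since $M$ has full column rank, $W$ is an $r$-dimensional subspace.

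The first step is to express each rank in terms of an intersection with the design. Because $M$ is injective, $\operatorname{rank}(E_iM)=\dim E_i(W)$. Because $\ker(E_i)=H_i$, rank--nullity applied to $E_i|_W$ gives
\[
    \operatorname{rank}(E_iM)=r-\dim(H_i\cap W).
\]
Hence $\operatorname{rank}(E_iM)<(1-\varepsilon)r$ holds if and only if $\dim(H_i\cap W)>\varepsilon r$. Since the left-hand side is an integer, this is equivalent to $\dim(H_i\cap W)\geq\lfloor\varepsilon r\rfloor+1$.

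Next, let $B=\{i\in[n]:\operatorname{rank}(E_iM)<(1-\varepsilon)r\}$, and suppose for contradiction that $|B|>\delta n$. Since $|B|$ is an integer, this gives $|B|\geq\lfloor\delta n\rfloor+1$. The dimensions $\dim(H_i\cap W)$ are nonnegative, so we may drop the indices outside $B$, and the strong design property then yields
\[
    A\;\geq\;\sum_{i=1}^n\dim(H_i\cap W)\;\geq\;\sum_{i\in B}\dim(H_i\cap W)\;\geq\;\bigl(\lfloor\delta n\rfloor+1\bigr)\bigl(\lfloor\varepsilon r\rfloor+1\bigr).
\]
This contradicts the hypothesis on $A$. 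Therefore $|B|\leq\delta n$, which is exactly the lossy rank condenser condition.

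The argument has no real obstacle. The only points needing care are the two integrality roundings, which produce the two floors in the hypothesis, and checking that $W$ has dimension exactly $r$ so that the design bound applies. The codimension $t$ enters only through the requirement that each $E_i$ have $t$ rows.
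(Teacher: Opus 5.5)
Your proof is correct and follows essentially the same route as the paper: write $\operatorname{rank}(E_iM)=r-\dim(W\cap H_i)$, observe that each bad index contributes at least $\lfloor\varepsilon r\rfloor+1$ to the design sum, and compare the resulting lower bound with $A$. The only difference is cosmetic: you argue by contradiction from $|B|\geq\lfloor\delta n\rfloor+1$, whereas the paper derives $|B|\leq\lfloor\delta n\rfloor$ directly from $A\geq|B|(\lfloor\varepsilon r\rfloor+1)$.
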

\begin{proof}
Fix a rank-$r$ matrix $M\in\FF^{k\times r}$, and let
$W\leq\FF^k$ be its column space. The columns of $M$ form a basis of $W$.
Thus, for every $i\in[n]$, the matrix $E_iM$ represents the restriction of
$E_i$ to $W$. We have
\[
    \operatorname{rank}(E_iM)
    =r-\dim_{\FF}(W\cap H_i).
\]
Let
\[
    \mathcal{B}
    =\{i\in[n]:\operatorname{rank}(E_iM)<(1-\varepsilon)r\}
\]
be the set of bad indices. For every $i\in\mathcal{B}$, we have
\[
    \dim_{\FF}(W\cap H_i)\geq\lfloor\varepsilon r\rfloor+1.
\]
The strong subspace design property therefore implies
\[
    A
    \geq\sum_{i=1}^n\dim_{\FF}(W\cap H_i)
    \geq|\mathcal{B}|\bigl(\lfloor\varepsilon r\rfloor+1\bigr).
\]
The assumed bound on $A$ now gives
\[
    |\mathcal{B}|\leq\lfloor\delta n\rfloor.
\]
Thus, at most $\delta n$ indices are bad, as needed.
\end{proof}
\begin{remark}
    For $\delta=1-1/n$, the hypothesis in the proposition becomes
    \[
        n>\frac{A}{\lfloor\varepsilon r\rfloor+1},
    \]
    which recovers \cite[Proposition 6.7]{FG15}.
\end{remark}

For $0\neq f\in\FF[\boldsymbol{X}]$ and
$\boldsymbol{a}\in\FF^t$, let
$\operatorname{mult}_{\boldsymbol{a}}(f)$ be the least total degree of a
nonzero monomial in $f(\boldsymbol{a}+\boldsymbol{Y})$.
We make use of a variant of the Schwartz--Zippel lemma due to Dvir et al.~\cite[Lemma~8]{DZK09}, which bounds the number of zeroes with multiplicity.

\begin{lemma}[Multiplicity Schwartz--Zippel bound]
\label{lem:multiplicity-schwartz-zippel}
Let $t\geq1$, let $S$ be a finite subset of a field $\FF$, and let
$0\neq f\in\FF[X_1,\ldots,X_t]$ have total degree at most $D$. Then
\[
    \sum_{\boldsymbol{a}\in S^t}
    \operatorname{mult}_{\boldsymbol{a}}(f)
    \leq D|S|^{t-1}.
\]
\end{lemma}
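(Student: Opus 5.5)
The plan is to induct on the number of variables $t$, with the univariate case as the base. For $t=1$ the claim is the classical fact that a nonzero univariate polynomial of degree at most $D$ has at most $D$ roots counted with multiplicity. Indeed, $\operatorname{mult}_a(f)$ is the largest $m$ with $(X-a)^m\mid f$. The factors $(X-a)^{\operatorname{mult}_a(f)}$ for distinct $a$ are pairwise coprime, so their product divides $f$. Comparing degrees gives $\sum_{a\in S}\operatorname{mult}_a(f)\leq D$.

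For the inductive step I will single out the last variable. Write
\[
    f=\sum_{j=0}^{D'} f_j(X_1,\ldots,X_{t-1})X_t^j
\]
with $f_{D'}\neq0$. Then $\deg f_{D'}\leq D-D'$. Fix $\boldsymbol{b}\in S^{t-1}$ and set $m_{\boldsymbol{b}}=\operatorname{mult}_{\boldsymbol{b}}(f_{D'})$. This is zero when $f_{D'}(\boldsymbol{b})\neq0$. The key local estimate to prove is
\[
    \operatorname{mult}_{(\boldsymbol{b},c)}(f)\leq m_{\boldsymbol{b}}+\operatorname{mult}_c\bigl(g_{\boldsymbol{b}}\bigr),
\]
where $g_{\boldsymbol{b}}(Y_t)$ is a suitable nonzero univariate polynomial of degree at most $D'$ attached to $\boldsymbol{b}$.

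The construction of $g_{\boldsymbol{b}}$ goes as follows. Shift the first $t-1$ variables to $\boldsymbol{b}+\boldsymbol{Y}'$. Among the monomials $\boldsymbol{Y}'^{\alpha}$ of total degree exactly $m_{\boldsymbol{b}}$ appearing in $f_{D'}(\boldsymbol{b}+\boldsymbol{Y}')$, pick one. Let $g_{\boldsymbol{b}}(X_t)$ be the coefficient of $\boldsymbol{Y}'^{\alpha}$ in $f(\boldsymbol{b}+\boldsymbol{Y}',X_t)$, viewed as a polynomial in $X_t$. Its $X_t^{D'}$ coefficient is nonzero, so $g_{\boldsymbol{b}}\neq0$ and $\deg g_{\boldsymbol{b}}=D'$.

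Now shift $X_t$ to $c+Y_t$. The polynomial $g_{\boldsymbol{b}}(c+Y_t)$ has a nonzero term $Y_t^{e}$ with $e=\operatorname{mult}_c(g_{\boldsymbol{b}})$. Hence $f(\boldsymbol{b}+\boldsymbol{Y}',c+Y_t)$ contains the nonzero monomial $\boldsymbol{Y}'^{\alpha}Y_t^{e}$, which has total degree $m_{\boldsymbol{b}}+e$. Since the multiplicity is the minimal degree of a nonzero monomial, this gives the local estimate.

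Summing the local estimate over $c\in S$ and applying the base case to $g_{\boldsymbol{b}}$ gives
\[
    \sum_{c}\operatorname{mult}_{(\boldsymbol{b},c)}(f)\leq |S|\,m_{\boldsymbol{b}}+D'.
\]
Summing next over $\boldsymbol{b}\in S^{t-1}$ and applying the induction hypothesis to $f_{D'}$, which has degree at most $D-D'$, gives
\[
    \sum_{\boldsymbol{a}\in S^t}\operatorname{mult}_{\boldsymbol{a}}(f)\leq |S|(D-D')|S|^{t-2}+D'|S|^{t-1}=D|S|^{t-1}.
\]
If instead $f_{D'}$ is constant, the induction hypothesis is replaced by the trivial bound $m_{\boldsymbol{b}}=0$, and the same computation applies. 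This completes the induction.

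The main obstacle is the local estimate, and specifically making $g_{\boldsymbol{b}}$ nonzero even at points where $f_{D'}(\boldsymbol{b})=0$. Choosing the lowest-degree monomial of the shifted leading coefficient $f_{D'}$ handles this, because it pins down a nonzero $X_t^{D'}$ coefficient in $g_{\boldsymbol{b}}$. I would double-check that no cancellation can occur. This holds because the monomial $\boldsymbol{Y}'^{\alpha}Y_t^{e}$ arises from exactly one term of the expansion of $f(\boldsymbol{b}+\boldsymbol{Y}',c+Y_t)$, namely the coefficient of $\boldsymbol{Y}'^{\alpha}$ followed by the $Y_t^e$ coefficient. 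The rest is bookkeeping.
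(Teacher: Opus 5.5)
Your proof is correct. The paper gives no proof of its own and only cites \cite[Lemma~8]{DZK09}, and your argument is essentially that proof. It inducts via the leading $X_t$-coefficient $f_{D'}$, and your $g_{\boldsymbol{b}}$ is exactly the Hasse derivative $f^{(\alpha,0)}(\boldsymbol{b},X_t)$; your direct coefficient-extraction argument replaces the two Hasse-derivative facts used there to obtain the local estimate $\operatorname{mult}_{(\boldsymbol{b},c)}(f)\leq m_{\boldsymbol{b}}+\operatorname{mult}_c(g_{\boldsymbol{b}})$.
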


We will use the existence of prime numbers in large enough intervals, as guaranteed by Bertrand's postulate.
\begin{fact}[Bertrand's postulate]\label{fact:Bertrand}
For every integer $n>1$, there exists a prime $p$ such that $n<p<2n$.
\end{fact}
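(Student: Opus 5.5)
The statement immediately above is \cref{fact:Bertrand}. I would prove it by Erd\H{o}s's classical argument, which compares the size of the central binomial coefficient $\binom{2n}{n}$ with its prime factorization. Since $2n$ is composite for $n>1$, it suffices to find a prime $p$ with $n<p\leq 2n$.

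The first step is the lower bound
\[
\binom{2n}{n}\geq \frac{4^n}{2n+1}.
\]
This holds because $\binom{2n}{n}$ is the largest of the $2n+1$ terms in the binomial expansion of $(1+1)^{2n}$.

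The second step is a set of upper bounds on the prime factors of $\binom{2n}{n}$, using Legendre's formula $v_p\binom{2n}{n}=\sum_{j\ge1}\bigl(\lfloor 2n/p^j\rfloor-2\lfloor n/p^j\rfloor\bigr)$.
\begin{enumerate}
\item Each summand is $0$ or $1$, and it vanishes once $p^j>2n$. Hence $p^{v_p}\leq 2n$ for every prime $p$.
\item If $p>\sqrt{2n}$, then only the $j=1$ term can be nonzero, so $v_p\leq 1$.
\item If $2n/3<p\leq n$ and $n\geq 3$, then $p$ and $2p$ are the only multiples of $p$ up to $2n$, so $v_p=0$.
\end{enumerate}
I would also use the primorial bound $\prod_{p\leq x}p\leq 4^{x}$. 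This follows by induction on odd $x=2m+1$: all primes in $(m+1,2m+1]$ divide $\binom{2m+1}{m}$, and $\binom{2m+1}{m}\leq 4^m$.

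Now assume, for contradiction, that there is no prime in $(n,2n]$. Then all prime factors of $\binom{2n}{n}$ are at most $2n/3$. The primes $p\leq\sqrt{2n}$ contribute at most $(2n)^{\sqrt{2n}}$ in total. The remaining primes each appear to the first power, so they contribute at most $4^{2n/3}$. Combining with the lower bound gives
\[
\frac{4^n}{2n+1}\leq (2n)^{\sqrt{2n}}\,4^{2n/3},
\qquad\text{that is,}\qquad
4^{n/3}\leq (2n+1)(2n)^{\sqrt{2n}}.
\]
Taking logarithms, this inequality fails for all $n$ beyond a modest threshold (about $n\geq 468$ with the crude bounds above). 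Verifying this threshold carefully is the main obstacle: it is the only delicate calculation, and I would handle it by comparing $\frac{n}{3}\ln 4$ against $(\sqrt{2n}+1)\ln(2n+1)$ at the threshold and checking monotonicity of the difference.

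For the remaining small values $2\leq n<468$, I would use the chain of primes $3,5,7,13,23,43,83,163,317,631$. Each prime in the chain is less than twice its predecessor. So for every $n$ in this range, some prime of the chain lies strictly between $n$ and $2n$, which finishes the proof.
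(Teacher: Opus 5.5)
Your proof is correct. It takes a different route from the paper only in the trivial sense that the paper gives no proof: it records Bertrand's postulate as a classical fact (Chebyshev's theorem) and uses it once, to guarantee that a prime $m$ with $d<m<2d$ exists, which the algorithm then finds by brute-force enumeration.

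You instead give Erd\H{o}s's elementary argument, and the details hold up:
\begin{itemize}
\item The reduction to $(n,2n]$ is valid because $2n$ is composite.
\item In the step for $2n/3<p\leq n$ you implicitly need $p^2>2n$. This fails for $n=2$, $p=2$, where $v_2\binom{4}{2}=1$. Your hypothesis $n\geq 3$ forces $p\geq 3$, which is exactly what makes the step go through.
\item The primorial bound $\prod_{p\leq x}p\leq 4^x$ via $\binom{2m+1}{m}\leq 4^m$ is fine.
\item Your threshold is accurate. At $n=468$ one has $\tfrac{n}{3}\ln 4\approx 216.26$ against $\ln(2n+1)+\sqrt{2n}\ln(2n)\approx 216.16$. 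At $n=467$ the inequality still holds ($215.80$ versus $215.86$), so $468$ is exactly the crossover for these bounds.
\item Monotonicity holds. The derivative of the right-hand side is $\frac{\ln(2n)+2}{\sqrt{2n}}+\frac{2}{2n+1}$. It is about $0.29$ at $n=468$ and decreasing thereafter, so it stays below $\tfrac{\ln 4}{3}\approx 0.46$.
\item The chain of primes is correct and covers every $3\leq n\leq 630$, so you have slack above the threshold. The only case the predecessor argument misses is $n=2$, where $3\in(2,4)$ directly; alternatively, prepend $2$ to the chain.
\end{itemize}

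As for what each approach buys: citing the fact is all the paper needs, since only the existence of $m$ matters. Your argument makes that ingredient self-contained, at the cost of a page of standard number theory unrelated to the paper's contributions.
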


\subsection{Function fields}\label{subsec:function-field-prelim}

We assume familiarity with the basic theory of algebraic function fields, such as places, valuations, extensions, etc. A detailed exposition can be
found in \cite{Stich}. We briefly recall the relevant definitions and results.

\paragraph{Algebraic function fields, valuations, and places.}
Let $x$ be an indeterminate over $\FF_q$. The rational function field
$\FF_q(x)$ consists of the rational functions in $x$ with coefficients in
$\FF_q$. An algebraic function field $F/\FF_q$ is a finite algebraic
extension of $\FF_q(x)$, and its elements are called functions. Throughout,
we assume that $\FF_q$ is the full constant field of $F$; that is, every
element of $F$ that is algebraic over $\FF_q$ belongs to $\FF_q$.

A discrete valuation on $F$ is a map $v\colon F^\times\to\ZZ$ satisfying
\[
    v(fg)=v(f)+v(g)
    \qquad\text{and}\qquad
    v(f+g)\geq\min\{v(f),v(g)\},
\]
and we extend it to $F$ by setting $v(0)=\infty$. We consider valuations that
are trivial on $\FF_q^\times$ and normalized, meaning that
$v(F^\times)=\ZZ$. Associated with such a valuation are its valuation ring,
maximal ideal, and residue field,
\[
    \mathcal{O}_v=\{f\in F:v(f)\geq0\},\qquad
    \mathfrak{m}_v=\{f\in F:v(f)>0\},\qquad
    \kappa(v)=\mathcal{O}_v/\mathfrak{m}_v,
\]
respectively. A place $P$ of $F$ is the maximal ideal associated with a
normalized discrete valuation. We write $v_P$, $\mathcal{O}_P$, and
$\mathfrak{m}_P$ for the corresponding valuation, valuation ring, and
maximal ideal, and we write $\kappa(P)=\mathcal{O}_P/\mathfrak{m}_P$ for the
residue field.

The degree of a place $P$ is
\[
    \deg P=[\kappa(P):\FF_q].
\]
A place of degree one is called rational. For a rational place $P$, we have
$\kappa(P)=\FF_q$, and for $f\in\mathcal{O}_P$, we denote the residue class
of $f$ by $f(P)\in\FF_q$. We write $N(F)$ for the number of rational places
of $F$.

\paragraph{Divisors, principal divisors, and pole divisors.}
A divisor of $F$ is a finite formal sum
\[
    D=\sum_P n_PP,\qquad n_P\in\ZZ.
\]
Its support and degree are
\[
    \operatorname{supp}(D)=\{P:n_P\neq0\},
    \qquad
    \deg D=\sum_P n_P\deg P.
\]
For divisors $D_1$ and $D_2$, we write $D_1\geq D_2$ if the coefficient of
every place in $D_1$ is at least the corresponding coefficient in $D_2$.
A divisor $D$ is effective if $D\geq0$.

Every nonzero function $f\in F^\times$ determines a divisor
\[
    (f)=\sum_Pv_P(f)P.
\]
The divisor $(f)$ is called the principal divisor of $f$.
Its zero divisor and pole divisor are, respectively,
\[
    (f)_0=\sum_{P:v_P(f)>0}v_P(f)P,
    \qquad
    (f)_\infty=\sum_{P:v_P(f)<0}-v_P(f)P.
\]
Thus, $(f)=(f)_0-(f)_\infty$. We will repeatedly use the following standard
fact about principal divisors.

\begin{lemma}[Degree of a principal divisor]\label{lem:principal-divisor-degree}
For every $f\in F^\times$,
\[
    \deg((f))=0
    \qquad\text{and hence}\qquad
    \deg((f)_0)=\deg((f)_\infty).
\]
\end{lemma}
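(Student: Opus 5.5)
The plan is to reduce the claim to the statement that the rational function field $\FF_q(x)$ satisfies it, using the norm map. Throughout, $F/\FF_q$ is an algebraic function field with full constant field $\FF_q$, as fixed in the preliminaries.

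\emph{Step 1: the constant case.} If $f\in\FF_q^\times$, then $v_P(f)=0$ for every place $P$, because valuations are trivial on $\FF_q^\times$. Hence $(f)=0$ and there is nothing to prove.

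\emph{Step 2: setup for a non-constant $f$.} Since $\FF_q$ is the full constant field, a non-constant $f$ is transcendental over $\FF_q$. So $F/\FF_q(f)$ is a finite extension of some degree $n=[F:\FF_q(f)]$. In $\FF_q(f)$, let $P_0$ be the zero of $f$ and $P_\infty$ its pole; both are rational places.

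\emph{Step 3: the degree formula.} The places of $F$ where $f$ has a zero are exactly the places lying over $P_0$. For such a place $P$ we have $v_P(f)=e(P|P_0)$, the ramification index. Therefore
\[
\deg (f)_0=\sum_{P\mid P_0} e(P|P_0)\deg P=\sum_{P\mid P_0} e(P|P_0)f(P|P_0),
\]
where the second equality uses $\deg P_0=1$. By the fundamental equality $\sum e\,f=n$, this sum equals $n$. The same argument applied to $P_\infty$ (equivalently, to $1/f$) gives $\deg (f)_\infty=n$.

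\emph{Conclusion.} Since $(f)=(f)_0-(f)_\infty$, we get $\deg (f)=n-n=0$, and hence $\deg (f)_0=\deg (f)_\infty$.

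The only substantive input is the fundamental equality $\sum_{P\mid P_0} e(P|P_0)f(P|P_0)=[F:\FF_q(f)]$, which is standard (see \cite{Stich}, Theorem 1.4.11). I expect no real obstacle beyond citing it correctly. Since this is a textbook fact, the paper may simply cite it rather than prove it.
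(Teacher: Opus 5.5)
Your argument is correct and is the standard proof. The paper gives no proof of this lemma; it only recalls it as a textbook fact from \cite{Stich}, so there is no competing route to compare.

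One caution concerns your citation. Theorem~1.4.11 of \cite{Stich} is not the fundamental equality. It is exactly the statement $\deg(x)_0=\deg(x)_\infty=[F:K(x)]$ for $x\in F\setminus K$, i.e.\ the lemma itself in the sharper form you derive, so citing it directly already makes your Steps~2--3 unnecessary. The fundamental equality $\sum_{P\mid P_0}e(P|P_0)f(P|P_0)=[F:\FF_q(f)]$ appears only in Chapter~3 of that book, and, as I recall, its proof there relies on Theorem~1.4.11. If so, deducing the lemma from it with that source as justification is circular.

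To make your reduction self-contained, justify the fundamental equality independently. For example, the integral closure $B$ of $\FF_q[f]$ in $F$ is a finitely generated torsion-free, hence free, $\FF_q[f]$-module of rank $n$, so $\dim_{\FF_q}B/fB=n$. Applying the Chinese remainder theorem to $fB=\prod_i\mathfrak{P}_i^{e_i}$ then gives $\dim_{\FF_q}B/fB=\sum_i e_if_i$.
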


\paragraph{Riemann--Roch spaces and the genus.}
For a divisor $D$, its Riemann--Roch space is
\[
    \mathcal{L}(D)
    =\{f\in F^\times:(f)+D\geq0\}\cup\{0\}.
\]
This is a finite-dimensional $\FF_q$-vector space, and we write
$\ell(D)=\dim_{\FF_q}\mathcal{L}(D)$. Thus, $\ell(D)$ measures the number
of linearly independent functions whose poles are bounded by $D$. A direct corollary of \cref{lem:principal-divisor-degree} is the following lemma.

\begin{lemma}[Negative-degree divisors]\label{lem:negative-degree-divisor}
If $D$ is a divisor of $F$ with $\deg D<0$, then
\[
    \mathcal{L}(D)=\{0\}.
\]
\end{lemma}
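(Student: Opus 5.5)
The plan is to derive the statement from \cref{lem:principal-divisor-degree} by contradiction. Suppose $\deg D<0$ and $\mathcal{L}(D)$ contains a nonzero $f$. By the definition of the Riemann--Roch space, the divisor $E:=(f)+D$ is effective, i.e.\ $E\geq0$.

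First, effective divisors have nonnegative degree. Every coefficient $n_P$ of $E$ is nonnegative, and every place has degree $\deg P=[\kappa(P):\FF_q]\geq1$, so $\deg E=\sum_P n_P\deg P\geq0$.

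Second, the degree map $D\mapsto\deg D$ is additive on divisors, because it is a linear functional on the free abelian group of divisors. Combined with \cref{lem:principal-divisor-degree}, which gives $\deg((f))=0$, this yields
\[
    0\leq\deg E=\deg((f))+\deg D=\deg D<0,
\]
a contradiction. Hence no nonzero $f$ exists, and $\mathcal{L}(D)=\{0\}$.

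I do not expect any step to be a real obstacle. The only facts used beyond the cited lemma are additivity of the degree and nonnegativity of effective divisors, and both follow immediately from the definitions above.
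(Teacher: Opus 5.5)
Your proof is correct and is exactly the argument the paper intends: the paper states this lemma as a direct corollary of \cref{lem:principal-divisor-degree} without writing out a proof. Your contradiction $0\leq\deg((f)+D)=\deg D<0$, using additivity of degree and nonnegativity of effective divisors, fills in precisely those omitted details.
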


\begin{theorem}[Riemann--Roch]
For every function field $F/\FF_q$, there exists a unique nonnegative integer
$g=g(F)$ such that
\[
    \ell(D)\geq\deg D-g+1
\]
for every divisor $D$ of $F$, with equality whenever $\deg D\geq2g-1$.
The integer $g$ is called the genus of $F$.
\end{theorem}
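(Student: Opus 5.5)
This is the classical Riemann--Roch theorem, and in the paper one would simply cite \cite{Stich}. If a self-contained argument were wanted, I would follow Weil's approach via repartitions (adeles) and Weil differentials. The plan has four steps: establish Riemann's inequality, define $g$ from it, identify the defect with a space of differentials, and use a canonical divisor of degree $2g-2$ together with \cref{lem:negative-degree-divisor} to get equality for $\deg D\geq 2g-1$.

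\emph{Step 1: basic facts.} First I would show that for $D_1\leq D_2$,
\[
    \ell(D_2)-\ell(D_1)\leq \deg D_2-\deg D_1 .
\]
This follows by induction on adding a single place $P$: the map $f\mapsto (t^{n}f)(P)$, where $t$ is a uniformizer and $n$ is the coefficient of $P$ in $D_1+P$, is $\FF_q$-linear with kernel $\mathcal{L}(D_1)$. It lands in $\kappa(P)$, which has dimension $\deg P$. Together with \cref{lem:negative-degree-divisor}, this shows every $\ell(D)$ is finite and at most $\deg D+1$ when $\deg D\geq 0$. The quantity $\deg D-\ell(D)$ is invariant under linear equivalence $D\mapsto D+(z)$, by \cref{lem:principal-divisor-degree}.

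\emph{Step 2: Riemann's inequality.} This is the main obstacle. I must show that $\deg D-\ell(D)+1$ is bounded above uniformly over all divisors $D$. I would fix a transcendental $x\in F$ with $[F:\FF_q(x)]=n$, and set $B=(x)_\infty$, which has degree $n$. Choose a basis $u_1,\ldots,u_n$ of $F/\FF_q(x)$ integral over $\FF_q[x]$, and let $C\geq 0$ dominate all the $(u_i)_\infty$. The functions $x^iu_j$ with $0\leq i\leq \ell$ are linearly independent and lie in $\mathcal{L}(\ell B+C)$. This gives
\[
    \ell(\ell B+C)\geq(\ell+1)n ,
\]
hence $\ell(\ell B)\geq \deg(\ell B)+1-\gamma$ for a constant $\gamma$ independent of $\ell$. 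Finally, an arbitrary $D$ can be moved, up to linear equivalence and up to passing to a smaller divisor, to some $D'\leq \ell B$. Step~1 then transfers the bound to $D$. I would then define $g:=\max_D\{\deg D-\ell(D)+1\}$. This is nonnegative, as $D=0$ shows, and it immediately gives $\ell(D)\geq\deg D-g+1$. Uniqueness of $g$ is automatic from the equality clause once that is proved.

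\emph{Step 3: identifying the defect.} Let $\mathcal{A}$ be the repartitions and $\mathcal{A}(D)$ those with $v_P(\alpha_P)\geq -v_P(D)$ for all $P$. I would prove
\[
    \ell(D)=\deg D+1-g+\dim_{\FF_q}\bigl(\mathcal{A}/(\mathcal{A}(D)+F)\bigr).
\]
This uses Step~1 applied to $D\leq D'$ together with a chosen $D$ attaining the maximum in the definition of $g$. Duality then identifies the last term with $\ell(W-D)$, where $W$ is a canonical divisor, the divisor of a nonzero Weil differential. This uses that the Weil differentials form a one-dimensional $F$-vector space. Plugging in $D=0$ and $D=W$ gives $\ell(W)=g$ and $\deg W=2g-2$.

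\emph{Step 4: equality.} If $\deg D\geq 2g-1$, then $\deg(W-D)<0$, so $\ell(W-D)=0$ by \cref{lem:negative-degree-divisor}. The formula from Step~3 then gives $\ell(D)=\deg D-g+1$.
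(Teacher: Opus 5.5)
The paper does not prove this statement; it quotes it as background from \cite{Stich}. Your outline (Riemann's inequality from counting the $x^iu_j$ in $\mathcal{L}(cB+C)$, defining $g$ as the maximal defect, the repartition formula for $\ell(D)$, duality with Weil differentials, and then $\ell(W)=g$ and $\deg W=2g-2$ giving equality for $\deg D\geq 2g-1$) is the standard proof in that reference, and it is correct.

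One step should be phrased more carefully: the reduction to multiples of $B=(x)_\infty$. Since $\deg D-\ell(D)$ is nondecreasing in $D$, you must show that $D$ is linearly equivalent to some $D'\leq cB$. For example, take $D'=D-(z)$ with $z\in\FF_q[x]$ vanishing to sufficiently high order at the places of $\operatorname{supp}D$ that are not poles of $x$. In other words, you may enlarge $D$ but never pass to a smaller divisor.
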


\paragraph{Bounds on the number of rational places.}
The Hasse--Weil theorem bounds the number of rational places in terms of the
genus and the size of the constant field.

\begin{theorem}[Hasse--Weil]\label{thm:hasse-weil}
Let $F/\FF_q$ be a function field of genus $g$. Then
\[
    \bigl|N(F)-(q+1)\bigr|\leq2g\sqrt q.
\]
\end{theorem}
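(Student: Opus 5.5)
The statement just above is the classical Hasse--Weil bound. I would follow the standard route through the zeta function combined with Bombieri's version of Stepanov's method, as in \cite{Stich}. The plan has two parts. The first is a one-sided upper bound, valid only when the constant field is a large square; it uses the Riemann--Roch theorem and the fact (\cref{lem:principal-divisor-degree}) that a nonzero function has as many zeros as poles. The second is a formal argument with the $L$-polynomial that upgrades this one-sided, asymptotic bound to the exact two-sided bound $|N(F)-(q+1)|\leq 2g\sqrt q$.

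\emph{Step 1 (zeta function).} Let $N_r$ be the number of rational places of the constant field extension $F\FF_{q^r}$. Define $Z(t)=\exp\bigl(\sum_{r\geq1}N_rt^r/r\bigr)$. Using Riemann--Roch to count effective divisors in each divisor class, I would show that $Z(t)=L(t)/\bigl((1-t)(1-qt)\bigr)$, where $L\in\ZZ[t]$ has degree $2g$ and $L(0)=1$. I would also prove the functional equation $L(t)=q^gt^{2g}L(1/(qt))$. Factoring $L(t)=\prod_{i=1}^{2g}(1-\alpha_it)$ then gives
\[
    N_r=q^r+1-\sum_{i=1}^{2g}\alpha_i^r .
\]
By the functional equation, the map $\alpha\mapsto q/\alpha$ permutes the roots. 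Hence the whole theorem reduces to showing $|\alpha_i|\leq\sqrt q$ for every $i$. Indeed, this bound together with the pairing forces $|\alpha_i|=\sqrt q$, and the case $r=1$ then gives the claim.

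\emph{Step 2 (Bombieri--Stepanov upper bound).} Assume $q$ is a square, say $q=q_0^2$, and that $q>(g+1)^4$. I would prove $N(F)\leq q+(2g+1)q_0+1$. Fix a rational place $P_0$ and integers $m$ and $\ell$, roughly $m=q_0-1$ and $\ell\approx q_0+2g$. Consider the $\FF_q$-linear map
\[
    \sum_i x_i^{q_0}y_i\;\mapsto\;\sum_i x_i^{q_0}y_i^{q},
\]
where the $x_i$ range over a basis of $\mathcal{L}(\ell P_0)$ and the $y_i$ range over $\mathcal{L}(mP_0)$. A Riemann--Roch dimension count shows that the space of such combinations has dimension larger than the possible targets. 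This yields a nonzero auxiliary function $h$ of the form $\sum x_i^{q_0}y_i$ that vanishes to order at least $q_0$ at every rational place $P\neq P_0$. The key observation is that for rational $P$ we have $f^{q}(P)=f(P)$, so the Frobenius twist of $h$ agrees with $h$ at all rational places.

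The main obstacle is proving that $h\neq0$: one must show that the $x_i^{q_0}$ remain linearly independent over $\mathcal{L}(mP_0)$. I would handle this by choosing the $x_i$ with distinct pole orders at $P_0$ that are pairwise incongruent modulo $q_0$. Counting degrees then finishes the step. The pole order of $h$ at $P_0$ is at most $\ell q_0+mq$, while its zeros contribute at least $q_0(N-1)$. Comparing these via \cref{lem:principal-divisor-degree} and optimizing $m$ and $\ell$ gives the stated bound.

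\emph{Step 3 (conclusion).} Apply Step 2 to $F\FF_{q^r}$ for all even $r$ large enough that $q^r>(g+1)^4$. This gives $-\sum_i\alpha_i^r\leq(2g+1)q^{r/2}$ for these $r$. Since $L$ has real coefficients, the roots come in conjugate pairs. A standard generating-function argument then yields $|\alpha_i|\leq\sqrt q$: compare $\sum_r\bigl(\sum_i\alpha_i^r\bigr)t^r$ against $\sum_i\alpha_it/(1-\alpha_it)$ and use its radius of convergence, after passing from even $r$ to all $r$ by replacing $q$ with $q^2$. Together with Step 1, this establishes the theorem.
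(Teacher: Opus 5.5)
The paper does not prove this statement. It quotes the Hasse--Weil bound as a classical fact, with \cite{Stich} as the function-field reference, and your outline follows that textbook route. However, Step~3 has a genuine gap: an upper bound on $N_r$ alone cannot give $|\alpha_i|\leq\sqrt q$. The radius-of-convergence argument needs the two-sided estimate $|\sum_i\alpha_i^r|=O(q^{r/2})$. Pairing conjugates does not help, because the dangerous $\alpha_i$ are real.

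For instance, $L(t)=1-11t+25t^2$ has integer coefficients and satisfies $L(t)=25t^2L(1/(25t))$. Its $\alpha_i$ are $(11\pm\sqrt{21})/2$, both positive. So $-\sum_i\alpha_i^r<0$ satisfies every inequality you derive, yet one $\alpha_i$ exceeds $5=\sqrt{25}$. A large positive $\alpha_i$ means too \emph{few} points, which no upper bound on $N_r$ can detect.

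You therefore need $N_r\geq q^r+1-O(q^{r/2})$ as a separate step. In Bombieri's treatment (see \cite{Stich}) it is deduced from the upper bound by a Galois-covering argument:
\begin{enumerate}
\item Take a separating $x\in F$ and a Galois closure $F'$ of $F/\FF_q(x)$ with group $G$. A constant field extension is harmless here, since the $\alpha_i^r$ are the reciprocal roots for $F\FF_{q^r}$.
\item Apply Step~2 to the twists of $F'$ by the elements of $G$.
\item Their point counts sum to $|G|(q^r+1)$ up to a bounded ramification term, because $\FF_q(x)$ has exactly $q^r+1$ rational places over $\FF_{q^r}$. Hence each count is at least $q^r+1-O(q^{r/2})$.
\item Averaging over $\mathrm{Gal}(F'/F)$ transfers this lower bound to $F$.
\end{enumerate}

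Step~2 also needs repair.
\begin{itemize}
\item \emph{Vanishing order.} If $h=\sum_i x_i^{q_0}y_i$ lies in the kernel of your map, then $h=\sum_i x_i^{q_0}(y_i-y_i^q)$. This gives only $v_P(h)\geq1$, not $q_0$.
\item \emph{Pole order.} The pole order of $h$ is at most $q_0\ell+m$, not $\ell q_0+mq$. Taken literally, your comparison $q_0(N-1)\leq\ell q_0+mq$ gives $N\leq q+2g+1$. This is false: a supersingular elliptic curve over $\FF_{25}$ has $36$ rational points, and $25>(g+1)^4$.
\item \emph{Non-vanishing of $h$.} The $x_i$ cannot have pole orders pairwise incongruent modulo $q_0$, since $\dim\mathcal{L}(\ell P_0)=\ell+1-g>q_0$. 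What actually gives $h\neq0$ is that the $x_i$ have distinct pole orders and $m<q_0$. Then the valuations $q_0v_{P_0}(x_i)+v_{P_0}(y_i)$ of the nonzero terms are pairwise distinct.
\item \emph{Dimension count.} The relevant target is not $\mathcal{L}((q_0\ell+qm)P_0)$. Since $\sum_i x_i^{q_0}y_i^q=(\sum_i x_iy_i^{q_0})^{q_0}$, the image lies in the $q_0$-th powers of $\mathcal{L}((\ell+q_0m)P_0)$, a space of dimension $q+g+1$. Beating this is exactly where $q>(g+1)^4$ enters.
\end{itemize}
With order-one vanishing and pole order at most $q_0\ell+m$, you get the correct bound $N\leq q+(2g+1)q_0$.
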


We will also use a consequence of the Hasse--Weil theorem for places of
higher degree.

\begin{corollary}[Number of degree-$r$ places]\label{cor:number-degree-r-places}
Let $F/\FF_q$ be a function field of genus $g$, let $r\geq1$, and let
$B_r=B_r(F)$ denote the number of places of $F$ of degree $r$. Then,
\begin{equation*}
    \bigl|B_r-\frac{q^r}{r}\bigr|
    < (2+7g)\cdot \frac{q^{r/2}}{r}.
\end{equation*}
\end{corollary}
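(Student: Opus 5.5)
The plan is to derive this from the Hasse--Weil bound (\cref{thm:hasse-weil}) applied to the constant field extensions $F_s=F\cdot\FF_{q^s}$, followed by Möbius inversion. Since $\FF_q$ is the full constant field of $F$, each $F_s/\FF_{q^s}$ is again a function field with full constant field $\FF_{q^s}$ and the same genus $g$. Let $N_s=N(F_s)$. We will use the standard relation
\[
    N_s=\sum_{d\mid s} d\,B_d.
\]
It holds because a degree-$d$ place of $F$ splits in $F_s$ into exactly $d$ rational places when $d\mid s$, and it contributes no rational places otherwise.

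The first step is to write $N_s=q^s+1-\varepsilon_s$ with $|\varepsilon_s|\leq 2g q^{s/2}$, which is exactly Hasse--Weil over $\FF_{q^s}$. Möbius inversion then gives
\[
    rB_r=\sum_{d\mid r}\mu(r/d)N_d=\sum_{d\mid r}\mu(r/d)\,q^d+\sum_{d\mid r}\mu(r/d)\,(1-\varepsilon_d).
\]
The main term $q^r$ comes from $d=r$. We split the remaining contributions into three pieces and bound each.

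\begin{enumerate}
\item The Hasse--Weil error at $d=r$ is at most $2g\,q^{r/2}$ in absolute value.
\item The constants $\sum_{d\mid r}\mu(r/d)\cdot 1$ total at most $1$ in absolute value. In fact this sum is $0$ for $r>1$ and $1$ for $r=1$.
\item For proper divisors $d\leq r/2$, we get $\sum_{d\mid r,\,d<r}\bigl(q^d+2g q^{d/2}\bigr)\leq \sum_{d\leq r/2}q^d+2g\sum_{d\leq r/2}q^{d/2}$. The first sum is at most $2q^{r/2}$, using $q\geq2$ and a geometric series. For the second sum we use $q^{d/2}\leq q^{r/4}$ and count terms, or bound it by a geometric series in $q^{1/2}$. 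We need it to be at most roughly $5g\,q^{r/2}$ in total; since $q^{1/2}\geq\sqrt2$, the geometric series $\sum_{d\leq r/2}q^{d/2}\leq q^{r/4}\cdot\frac{\sqrt2}{\sqrt2-1}\leq 3.5\,q^{r/2}$ suffices.
\end{enumerate}

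Adding the pieces gives $|rB_r-q^r|\leq 1+2q^{r/2}+2gq^{r/2}+7g\,q^{r/2}$, which is too crude. Tightening with $q^{r/4}\leq q^{r/2}$ fits within $(2+7g)q^{r/2}$ after sharper bookkeeping. Concretely, the total $g$-coefficient is $2+2\cdot 3.5/q^{r/4}\cdot\ldots\leq 7$, and the constant term is $1+\sum_{d\leq r/2}q^d\leq 2q^{r/2}$ when $r\geq2$. The case $r=1$ is Hasse--Weil directly: $|B_1-q|\leq 1+2g\sqrt q<(2+7g)\sqrt q$.

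The only delicate point is this constant-chasing. We need to make sure the geometric sums over proper divisors are absorbed, with strict inequality, into $2q^{r/2}$ and the leftover $5g\,q^{r/2}$ respectively. I expect the main obstacle to be verifying these numerically for the worst case $q=2$. If they are tight there, I would handle small $r$ separately.
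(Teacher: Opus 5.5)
Your proposal is correct and is the standard argument behind this corollary, which the paper states without proof as a consequence of Hasse--Weil (cf.\ \cite{Stich}): M\"obius inversion of $N_s=\sum_{d\mid s}dB_d$, combined with Hasse--Weil over each $\FF_{q^d}$. The bookkeeping you left loose does close: for $r\geq2$ the M\"obius constant vanishes and $\sum_{d\leq r/2}q^d<2q^{r/2}$; since $q^{r/4}\leq q^{r/2}/\sqrt q$, the error terms satisfy $2g\sum_{d\mid r}q^{d/2}\leq\frac{2\sqrt q}{\sqrt q-1}\,g\,q^{r/2}\leq(4+2\sqrt2)\,g\,q^{r/2}\leq 7g\,q^{r/2}$ (it is $q^{r/4}\leq q^{r/2}/\sqrt q$, not $q^{r/4}\leq q^{r/2}$, that does the tightening).
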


Thus, for a fixed field $\FF_q$, the number of rational places grows at most
linearly with the genus. To describe the best possible asymptotic ratio, we
use Ihara's constant.

\begin{definition}[Ihara's constant]
For every prime power $q$, let
\[
    N_q(g)=\max\{N(F):F/\FF_q\text{ is a function field of genus }g\}
\]
and define
\[
    A(q)=\limsup_{g\to\infty}\frac{N_q(g)}{g}.
\]
The quantity $A(q)$ is called \emph{Ihara's constant}.
\end{definition}

\begin{theorem}[Drinfeld--Vladut bound]\label{thm:drinfeld-vladut}
For every prime power $q$,
\[
    A(q)\leq\sqrt q-1.
\]
\end{theorem}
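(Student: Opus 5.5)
The plan is the standard explicit-formula argument of Drinfeld--Vladut, in the form later popularized by Serre. We use the Weil conjectures for curves, the same input underlying \cref{thm:hasse-weil} and \cref{cor:number-degree-r-places}. Concretely, for a function field $F/\FF_q$ of genus $g$, the $L$-polynomial has $2g$ reciprocal roots $\omega_1,\ldots,\omega_{2g}$ with $|\omega_i|=\sqrt q$, closed under complex conjugation. For every $r\geq1$ the number of rational places of the constant field extension $F\FF_{q^r}$ is
\[
    S_r=\sum_{d\mid r} dB_d = q^r+1-\sum_{i=1}^{2g}\omega_i^r .
\]
We write $\omega_j=\sqrt q\,e^{\mathrm{i}\theta_j}$ and $\omega_{j+g}=\overline{\omega_j}$ for $1\leq j\leq g$, so that $\sum_i \omega_i^r=2q^{r/2}\sum_{j=1}^g\cos(r\theta_j)$. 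Since $B_1=N(F)$ appears in the sum over $d\mid r$ and all $B_d\geq0$, we have $S_r\geq N(F)$ for every $r$.

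Next, we fix a finite sequence of nonnegative weights $c_1,\ldots,c_{m-1}$ such that the trigonometric polynomial $f(\theta)=1+2\sum_r c_r\cos(r\theta)$ is nonnegative. We take the Fej\'er kernel, $c_r=1-r/m$, for which $f\geq0$ is classical. We multiply $S_r\geq N(F)$ by $c_rq^{-r/2}\geq0$ and sum over $r$:
\[
    N(F)\sum_{r<m}c_rq^{-r/2}\le \sum_{r<m}c_r\bigl(q^{r/2}+q^{-r/2}\bigr)-\sum_{j=1}^g 2\sum_{r<m}c_r\cos(r\theta_j).
\]
The last double sum equals $\sum_j\bigl(f(\theta_j)-1\bigr)\geq -g$. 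Therefore
\[
    N(F)\sum_{r<m}c_rq^{-r/2}\le g+\sum_{r<m}c_r\bigl(q^{r/2}+q^{-r/2}\bigr).
\]

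Finally, we fix $m$, divide by $g$, and let $g\to\infty$ along a sequence realizing $N_q(g)$. The second term on the right is independent of $g$, so
\[
    A(q)\le\Bigl(\sum_{r=1}^{m-1}\bigl(1-\tfrac rm\bigr)q^{-r/2}\Bigr)^{-1}.
\]
Letting $m\to\infty$, the sum increases to $\sum_{r\ge1}q^{-r/2}=1/(\sqrt q-1)$, which gives $A(q)\le\sqrt q-1$.

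The main obstacle is the choice of weights. They must be nonnegative, so that the inequality $S_r\geq N$ survives multiplication. Their cosine series must be bounded below by $-1/2$, so that the unknown angles $\theta_j$ contribute at most $g$. And their $q^{-r/2}$-weighted sum must approach the full geometric series. The Fej\'er kernel meets all three requirements. The remaining steps are routine, given the Riemann hypothesis for curves in the exact form above; we take that as the same standard input used to state \cref{thm:hasse-weil}.
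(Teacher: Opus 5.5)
Your proof is correct. The weights $c_r=1-r/m$ are nonnegative, so the inequality $S_r\geq N(F)$ survives the weighted sum. The Fej\'er kernel gives $2\sum_{r<m}c_r\cos(r\theta)\geq -1$, so each Frobenius angle costs at most $1$. Finally, $\sum_{r<m}(1-r/m)q^{-r/2}$ increases monotonically to $1/(\sqrt q-1)$.

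The paper does not prove this statement. It is quoted in the preliminaries as a standard fact and is used only as context, to explain that the Garcia--Stichtenoth tower is optimal, so there is no in-paper argument to compare with. What you wrote is the standard explicit-formula proof, essentially the one in \cite{Stich}. There the nonnegativity is written as $\sum_{i=1}^{2g}\bigl|\sum_{r=0}^{m-1}(\omega_i/\sqrt q)^r\bigr|^2\geq 0$, which is $m$ times your Fej\'er inequality summed over the roots.

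Its only external input is the exact trace formula $S_r=q^r+1-\sum_i\omega_i^r$ with $|\omega_i|=\sqrt q$, i.e.\ rationality of the zeta function together with the Riemann hypothesis for curves. This is more than the Hasse--Weil inequality as the paper records it, and you flag this dependence correctly.

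One cosmetic point: the pairing $\omega_{j+g}=\overline{\omega_j}$ silently uses that $\pm\sqrt q$ occur with even multiplicity. This is true, since $\prod_i\omega_i=q^g$, but you can avoid it. Since $S_r\in\ZZ$, the sum $\sum_i\omega_i^r$ is real and equals $q^{r/2}\sum_{i=1}^{2g}\cos(r\theta_i)$. Then $\sum_{i=1}^{2g}\tfrac12\bigl(f(\theta_i)-1\bigr)\geq -g$ gives the same bound.
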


\subsubsection{The Garcia--Stichtenoth tower}
\label{subsec:gs-tower-background}

When $q$ is a square, the Drinfeld--Vladut bound is attained by the Garcia-Stichtenoth tower of function fields.
A detailed exposition of this tower can be found in \cite{Garcia1995}. We briefly recall the relevant definitions and results.

\begin{definition}[Garcia--Stichtenoth tower]\label{def:gs-tower}
Let $\FF_q$ be a finite field of size $q=\ell^2$, where $\ell$ is a prime power.
The \emph{Garcia--Stichtenoth tower} is the sequence of function fields
\[
F_1 \subseteq F_2 \subseteq \cdots
\]
defined recursively by
\[
F_1=\FF_q(x_1),
\]
and for every $i\ge1$,
\[
F_{i+1}=F_i(x_{i+1}),
\]
where the new variable $x_{i+1}$ satisfies the equation
\[
x_{i+1}^{\ell}+x_{i+1}
=\frac{x_i^{\ell}}{x_i^{\ell-1}+1}.
\]
\end{definition}

The following theorem records the parameters of this tower that we use in
the construction.

\begin{theorem}[The Garcia--Stichtenoth tower]
\label{thm:gs-tower-parameters}
Let $(F_i)_{i\geq1}$ be the Garcia--Stichtenoth tower over $\FF_q$, where
$q=\ell^2$. For $i\geq1$, let $N_i=N(F_i)$ and $g_i=g(F_i)$. Then
\[
    [F_i:F_1]=\ell^{i-1},
    \qquad
    N_i\geq\ell^i(\ell-1)+1,
\]
and
\[
g_i=
\begin{cases}
\left(\ell^{i/2}-1\right)^2, & \text{if } i \text{ is even},\\[1ex]
\left(\ell^{(i+1)/2}-1\right)
\left(\ell^{(i-1)/2}-1\right), & \text{if } i \text{ is odd}.
\end{cases}
\]
In particular, $g_i\leq\ell^i$ and
\[
    \limsup_{i\to\infty}\frac{N_i}{g_i}
    \geq\ell-1=\sqrt q-1.
\]
\end{theorem}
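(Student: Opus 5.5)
The statement above collects the standard parameters of the Garcia--Stichtenoth tower. The plan is to verify the three claims (the degree $[F_i:F_1]$, the lower bound on $N_i$, and the genus) in that order, citing \cite{Garcia1995} for the genus computation. The two ``in particular'' assertions then follow by elementary arithmetic.

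\emph{Degree.} Each step $F_{i+1}=F_i(x_{i+1})$ is an Artin--Schreier type extension, since $z\mapsto z^\ell+z$ is additive. I will show by induction that the pole $P_\infty$ of $x_1$ is totally ramified in every step. Suppose $v(x_i)=-e$ with $\ell\nmid e$ (already true at $i=1$, where $e=1$). Then the right-hand side $x_i^\ell/(x_i^{\ell-1}+1)$ has valuation $-e$. The left-hand side has valuation $\ell\, v(x_{i+1})$. Comparing these forces ramification index $\ell$, so $[F_{i+1}:F_i]=\ell$. Iterating gives $[F_i:F_1]=\ell^{i-1}$, and the unique place of $F_i$ above $P_\infty$ is rational.

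\emph{Rational places.} Consider a rational place $P$ of $F_i$ with $x_i(P)=\beta\in\FF_q$ and $\beta^\ell+\beta\neq0$. Set
\[
\gamma=\frac{\beta^\ell}{\beta^{\ell-1}+1}=\frac{\beta^{\ell+1}}{\beta^\ell+\beta},
\]
which is a ratio of the norm and the trace of $\beta$ from $\FF_{\ell^2}$ to $\FF_\ell$, so $\gamma\in\FF_\ell^\times$. The map $z\mapsto z^\ell+z$ is exactly this trace, which is surjective onto $\FF_\ell$ with fibres of size $\ell$. Hence $z^\ell+z=\gamma$ has $\ell$ distinct roots in $\FF_q$. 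By Kummer's theorem (the reduced equation is separable), $P$ splits completely in $F_{i+1}$. Each root $z$ again satisfies $z^\ell+z=\gamma\neq0$, so the induction continues. Starting from the $q-\ell=\ell^2-\ell$ values $\alpha\in\FF_q$ with $\alpha^\ell+\alpha\neq0$, we obtain $\ell^{i-1}(\ell^2-\ell)=\ell^i(\ell-1)$ rational places. Adding the place above $P_\infty$ gives $N_i\geq \ell^i(\ell-1)+1$.

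\emph{Genus (main obstacle).} The genus requires the Hurwitz genus formula for wildly ramified extensions. This means computing the different exponents at the ramified places, namely those above $x_1=\infty$ and above the zeros of $x_i^{\ell-1}+1$, recursively along the tower. This bookkeeping is the genuinely hard part. I would not redo it, but cite \cite{Garcia1995} for the closed formula.

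\emph{Consequences.} Given the formula, $g_i\leq\ell^i$ is immediate. For even $i$, $(\ell^{i/2}-1)^2<\ell^i$. For odd $i$, the product is less than $\ell^{(i+1)/2}\ell^{(i-1)/2}=\ell^i$. Therefore
\[
N_i/g_i\geq \ell^i(\ell-1)/\ell^i=\ell-1,
\]
and since $g_i\to\infty$ this gives $\limsup_i N_i/g_i\geq\ell-1=\sqrt q-1$.
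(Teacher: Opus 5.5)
Your proposal is correct and matches how the paper handles this theorem. The paper states it without proof, citing \cite{Garcia1995}, and in the paragraphs that follow it describes exactly your two arguments: the totally ramified rational place $P_\infty$, and the complete splitting of the $q-\ell$ places $x_1=\alpha_1$ with $\alpha_1^\ell+\alpha_1\neq0$. The same norm/trace computation reappears in the appendix.

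One small repair is needed in the degree step. The induction hypothesis should be $\gcd(e,\ell)=1$ rather than $\ell\nmid e$, since for $\ell=4$, $e=2$ the deduction $e'=\ell$ fails. This does not affect the proof, because your computation gives $w(x_{i+1})=-e$, so $e=1$ at every level.
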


We record two further features of the tower that will be useful later.

\paragraph{Rational places.}
At every level, let $P_\infty$ denote the unique place above the pole of $x_1$
in $F_1$. This place is rational and totally ramified throughout the tower.
For each $\alpha_1\in\FF_q$ satisfying $\alpha_1^\ell+\alpha_1\neq0$, the zero of
$x_1-\alpha_1$ in $F_1$ splits completely in $F_i$. Indeed, these places
may be represented by the tuples
$
    (\alpha_1,\ldots,\alpha_i)\in\FF_q^i
$
such that $\alpha_1^\ell+\alpha_1\neq0$ and
\[
    \alpha_{j+1}^\ell+\alpha_{j+1}
    =\frac{\alpha_j^\ell}{\alpha_j^{\ell-1}+1}
    \qquad (1\leq j<i).
\]
There are $q-\ell$ possible values of $\alpha_1$ and $\ell$ choices for each
subsequent coordinate. We denote the set of these places by $\mathcal{R}_i$. Together with $P_\infty$, these places give the lower bound on $N_i$ in
\cref{thm:gs-tower-parameters}.

\paragraph{Riemann--Roch spaces.}
For $i\geq2$, define
\[
    h_j=x_j^{\ell-1}+1,
    \qquad
    \pi_j=h_1h_2\cdots h_j
    \qquad (1\leq j<i).
\]
Every function in $F_i$ whose poles are supported at $P_\infty$ admits an
expression of the form
\[
    x_1^a
    \left(
        \sum_{\nu_1=0}^{(i-2)\ell+1}
        \sum_{\nu_2=0}^{\ell-1}\cdots
        \sum_{\nu_i=0}^{\ell-1}
        c_{\boldsymbol{\nu}}h_1
        \frac{x_1^{\nu_1}x_2^{\nu_2}\cdots x_i^{\nu_i}}
             {\pi_2\pi_3\cdots\pi_{i-1}}
    \right),
\]
where $a\geq0$, $\boldsymbol{\nu}=(\nu_1,\ldots,\nu_i)$, and
$c_{\boldsymbol{\nu}}\in\FF_q$.
Moreover, the algorithm of Shum et al.~\cite{ShumEtAl2001} computes an
explicit $\FF_q$-basis of $\mathcal{L}(\mu P_\infty)$ in this form in
time $\text{poly}(\log q, \mu)$.

For every $P\in\mathcal{R}_i$, all the values $h_j(P)$ and $\pi_j(P)$ are
nonzero. Hence the basis functions, and therefore every element of
$\mathcal{L}(\mu P_\infty)$ given in this basis, can be evaluated at the
places in $\mathcal{R}_i$ by substituting their coordinate tuples. In
particular, these evaluations can be computed in time $\text{poly}(\log q, \mu, N)$.

\paragraph{Constant-field extensions and places of higher degree.}

Let $F/\FF_q$ be a function field with full constant field $\FF_q$. For an
integer $a\geq1$, we write
\[
    F^{(a)}=F\cdot\FF_{q^a}
\]
for its constant-field extension to $\FF_{q^a}$. The following lemma relates
the rational places of $F^{(a)}$ to the places of $F$.

\begin{lemma}[Places under a constant-field extension]
\label{lem:places-under-constant-extension}
The function field $F^{(a)}/\FF_{q^a}$ has the same genus as $F/\FF_q$.
Moreover, a place of $F$ of degree $d$ splits in $F^{(a)}$ into
$\gcd(d,a)$ places, each of degree $d/\gcd(d,a)$. Consequently,
\begin{equation}\label{eq:rational-places-constant-extension}
    N\bigl(F^{(a)}\bigr)=\sum_{d\mid a}dB_d(F).
\end{equation}
\end{lemma}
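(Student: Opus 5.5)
The statement has three parts: invariance of the genus, the splitting rule for a single place, and the counting identity \eqref{eq:rational-places-constant-extension}. The plan is to take the first two from the standard theory of constant-field extensions (e.g.\ \cite[Thm.~3.6.3]{Stich}) and then deduce the count by summing. Throughout, the key input is that $F^{(a)}/F$ is unramified and Galois with cyclic group generated by the Frobenius $\sigma$ of $\FF_{q^a}/\FF_q$, which is possible because $\FF_q$ is the full constant field of $F$.

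\emph{Genus.} Let $D$ be a divisor of $F$ and $\operatorname{Con}(D)$ its conorm in $F^{(a)}$. There are two facts to use:
\begin{itemize}
\item $\deg \operatorname{Con}(D)=\deg D$, because the extension is unramified and degrees are measured over the respective constant fields;
\item $\mathcal{L}(\operatorname{Con}(D))=\mathcal{L}(D)\otimes_{\FF_q}\FF_{q^a}$, which gives $\dim_{\FF_{q^a}}=\ell(D)$.
\end{itemize}
For the second fact, inclusion $\supseteq$ is clear. For $\subseteq$, take a basis $1=\omega_1,\dots,\omega_a$ of $\FF_{q^a}/\FF_q$, which is also a basis of $F^{(a)}/F$. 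A function $z=\sum z_j\omega_j$ with $z\in\mathcal{L}(\operatorname{Con}(D))$ can be recovered from its Galois conjugates $\sigma^i z$ by inverting the matrix $(\sigma^i\omega_j)$, which has constant entries. Each $\sigma^i z$ lies in $\mathcal{L}(\operatorname{Con}(D))$, since the conorm divisor is $\sigma$-invariant. Hence every $z_j\in\mathcal{L}(D)$.

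Now apply Riemann--Roch to some $D$ with $\deg D$ large in both fields. Then
\[
\deg D-g+1=\ell(D)=\deg\operatorname{Con}(D)-g'+1,
\]
and therefore $g'=g$.

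\emph{Splitting.} Let $P$ be a place of degree $d$. Its residue field is $\FF_{q^d}$. The places above $P$ correspond to the factors of $\FF_{q^d}\otimes_{\FF_q}\FF_{q^a}\cong \FF_{q^{\operatorname{lcm}(d,a)}}^{\gcd(d,a)}$. This uses unramifiedness together with the fact that $\mathcal{O}_P\otimes\FF_{q^a}$ is the integral closure of $\mathcal{O}_P$, so its semilocal residue ring is exactly this tensor product. So $P$ splits into $\gcd(d,a)$ places, each with residue field $\FF_{q^{\operatorname{lcm}(d,a)}}$. Relative to $\FF_{q^a}$, each such place has degree $\operatorname{lcm}(d,a)/a=d/\gcd(d,a)$.

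\emph{Counting.} A place above $P$ is rational in $F^{(a)}$ exactly when $d/\gcd(d,a)=1$, that is, when $d\mid a$. In that case $P$ contributes $\gcd(d,a)=d$ rational places. Every place of $F^{(a)}$ lies over a unique place of $F$, so summing over all places of $F$ gives $N(F^{(a)})=\sum_{d\mid a} dB_d(F)$.

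The main obstacle is justifying $\mathcal{L}(\operatorname{Con}(D))=\mathcal{L}(D)\otimes\FF_{q^a}$ and the residue-ring description at $P$, since both rely on the extension being unramified. Since the paper cites \cite{Stich}, I would simply invoke those results there.
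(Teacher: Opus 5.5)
Your proposal is correct and matches the paper. The paper gives no separate proof of this lemma and simply records it as standard background from \cite{Stich} on constant-field extensions. Your sketch reproduces exactly that standard argument: the conorm together with Riemann--Roch for the genus, the residue algebra $\FF_{q^d}\otimes_{\FF_q}\FF_{q^a}\cong\FF_{q^{\operatorname{lcm}(d,a)}}^{\gcd(d,a)}$ for the splitting, and summation over the places of $F$ for the count.
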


In \cref{sec:lossless-rank-extractors} we apply the lemma to a Garcia--Stichtenoth tower over constant field extensions to obtain a family of function fields with many places of a given degree.

We finally record a local linear-algebra estimate used in the constructions.

\begin{lemma}[Kernel dimension and determinant valuation]
\label{lem:kernel-dimension-determinant-valuation}
Let $P$ be a place of a function field $F/\FF_q$, and let
$A\in\mathcal{O}_P^{s\times s}$  have nonzero determinant. Denote the
reduction of $A$ modulo $P$ coordinate-wise by
$A(P)\in\kappa(P)^{s\times s}$. Then
\[
    \dim_{\kappa(P)}\ker A(P)
    \leq v_P(\det A).
\]
\end{lemma}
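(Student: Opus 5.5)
We would prove the statement of \cref{lem:kernel-dimension-determinant-valuation} by the column-operation argument already sketched in the introduction. The idea is to reduce to the case where the kernel of $A(P)$ is spanned by standard basis vectors, and then pull a factor of a uniformizer out of each corresponding column.

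Let $\kappa=\kappa(P)$ and $e=\dim_{\kappa}\ker A(P)$. Choose a basis $\bar u_1,\ldots,\bar u_e$ of $\ker A(P)$ and extend it to a basis $\bar u_1,\ldots,\bar u_s$ of $\kappa^s$. Let $\bar U\in\kappa^{s\times s}$ be the invertible matrix with these columns. We lift $\bar U$ entrywise to a matrix $U\in\mathcal{O}_P^{s\times s}$, which is possible since $\mathcal{O}_P\to\kappa$ is surjective. Then $\det U$ reduces to $\det\bar U\neq0$, so $v_P(\det U)=0$.

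Set $A'=AU\in\mathcal{O}_P^{s\times s}$. We have $\det A'=\det A\cdot\det U$, hence $v_P(\det A')=v_P(\det A)$, and this is finite since $\det A\neq0$. Moreover $A'(P)=A(P)\bar U$. So for $j\le e$, the $j$-th column of $A'(P)$ is $A(P)\bar u_j=0$, which means every entry of the first $e$ columns of $A'$ lies in $\mathfrak m_P$.

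Fix a uniformizer $t$ at $P$, so $v_P(t)=1$. Every entry of the first $e$ columns has valuation at least $1$, so we can write each such column as $t$ times a column with entries in $\mathcal{O}_P$. By multilinearity of the determinant in columns, $\det A'=t^e\det A''$ with $A''\in\mathcal{O}_P^{s\times s}$. Since $\det A''\in\mathcal{O}_P$, we have $v_P(\det A'')\ge0$, and therefore
\[
    v_P(\det A)=v_P(\det A')=e+v_P(\det A'')\geq e,
\]
as required.

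Every step here is elementary. The only points needing care are that the lift $U$ stays invertible over $\mathcal{O}_P$, which holds because its determinant is a unit, and that reduction modulo $P$ commutes with matrix multiplication, which holds because reduction is a ring homomorphism $\mathcal{O}_P\to\kappa$.
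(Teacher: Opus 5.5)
Your proof is correct and follows essentially the same route as the paper. Both lift an invertible matrix over $\kappa(P)$ whose first columns span $\ker A(P)$, note that its determinant is a unit in $\mathcal{O}_P$, and observe that the first $e$ columns of $AU$ lie in $\mathfrak{m}_P$. The only cosmetic difference is at the end: you extract $t^e$ by column multilinearity, whereas the paper bounds the valuation of each Leibniz term from below by $u$.
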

\begin{proof}
Set $u=\dim_{\kappa(P)}\ker A(P)$. Choose
$\overline{C}\in\operatorname{GL}_s(\kappa(P))$ whose first $u$ columns
form a basis of $\ker A(P)$, and lift $\overline{C}$ to a matrix
$C\in\mathcal{O}_P^{s\times s}$.  The reduction of $\det C$ modulo $P$ is
$\det\overline{C}\neq0$, so $\det C$ is a unit in $\mathcal{O}_P$. Hence
\[
    v_P(\det(AC))=v_P(\det A).
\]
The first $u$ columns of $AC$ vanish modulo $P$. Every term in the Leibniz
formula for $\det(AC)$ therefore has valuation at least $u$, and hence
$v_P(\det A)\geq u$.
\end{proof}

\section{Algebraic-Geometric Parvaresh--Vardy Subspace Designs}\label{sec:agpv construction}
\subsection{Construction framework}\label{subsec:construction-framework}
Let $F/\FF_q$ be a function field with full constant field $\FF_q$,
genus $g$, and $N$ rational places. Fix integers $1\leq r\leq k$, and assume
that $g\leq k$. Set
\[
    d=k+g-1.
\]
Suppose that $F$ contains a rational place $P_\infty$ and a place
$Q\neq P_\infty$ of degree $m$, where $d<m<2d$.

Let $V=\mathcal{L}(dP_\infty)$. Since $d\geq2g-1$, the Riemann--Roch theorem
gives
\[
    \dim_{\FF_q}V=d+1-g=k.
\]
Write
$\kappa(Q)=\mathcal{O}_Q/\mathfrak{m}_Q\cong\FF_{q^m}$ for the residue field
of $Q$, and let $\pi_Q\colon\mathcal{O}_Q\to\kappa(Q)$ be the residue map. Its
restriction to $V$ is injective, since
\[
    \ker(\pi_Q|_V)=\mathcal{L}(dP_\infty-Q)=\{0\},
\]
by $\deg(dP_\infty-Q)=d-m<0$. This map is not necessarily surjective, so we now
lift it to a larger space.

Let $\widetilde d=m+2g-1$, and consider the space
\[
    \widetilde V=\mathcal{L}(\widetilde dP_\infty).
\]
The restriction of $\pi_Q$ to $\widetilde V$ \emph{is} surjective. Indeed, its kernel is
$\mathcal{L}(\widetilde dP_\infty-Q)$. Since
$\deg(\widetilde dP_\infty-Q)=2g-1$, the Riemann--Roch theorem gives
\begin{align*}
    &\dim_{\FF_q}\widetilde V = m + g,\\
    &\dim_{\FF_q}\ker(\pi_Q|_{\widetilde V})=g.
\end{align*}
Thus, the image has dimension $m=\deg Q=\dim_{\FF_q}\kappa(Q)$. Fix an
$\FF_q$-linear right inverse
\[
    I\colon\kappa(Q)\longrightarrow\widetilde V
    \qquad\text{such that}\qquad
    \pi_Q\circ I=\operatorname{id}_{\kappa(Q)}.
\]

For every $0\leq i\leq r-1$, define an
$\FF_q$-linear map $\tau_i\colon V\to\widetilde V$ by
\[
    \tau_0(f)=f,
    \qquad
    \tau_i(f)=I\bigl(\pi_Q(f)^{q^i}\bigr)
    \quad\text{for }i\geq1.
\]
These maps satisfy
\begin{equation}\label{eq:tau-mod-Q}
    \pi_Q(\tau_i(f))=\pi_Q(f)^{q^i},
\end{equation}
for every $f\in V$ and $0\leq i\leq r-1$.

\begin{definition}[AGPV subspace design]\label{def:agpv-sd}
Let $F/\FF_q$ be a function field as above. Let $\mathcal{P}$ be a set of rational places of $F$ different from
$P_\infty$. The functions in $\widetilde V$ are regular at every place in
$\mathcal{P}$. Thus, for each $P\in\mathcal{P}$, we may define the
$\FF_q$-linear map
\[
    E_P\colon V\longrightarrow\FF_q^r,
    \qquad
    E_P(f)=\bigl(\tau_0(f)(P),\ldots,\tau_{r-1}(f)(P)\bigr),
\]
and let
\[
    H_P=\ker E_P \leq V.
\]
Each $H_P$ has codimension at most $r$ in $V$. We define the indexed
collection
\[
    \mathcal{H}=\mathcal{H}(F, \mathcal{P}, P_\infty, Q, k, r) = (H_P)_{P\in\mathcal{P}},
\]
which has size $n = |\mathcal{P}|$.
\end{definition}

\subsection{The subspace-design property}

In the following part, we show that the collection $\mathcal{H}$ is a strong subspace design.

In our analysis, we will make use of a generalization of the Moore Matrix, presented by E.H. Moore~\cite{Moore1896}.
\begin{definition}[Moore Matrix]
    Let $F/\FF_q$ be a field extension. Let $a_1,\ldots, a_s$ be elements of $F$.
    Their associated \emph{Moore matrix} is defined as
    \[
        M(a_1,\ldots,a_s)
        =
        \begin{pmatrix}
            a_1 & a_2 & \cdots & a_s \\
            a_1^q & a_2^q & \cdots & a_s^q \\
            \vdots & \vdots & \ddots & \vdots \\
            a_1^{q^{s-1}} & a_2^{q^{s-1}} & \cdots & a_s^{q^{s-1}}
        \end{pmatrix}
        \in F^{s\times s}.
    \]
\end{definition}
A standard fact is that the elements $a_1,\ldots,a_s$ are linearly independent over $\FF_q$ if and only if $M(a_1,\ldots,a_s)$ is invertible over $F$.
The following definition generalizes the Moore matrix to the lifted Frobenius maps $\tau_i$.

\begin{definition}[PV-Moore matrix]
Let $1\leq s\leq r$, and let
$\mathcal{B}=(f_1,\ldots,f_s)$ be an ordered tuple of elements of $V$ as defined above. We
define its \emph{Parvaresh--Vardy Moore (PV-Moore) matrix} by
\[
    M_{\mathcal{B}}
    =
    \begin{pmatrix}
        \tau_0(f_1) & \cdots & \tau_0(f_s) \\
        \tau_1(f_1) & \cdots & \tau_1(f_s) \\
        \vdots      & \ddots & \vdots      \\
        \tau_{s-1}(f_1) & \cdots & \tau_{s-1}(f_s)
    \end{pmatrix}
    \in F^{s\times s}.
\]
\end{definition}

The usual Moore-matrix criterion remains valid for the lifted Frobenius maps. We include the proof here for completeness.

\begin{proposition}\label{prop:invertible-pv-moore}
Let $1\leq s\leq r$, and let
$\mathcal{B}=(f_1,\ldots,f_s)$ be an ordered tuple of elements of $V$. Then
$\mathcal{B}$ is linearly independent over $\FF_q$ if and only if
$M_{\mathcal{B}}$ is invertible over $F$.
\end{proposition}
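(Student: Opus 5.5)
Both directions are short; the real content is reducing modulo $Q$ and using the classical Moore criterion in $\kappa(Q)\cong\FF_{q^m}$.

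\emph{Dependent implies singular.} Suppose $\sum_j c_jf_j=0$ with $c_j\in\FF_q$ not all zero. Each $\tau_i$ is $\FF_q$-linear, so $\sum_j c_j\tau_i(f_j)=\tau_i\bigl(\sum_j c_jf_j\bigr)=0$ for every $i$. Thus the nonzero vector $(c_1,\ldots,c_s)^T$ lies in the kernel of $M_{\mathcal{B}}$, and $M_{\mathcal{B}}$ is not invertible over $F$.

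\emph{Independent implies invertible.} Assume $f_1,\ldots,f_s$ are linearly independent over $\FF_q$. All entries of $M_{\mathcal{B}}$ lie in $V\cup\widetilde V\subseteq\mathcal{L}(\widetilde dP_\infty)$. Since $Q\neq P_\infty$, these entries are regular at $Q$, so $M_{\mathcal{B}}\in\mathcal{O}_Q^{s\times s}$. Hence $\det M_{\mathcal{B}}\in\mathcal{O}_Q$, and reduction modulo $Q$ commutes with taking the determinant:
\[
    \pi_Q(\det M_{\mathcal{B}})=\det\bigl(\pi_Q(M_{\mathcal{B}})\bigr).
\]
By \cref{eq:tau-mod-Q}, the $(i,j)$ entry of $\pi_Q(M_{\mathcal{B}})$ is $\pi_Q(f_j)^{q^i}$. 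So $\pi_Q(M_{\mathcal{B}})=M(\pi_Q(f_1),\ldots,\pi_Q(f_s))$ is the Moore matrix over $\kappa(Q)$.

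Since $\pi_Q|_V$ is injective and $\FF_q$-linear, the residues $\pi_Q(f_1),\ldots,\pi_Q(f_s)$ are linearly independent over $\FF_q$. By the standard Moore-matrix criterion, their Moore matrix is invertible over $\kappa(Q)$. Therefore $\pi_Q(\det M_{\mathcal{B}})\neq0$, so $\det M_{\mathcal{B}}\neq0$ in $F$, and $M_{\mathcal{B}}$ is invertible.

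The only points needing care are that the entries lie in $\mathcal{O}_Q$, so that reduction is a ring homomorphism on the matrix entries, and that $\pi_Q|_V$ is injective; both were established in the construction framework. If the Moore criterion were not taken as given, I would prove it by the usual argument. A nonzero kernel vector $(\beta_j)$ of the Moore matrix over $\kappa(Q)$ gives $\sum_j\beta_j a_j^{q^i}=0$ for all $i<s$. Consider the $\FF_q$-linear map $a\mapsto\sum_j\beta_j a^{q^{j-1}}$, which is a nonzero $q$-polynomial of $q$-degree less than $s$. Its root set is an $\FF_q$-subspace of dimension less than $s$, yet it would contain the span of the $a_j$ (this uses the transpose formulation: a nonzero left kernel vector gives a $q$-polynomial vanishing on every $a_j$). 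That contradicts independence.
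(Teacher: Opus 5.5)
Your proof is correct and follows the same route as the paper: one direction uses the $\FF_q$-linearity of the $\tau_i$ to get a column dependence, and the other reduces modulo $Q$ to the classical Moore matrix of the independent residues $\pi_Q(f_j)$. Your remark that all entries lie in $\mathcal{O}_Q$ fills in the justification the paper leaves implicit in ``which implies that $M_{\mathcal{B}}$ is invertible.'' In your sketch of the Moore criterion, the first formulation mixes row and column indices, but the left-kernel version you state parenthetically is exactly the paper's argument: it gives $\sum_i\beta_iX^{q^i}$ vanishing on every $a_j$, hence on $q^s$ elements, despite having degree at most $q^{s-1}$.
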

\begin{proof}
If $\mathcal{B}$ is linearly dependent over $\FF_q$, then the $\FF_q$-linearity
of the maps $\tau_i$ gives a nontrivial linear dependence among the columns of
$M_{\mathcal{B}}$.

Conversely, suppose that $\mathcal{B}$ is linearly independent over $\FF_q$.
Since $\pi_Q$ is injective on $V$, the elements
\[
    \alpha_j=\pi_Q(f_j),\qquad 1\leq j\leq s,
\]
are linearly independent over $\FF_q$. By \cref{eq:tau-mod-Q}, reducing
$M_{\mathcal{B}}$ modulo $Q$ gives the Moore matrix
\[
    M_{\mathcal{B}}(Q)
    :=\bigl(\alpha_j^{q^i}\bigr)_{
        0\leq i\leq s-1,\,1\leq j\leq s}
    \in\kappa(Q)^{s\times s}.
\]
If this matrix were singular, there would exist
$a_0,\ldots,a_{s-1}\in\kappa(Q)$, not all zero, such that the nonzero
$\FF_q$-linear polynomial
\[
    L(X)=\sum_{i=0}^{s-1}a_iX^{q^i}
\]
vanishes at every $\alpha_j$. It would then vanish on their $\FF_q$-span,
which has $q^s$ distinct elements. This is impossible because
$\deg L\leq q^{s-1}$. Hence ${M}_{\mathcal{B}}(Q)$ is invertible, which implies that $M_{\mathcal{B}}$ is invertible.
\end{proof}

Next, we show that the dimension of the intersection of a subspace $W\leq V$ with $H_P$ is closely related to the determinant of the PV-Moore matrix of a basis of $W$.

\begin{lemma}\label{lemma:PV-Moore-det-valuations}
Let $W\leq V$ be a subspace of dimension $s$, where $1\leq s\leq r$, and let
$\mathcal{B}=(f_1,\ldots,f_s)$ be an $\FF_q$-basis of $W$. Then, for every
$P\in\mathcal{P}$,
\[
    \dim_{\FF_q}(W\cap H_P)
    \leq v_P(\det M_{\mathcal{B}}).
\]
\end{lemma}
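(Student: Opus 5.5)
The plan is to reduce the statement directly to \cref{lem:kernel-dimension-determinant-valuation}, applied to $A=M_{\mathcal{B}}$ at the rational place $P$. By \cref{prop:invertible-pv-moore}, since $\mathcal{B}$ is a basis of $W$ and hence linearly independent, $\det M_{\mathcal{B}}\neq0$. Moreover, every entry $\tau_i(f_j)$ lies in $\widetilde V=\mathcal{L}(\widetilde dP_\infty)$, and $P\neq P_\infty$, so every entry is regular at $P$. Thus $M_{\mathcal{B}}\in\mathcal{O}_P^{s\times s}$ and the lemma gives
\[
    \dim_{\FF_q}\ker M_{\mathcal{B}}(P)\leq v_P(\det M_{\mathcal{B}}),
\]
using $\kappa(P)=\FF_q$ because $P$ is rational.

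It then remains to show $\dim_{\FF_q}(W\cap H_P)\leq\dim_{\FF_q}\ker M_{\mathcal{B}}(P)$. I will construct an injective $\FF_q$-linear map from $W\cap H_P$ into this kernel. Identify $W$ with $\FF_q^s$ via the coordinate map $c\mapsto f_c=\sum_j c_jf_j$. By $\FF_q$-linearity of each $\tau_i$ and of evaluation at $P$,
\[
    M_{\mathcal{B}}(P)\,c=\bigl(\tau_i(f_c)(P)\bigr)_{0\leq i\leq s-1},
\]
which is exactly the first $s$ coordinates of $E_P(f_c)$; this is where $s\leq r$ is used. Hence if $f_c\in H_P=\ker E_P$, then all $r$ coordinates vanish, in particular the first $s$, so $c\in\ker M_{\mathcal{B}}(P)$. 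Because $\mathcal{B}$ is a basis, the map $f_c\mapsto c$ is injective, and so the dimension inequality follows. Chaining the two inequalities proves the lemma.

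I do not anticipate a serious obstacle. The points to check carefully are, first, that reduction modulo $P$ commutes with the $\FF_q$-linear combination, which holds because the coefficients are constants, and second, that the entries are $P$-integral. Both follow from the setup in \cref{def:agpv-sd}.
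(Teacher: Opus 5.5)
Your proposal is correct and follows the paper's proof essentially step for step. You bound $\dim_{\FF_q}(W\cap H_P)$ by $\dim_{\FF_q}\ker M_{\mathcal{B}}(P)$ using the coordinate map and the $\FF_q$-linearity of the $\tau_i$, then combine \cref{prop:invertible-pv-moore} (nonzero determinant) with \cref{lem:kernel-dimension-determinant-valuation}; your explicit checks that the entries are $P$-integral and that $\kappa(P)=\FF_q$ are details the paper uses implicitly.
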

\begin{proof}
Evaluating the entries of $M_{\mathcal{B}}$ at $P$ gives
\[
    M_{\mathcal{B}}(P)
    =\bigl(\tau_i(f_j)(P)\bigr)_{
        0\leq i\leq s-1,\,1\leq j\leq s}
    \in\FF_q^{s\times s}.
\]
Every $f\in W$ can be written uniquely as
\[
    f=\sum_{j=1}^s c_jf_j,
    \qquad c_j\in\FF_q.
\]
Since the maps $\tau_i$ are $\FF_q$-linear, $f \in W\cap H_P$ implies
\[
    M_{\mathcal{B}}(P)(c_1,\ldots,c_s)^T=0.
\]
Consequently,
\[
    \dim_{\FF_q}(W\cap H_P)
\leq \dim_{\FF_q}\ker M_{\mathcal{B}}(P).
\]
The matrix $M_{\mathcal{B}}$ has entries in $\mathcal{O}_P$, and its
determinant is nonzero by \cref{prop:invertible-pv-moore}. The result now
follows from \cref{lem:kernel-dimension-determinant-valuation}.
\end{proof}

We are ready to prove that $\mathcal{H}$ is a strong subspace design.

\begin{proposition}\label{prop:strong-subspace-design}
For every subspace $W\leq V$ of dimension $1\leq s\leq r$,
\[
    \sum_{P\in\mathcal{P}}\dim_{\FF_q}(W\cap H_P)
    \leq d+(s-1)\widetilde d
    \leq s\widetilde d.
\]
Consequently, $\mathcal{H}$ is an
$(r,d+(r-1)\widetilde d)$-strong subspace design of size $|\mathcal{P}|$ whose members have
codimension at most $r$. In particular, $\mathcal{H}$ is an
$(r,6rk)$-strong subspace design.
\end{proposition}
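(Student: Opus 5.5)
Fix $W\leq V$ of dimension $s$ with $1\leq s\leq r$, and an $\FF_q$-basis $\mathcal{B}=(f_1,\ldots,f_s)$ of $W$. Let $\Delta=\det M_{\mathcal{B}}$. By \cref{prop:invertible-pv-moore}, $\Delta\neq0$. By \cref{lemma:PV-Moore-det-valuations}, $\dim(W\cap H_P)\leq v_P(\Delta)$ for every $P\in\mathcal{P}$. Summing over $P$ reduces everything to the bound
\[
\sum_{P\in\mathcal{P}} v_P(\Delta)\leq d+(s-1)\widetilde d.
\]
Since $\Delta$ is regular at each $P\in\mathcal{P}$, every term $v_P(\Delta)$ is nonnegative. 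Since these places are rational and distinct, the sum is at most $\deg (\Delta)_0$.

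Next I bound the pole divisor of $\Delta$. The first row of $M_{\mathcal{B}}$ consists of $f_j\in\mathcal{L}(dP_\infty)$. The remaining $s-1$ rows consist of $\tau_i(f_j)\in\mathcal{L}(\widetilde dP_\infty)$. Each Leibniz term is a product of one entry from each row, so it lies in $\mathcal{L}\bigl((d+(s-1)\widetilde d)P_\infty\bigr)$, and hence so does $\Delta$. Thus $(\Delta)_\infty\leq (d+(s-1)\widetilde d)P_\infty$. By \cref{lem:principal-divisor-degree}, $\deg(\Delta)_0=\deg(\Delta)_\infty\leq d+(s-1)\widetilde d$. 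Note $P_\infty\notin\mathcal{P}$, so no zero is double counted. This gives the first inequality, and $d\leq\widetilde d$ (since $m>d$) gives $\leq s\widetilde d$.

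For the consequences, the bound for $s\leq r$ is at most $d+(r-1)\widetilde d$. The case $W=\{0\}$ is trivial, and $r$-dimensional $W$ is the case $s=r$. Each $H_P$ has codimension at most $r$ as already noted in \cref{def:agpv-sd}.

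For the constant, use $g\leq k$. This gives $d=k+g-1<2k$, so $m<2d<4k$, and $\widetilde d=m+2g-1<4k+2k=6k$. Hence $d+(r-1)\widetilde d\leq r\widetilde d<6rk$.

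The only delicate point is the bookkeeping in the degree bound, which works because the first row uses $d$ rather than $\widetilde d$. I expect no real obstacle beyond that.
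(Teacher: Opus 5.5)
Your proof is correct and follows essentially the same route as the paper: nonvanishing of $\det M_{\mathcal{B}}$ via \cref{prop:invertible-pv-moore}, the per-place bound from \cref{lemma:PV-Moore-det-valuations}, the row-by-row pole bound placing $\det M_{\mathcal{B}}$ in $\mathcal{L}((d+(s-1)\widetilde d)P_\infty)$, and the equality $\deg(\cdot)_0=\deg(\cdot)_\infty$. Your estimate $\widetilde d<6k$ differs only cosmetically from the paper's chain $\widetilde d<2d+2g-1=2k+4g-3<6k$.
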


\begin{proof}
Fix a subspace $W\leq V$ of dimension $1\leq s\leq r$, and let
$\mathcal{B}$ be an $\FF_q$-basis of $W$. By
\cref{prop:invertible-pv-moore}, the determinant of $M_{\mathcal{B}}$ is
nonzero. The entries in its first row belong to $\mathcal{L}(dP_\infty)$,
whereas the entries in each of the remaining $s-1$ rows belong to
$\mathcal{L}(\widetilde dP_\infty)$. Therefore
\[
    0\neq\det M_{\mathcal{B}}
    \in\mathcal{L}((d+(s-1)\widetilde d)P_\infty).
\]
Together with \cref{lemma:PV-Moore-det-valuations}, we obtain
\begin{align*}
    \sum_{P\in\mathcal{P}}\dim_{\FF_q}(W\cap H_P)
    &\leq\sum_{P\in\mathcal{P}}v_P(\det M_{\mathcal{B}})\\
    &\leq\deg((\det M_{\mathcal{B}})_0)\\
    &=\deg((\det M_{\mathcal{B}})_\infty)\\
    &\leq d+(s-1)\widetilde d.
\end{align*}
Since $d\leq\widetilde d$, the latter quantity is at most
$s\widetilde d$. Together,
\[
    \widetilde d=m+2g-1
    <2d+2g-1
    =2k+4g-3
    <6k,
\]
which gives the final assertion.
\end{proof}

\subsection{Instantiating with the Garcia--Stichtenoth function field}

Recall that our construction requires explicitly finding a place $Q$ of degree $m$
satisfying $d < m < 2d$. The following proposition guarantees the existence
and efficient construction of such a place for our instantiation.

\begin{proposition}[Explicit place of prime degree]
\label{prop:explicit-prime-degree-place}
Let $q$ be a prime power, let $\ell\geq1$, and suppose that $q^\ell=r^2$ for a
prime power $r$. Let $F_i/\FF_q$ be the $i$'th field in the $\ell$-descended
Garcia--Stichtenoth tower, defined by
\[
    F_1=\FF_q(x_1),\qquad
    F_{j+1}=F_j(x_{j+1}),\qquad
    x_{j+1}^{r}+x_{j+1}
    =\frac{x_j^{r}}{x_j^{r-1}+1}
    \qquad (j\geq1),
\]
and suppose that $F_i$ has genus $g\geq2$. Given a prime number $m$ with
$2g-1<m<2(2g-1)$, there is a deterministic $\poly(q,\ell,g)$-time algorithm
that outputs an explicit
representation $K\cong\FF_{q^m}$ and elements
$a_1,\ldots,a_i\in K$ such that the evaluation $x_j\mapsto a_j$ represents
a degree-$m$ place $Q$ of $F_i/\FF_q$. In particular, the residue map at
$Q$ is explicit.
\end{proposition}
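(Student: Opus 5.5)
We plan to find a degree-$m$ place $Q$ by exhibiting a point of the curve over $\FF_{q^m}$ that is not defined over any proper subfield. Since $m$ is prime, the only proper subfield of $\FF_{q^m}$ containing $\FF_q$ is $\FF_q$ itself. Concretely, we first construct $K\cong\FF_{q^m}$ deterministically: find an irreducible polynomial of degree $m$ over $\FF_q$ in time $\poly(q,m)$, e.g.\ by Shoup's deterministic algorithm, which runs in time polynomial in $p$, $\log q$ and $m$, hence $\poly(q,g)$. We then search for tuples $(a_1,\ldots,a_i)\in K^i$ satisfying the tower equations
\[
    a_{j+1}^{r}+a_{j+1}=\frac{a_j^{r}}{a_j^{r-1}+1},\qquad a_j^{r-1}+1\neq0,\qquad a_1\notin\FF_q .
\]
Such a tuple lies over the place $x_1-a_1$ of $\FF_q(x_1)$, whose degree is $[\FF_q(a_1):\FF_q]=m$. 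Provided the tuple avoids ramification, it determines a place $Q$ of $F_i$ whose residue field contains $\FF_q(a_1)=K$ and is generated by the $a_j$, so $\deg Q=m$.

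For the search, we would pick $a_1\in K\setminus\FF_q$ and then successively solve each Artin--Schreier-type equation $Y^r+Y=c$ in $K$. The map $Y\mapsto Y^r+Y$ is $\mathbb{F}_p$-linear on $K$, so solving the equation is a linear system over $\mathbb{F}_p$ of dimension $m\log_p q$, which is polynomial-time. The difficulty is that a given $a_1$ may not lift all the way up to level $i$ within $K$ (the lift might require a degree-$r$ extension). Instead of enumerating all $a_1\in K$, which costs $q^m$ time and is too slow, we use a counting argument. By \cref{cor:number-degree-r-places}, $F_i$ has
\[
    B_m\;\geq\;\frac{q^m-(2+7g)q^{m/2}}{m}\;>\;0
\]
places of degree $m$, because $m>2g-1$ makes $q^{m/2}$ large compared with $7g+2$. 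This shows existence, but for an efficient algorithm we need to find such a place quickly.

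To get an efficient search, we would traverse the tower level by level, maintaining the set of partial tuples over $K$. Each $a_j$ has at most $r$ lifts, and a lift exists iff $c$ lies in the image of $Y\mapsto Y^r+Y$, a subspace of $K$ of codimension $\log_p r$ over $\mathbb{F}_p$. We would restrict $a_1$ to an explicit structured family, for instance $a_1$ ranging over an $\FF_q$-affine line $\{\theta+b\}$ together with a few more low-dimensional families of size $\poly(q)$. Then we would argue via Weil-type bounds (\cref{cor:number-degree-r-places} applied to the fibre, or a character-sum bound for the trace conditions) that some member of the family lifts all the way to level $i$. The length of such a family is $\poly(q,g)$, and ramified and pole cases ($a_j^{r-1}+1=0$) are excluded since they occur only over $\FF_{q}$-rational values or at finitely many points.

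We expect the main obstacle to be this last step: guaranteeing deterministically, within $\poly(q,\ell,g)$ time, that some candidate $a_1$ from a small explicit family splits completely up to level $i$ inside $K$, given that the $i$ successive trace conditions are nonlinear. A fallback is to use the fact that $\ell$-descended places come from $\FF_{q^\ell}$-rational places of the GS tower, where the splitting is completely explicit (every $\alpha_1$ with $\alpha_1^r+\alpha_1\ne0$ splits completely). Combined with the primality of $m$, this could let us locate a place of the constant-field extension $F_i^{(m)}$ of degree one lying over a degree-$m$ place of $F_i$.
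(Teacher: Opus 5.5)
Your proposal has a genuine gap, exactly at the step you flag: you never produce, deterministically and in $\poly(q,\ell,g)$ time, a tuple that lifts through all $i$ levels inside $K$. The Weil-type route cannot supply this.

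\begin{itemize}
\item Estimates such as \cref{cor:number-degree-r-places} count points over all of $K$, with an error term of order $g\,q^{m/2}$. This is exponential in $g$ because $m>2g-1$.
\item They prove existence, but say nothing about whether a prescribed explicit family of $\poly(q,g)$ candidates for $a_1$ (an affine line $\theta+\FF_q$, say) contains one that survives $i$ successive solvability conditions. You give no argument bounding failures inside such a small family.
\item Exhaustive search over $K$ costs $q^m$, which is too slow.
\item The fallback fails too. Places of $F_i$ lying under $\FF_{q^\ell}$-rational places of the Garcia--Stichtenoth tower have degree dividing $\ell$. They are never of the prime degree $m$, since $m>2s=e\ell\geq\ell$.
\end{itemize}

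Part of the trouble is your claim that the image of $Y\mapsto Y^r+Y$ on $K$ has codimension $\log_p r$. That describes its behaviour on $\FF_{r^2}$, where it is the trace to $\FF_r$. It is not the behaviour on a field of prime degree $m$.

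The missing idea is to recompute that kernel on a well-chosen field, which removes the search entirely.

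\begin{itemize}
\item Write $q=p^e$ and $r=p^s$. Since $g\geq(r-1)^2\geq s$, the prime $m$ is odd and exceeds both $2s$ and $e$.
\item Take $h\in\FF_p[T]$ irreducible of degree $m$ (via Shoup's algorithm). It stays irreducible over $\FF_q$ because $\gcd(m,e)=1$. Work in $L=\FF_p[T]/(h)\cong\FF_{p^m}$, a subfield of $K=\FF_q[T]/(h)$.
\item If $z\in L$ and $z^r=-z$, then $z^{r^2}=z$, so $z\in\FF_{p^{\gcd(2s,m)}}=\FF_p$. There $z^r+z=2z$.
\end{itemize}

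Hence in odd characteristic $A(z)=z^r+z$ is an $\FF_p$-linear bijection of $L$, with explicit inverse $A^{-1}(b)=\tfrac12\sum_{k=0}^{m-1}(-1)^k b^{r^k}$. Set $a_1=T\bmod h$, which generates both $L/\FF_p$ and $K/\FF_q$, so the place has degree $m$. Then set $a_{j+1}=A^{-1}(\phi(a_j))$, where $\phi(X)=X^r/(X^{r-1}+1)$. Injectivity of $A$ also gives $a_j^{r-1}+1=A(a_j)/a_j\neq0$, so no denominator ever vanishes.

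In characteristic $2$ the structure is $\ker A=\FF_2$ and $\operatorname{im}A=\ker\Tr_{L/\FF_2}$. One maintains the invariant $\Tr(\phi(a_j))=0$ as follows.

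\begin{itemize}
\item The two solutions $b$ and $b+1$ of $A(b)=\phi(a_j)$ satisfy $\phi(b)+\phi(b+1)=1+1/\phi(a_j)$.
\item The element $1/\phi(a_j)=a_j^{-1}+a_j^{-r}$ has trace zero, and $\Tr(1)=1$ since $m$ is odd.
\item So exactly one of $b$, $b+1$ preserves the invariant.
\item The base case needs some $a_1\notin\FF_2$ with $\Tr(\phi(a_1))=0$. The paper obtains it from an element $c$ generating $L$ with $\Tr(c)=\Tr(c^{-1})=0$.
\end{itemize}
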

\begin{proof}
Suppose first that $\FF_q$ has odd characteristic. Fix such a prime $m$, and write
\[
    \phi(X)=\frac{X^r}{X^{r-1}+1}.
\]
Write $q=p^e$ and $r=p^s$, so that $e\ell=2s$. After extending the constant
field to $\FF_{q^\ell}=\FF_{r^2}$, the tower is the Garcia--Stichtenoth tower
of \cref{def:gs-tower}. Its genus is $g$, and
\cref{thm:gs-tower-parameters} gives
\[
    g\geq(r-1)^2\geq s.
\]
Since $m$ is an odd prime and $m>2g-1\geq2s-1$, we have $m>2s\geq e$.
Consequently,
\[
    \gcd(m,s)=\gcd(m,e)=1.
\]

Using Shoup's algorithm
\cite{Shoup1990}, deterministically construct an irreducible polynomial
$h\in\FF_p[T]$ of degree $m$. Since $\gcd(m,e)=1$, $h$ remains irreducible
over $\FF_q$. We may therefore set
\[
    L=\FF_p[T]/(h),\qquad K=\FF_q[T]/(h),
\]
and let $u=T+(h)\in L$, viewed also as an element of $K$. Then
$L\cong\FF_{p^m}$ is a subfield of $K\cong\FF_{q^m}$, and $u$ generates
both $L/\FF_p$ and $K/\FF_q$. The $\FF_p$-linear map
$A\colon L\to L$ given by $A(z)=z^r+z$ is invertible, with inverse given by
\[
    A^{-1}(b)=\frac{1}{2}\sum_{k=0}^{m-1}(-1)^k b^{r^k}.
\]

Set $a_1=u$ and, for $1\leq j<i$, recursively compute in $L$
\[
    a_{j+1}
    = A^{-1}(\phi(a_j)) =\frac{1}{2}\sum_{k=0}^{m-1}
        (-1)^k\phi(a_j)^{r^k}.
\]
If $a_j\neq0$, then the injectivity of $A$ gives
\[
    a_j^{r-1}+1=\frac{A(a_j)}{a_j}\neq0.
\]
Thus $\phi(a_j)$ is defined and nonzero, and hence so is
$a_{j+1}=A^{-1}(\phi(a_j))$. By induction, all coordinates are nonzero and
none of the denominators vanishes.

The evaluation $x_j\mapsto a_j$ satisfies the defining equations of $F_i$, which defines a place $Q$. Since $a_1=u$
generates $K/\FF_q$, the residue field of $Q$ is $K$, and therefore
$\deg Q=m$. The field $K$, the elements $a_1,\ldots,a_i$, and the resulting
evaluation map can all be computed in time $\poly(q,\ell,g)$.

The characteristic $2$ case follows the same recursion, but requires a modified argument to find the first element. We defer this proof to \cref{app:function-fields-missing-proofs}.
\end{proof}

We now instantiate the construction with the Garcia--Stichtenoth tower to obtain explicit strong subspace designs.

\begin{theorem}\label{thm:square-field-subspace-design}
Let $q=\ell^2$, where $\ell$ is a prime power. There are infinitely many
integers $k$ such that, for every $1\leq r\leq k$, there exists an explicit
$(r,6rk)$-strong subspace design $\mathcal{H}$ in $\FF_q^k$ of codimension
$r$. Moreover, $\mathcal{H}$ is an $(s,6sk)$-strong subspace design for every
$1\leq s\leq r$. The integers $k$ may be taken to be the positive genera in
the Garcia--Stichtenoth tower. If $n=|\mathcal{H}|$ and
$W\leq\FF_q^k$ has dimension $s\leq r$, then
\[
    n\geq(\sqrt q-1)k
    \qquad\text{and}\qquad
    \mathbb{E}_{H\in\mathcal{H}}\dim_{\FF_q}(W\cap H)
    \leq\frac{6}{\sqrt q-1}\,\dim_{\FF_q}W.
\]
\end{theorem}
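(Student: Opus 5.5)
We instantiate the framework of \cref{subsec:construction-framework} with $F=F_i$, the $i$-th field of the Garcia--Stichtenoth tower over $\FF_q$ with $q=\ell^2$. This is the $1$-descended tower of \cref{prop:explicit-prime-degree-place}, with the prime power called $r$ there equal to $\ell$. We take $k=g=g_i$ for those $i$ with $g_i\geq 2$; these are infinitely many positive genera. Then $d=k+g-1=2g-1$. By Bertrand's postulate (\cref{fact:Bertrand}) there is a prime $m$ with $2g-1<m<2(2g-1)$, i.e.\ $d<m<2d$, and we can find it by trial division in $\poly(g)$ time. \cref{prop:explicit-prime-degree-place} then produces, in $\poly(q,g)$ time, a degree-$m$ place $Q$ together with an explicit residue map $\pi_Q$.

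For the remaining data we take $P_\infty$ to be the totally ramified rational place and $\mathcal{P}=\mathcal{R}_i$. The latter avoids $P_\infty$ and has size $(q-\ell)\ell^{i-1}=\ell^i(\ell-1)$. Since $g_i\leq\ell^i$ by \cref{thm:gs-tower-parameters}, we get $n=|\mathcal{P}|\geq(\ell-1)g_i=(\sqrt q-1)k$.

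For explicitness, the algorithm of Shum et al.\ gives bases of $V=\mathcal{L}(dP_\infty)$ and $\widetilde V=\mathcal{L}(\widetilde dP_\infty)$ in polynomial time. The map $\pi_Q$ on $\widetilde V$ is computed by substituting $x_j\mapsto a_j$ into these basis functions. We should check that the denominators $\pi_j$ do not vanish at the $a_j$; this holds because \cref{prop:explicit-prime-degree-place} guarantees all $a_j^{\ell-1}+1\neq0$. A right inverse $I$ then follows by Gaussian elimination, the Frobenius powers in $\kappa(Q)$ are computed directly, and evaluation at the places of $\mathcal{R}_i$ is explicit as noted earlier. Hence every $E_P$, and therefore every $H_P=\ker E_P$, is computable in time polynomial in $q$ and $k$.

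The design property follows from \cref{prop:strong-subspace-design}. Because the construction does not depend on $r$ except through the number of coordinates kept, we may apply that proposition with any $s\leq r$, which gives $\sum_P\dim(W\cap H_P)\leq s\widetilde d<6sk$. This shows $\mathcal{H}$ is an $(s,6sk)$-design for each $s\leq r$, and in particular an $(r,6rk)$-design with codimension at most $r$. Dividing by $n\geq(\sqrt q-1)k$ gives the expectation bound $\mathbb{E}\dim(W\cap H)\leq 6s/(\sqrt q-1)$.

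The main point to watch is the condition $g\leq k$ together with the bound $\widetilde d<6k$, both of which hold since we chose $k=g$. A second point is matching the hypothesis $g\geq2$ in \cref{prop:explicit-prime-degree-place}, which rules out the small levels; infinitely many levels remain. If other $k$ with $g\leq k$ are wanted, the same argument works with $d=k+g-1$ and $m$ chosen in $(d,2d)$, provided we extend \cref{prop:explicit-prime-degree-place} to primes larger than $2(2g-1)$. Here we only need $k=g$.
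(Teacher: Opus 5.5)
Your proposal is correct and follows the paper's proof essentially step for step: you take level $i$ of the Garcia--Stichtenoth tower with $k=g$ and $d=2g-1$, choose a Bertrand prime $m\in(d,2d)$, get the place $Q$ from \cref{prop:explicit-prime-degree-place}, and then apply \cref{prop:strong-subspace-design} with $\widetilde d<6k$ and $n\geq \ell^i(\ell-1)\geq(\sqrt q-1)k$. Your two small deviations are harmless and in fact tidy up points the paper leaves implicit: taking $\mathcal{P}=\mathcal{R}_i$ instead of all rational places other than $P_\infty$ (so explicit evaluation is guaranteed), and restricting to levels with $g_i\geq 2$, which is needed both for Bertrand with $d>1$ and for the hypothesis of \cref{prop:explicit-prime-degree-place} (e.g.\ $g_2=1$ when $q=4$).
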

\begin{proof}
Let $F=F_e/\FF_q$ be the $e$'th field in the Garcia--Stichtenoth tower, let
$k=g(F)$, and let $N=N(F)$. Let $P_\infty$ be the place at infinity of $F$, and let $\mathcal{P}$ be the set of rational places in $F_e$, excluding $P_\infty$. Set
\[
    d=k+g-1=2g-1.
\]
By \cref{fact:Bertrand}, there is a prime number $m$ such that $d<m<2d$; such a
prime can be found deterministically by enumeration. By
\cref{prop:explicit-prime-degree-place}, we can efficiently compute a place
$Q$ of degree $m$ in $F$.
We may therefore apply \cref{def:agpv-sd,prop:strong-subspace-design}. For
every $1\leq s\leq r$, the resulting collection is an $(s,6sk)$-strong
subspace design.

By \cref{thm:gs-tower-parameters},
\[
    n=N-1\geq\ell^e(\ell-1)\geq(\ell-1)k
    =(\sqrt q-1)k.
\]
Dividing the intersection bound by $N-1$ gives
\[
    \mathbb{E}_{H\in\mathcal{H}}\dim_{\FF_q}(W\cap H)
    \leq\frac{6}{\sqrt q-1}\,\dim_{\FF_q}W.
\]

Finally, explicitness follows from
\cref{prop:explicit-prime-degree-place,subsec:gs-tower-background}. The former
provides the place $Q$, its residue field, and the residue map $\pi_Q$, while
the latter provides the rational places, bases of the required Riemann--Roch
spaces, and their evaluations. The remaining maps and subspaces in the
construction are obtained by linear algebra over $\FF_q$.
\end{proof}

\section{An Improved, Double-Evaluation Construction}
\label{sec:lossless-rank-extractors}

The preceding construction has two shortcomings. First, its explicit
instantiation is restricted to square fields. Second, its normalized
intersection bound is $6/(\sqrt q-1)$ times the dimension of the test subspace.
Consequently, for an $r$-dimensional test subspace, this estimate yields
a nontrivial lossless rank extractor only when $\sqrt q-1>6r$, and hence does not
apply over fields of size less than 49.

In this section, we give a variant of the AGPV construction that works over all finite fields, and produces subspace designs with a better ratio between the intersection bound and the number of subspaces.
A direct consequence is the existence of explicit lossy rank condensers over \emph{every} finite field, and lossless rank extractors whenever $q > r$.
Building on the framework presented in \cref{sec:agpv construction}, we present the construction of subspace designs, and then derive the consequences for rank condensers and extractors.

\subsection{Subspace designs construction}
\label{subsec:strong-subspace-designs}

Let $\ell\geq2$ be even. Our construction has two steps.
The first step is the same as the AGPV construction, but we evaluate at places of degree $\ell$ instead of at rational places.
In the second step, we represent each residue-field element as an affine-linear polynomial in $\ell-1$ variables over $\FF_q$, and evaluate these polynomials at all points of $\FF_q^{\ell-1}$.
Evaluating these polynomials
at all points of $\FF_q^{\ell-1}$ yields a normalized intersection bound
that approaches $1/q$ as $\ell$ grows.
The use of a second polynomial evaluation follows the idea of \cite{TSU12}.

We begin by analyzing the second step of the construction, which is independent of the function field.

\begin{lemma}
\label{lem:affine-kernel-growth}
Let $s,t\geq1$ and $D\geq1$ be integers, let
$\boldsymbol{X}=(X_1,\ldots,X_t)$, and let
$B(\boldsymbol{X})\in\FF_q[\boldsymbol{X}]^{s\times s}$ have entries of
total degree at most $D$. Set
\[
    u=\dim_{\FF_q(\boldsymbol{X})}\ker B(\boldsymbol{X}).
\]
Then,
\[
    \sum_{\boldsymbol{a}\in\FF_q^t}
    \dim_{\FF_q}\ker B(\boldsymbol{a})
    \leq q^{t-1}\bigl(qu+D(s-u)\bigr).
\]
\end{lemma}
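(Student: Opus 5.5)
We reduce to the nonsingular case and then bound the remaining kernel at each point by a multiplicity, using \cref{lem:kernel-dimension-determinant-valuation}'s idea together with \cref{lem:multiplicity-schwartz-zippel}.

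\emph{Reduction to the nonsingular case.} Let $K=\ker B(\boldsymbol{X})$ over $\FF_q(\boldsymbol{X})$, which has dimension $u$. By clearing denominators and doing Gaussian elimination, we choose a basis of $K$ whose vectors have polynomial entries. We then extend it to an invertible matrix $C\in\FF_q[\boldsymbol{X}]^{s\times s}$ whose first $u$ columns span $K$. The cleanest way is to take a reduced row echelon form of a basis: there is a set $J$ of $u$ pivot coordinates such that the kernel basis can be written as $e_j+(\text{terms in coordinates }\notin J)$ times a common denominator $c(\boldsymbol{X})$. Instead of working with kernels, however, I would pass to rank. Since $\operatorname{rank}B(\boldsymbol{X})=s-u$, there is an $(s-u)\times(s-u)$ minor $\mu(\boldsymbol{X})\neq0$, formed by rows $R$ and columns $S$. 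This minor has total degree at most $D(s-u)$.

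\emph{Pointwise bound.} At a point $\boldsymbol{a}$ we have $\dim\ker B(\boldsymbol{a})=s-\operatorname{rank}B(\boldsymbol{a})\leq u+\bigl((s-u)-\operatorname{rank}B_{R,S}(\boldsymbol{a})\bigr)$, since $\operatorname{rank}B(\boldsymbol{a})\geq\operatorname{rank}B_{R,S}(\boldsymbol{a})$. Let $B'=B_{R,S}$, a square $(s-u)\times(s-u)$ matrix with nonzero determinant $\mu$. I then claim that $\dim\ker B'(\boldsymbol{a})\leq\operatorname{mult}_{\boldsymbol{a}}(\mu)$. The proof mirrors \cref{lem:kernel-dimension-determinant-valuation}. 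Choose $\overline C\in\mathrm{GL}(\FF_q)$ whose first $w=\dim\ker B'(\boldsymbol{a})$ columns span $\ker B'(\boldsymbol{a})$. Then $B'\overline C$ has its first $w$ columns vanishing at $\boldsymbol{a}$, so after substituting $\boldsymbol{X}=\boldsymbol{a}+\boldsymbol{Y}$ each entry of those columns has no constant term. Every Leibniz term of $\det(B'\overline C)$ therefore has lowest degree at least $w$, and $\det(B'\overline C)$ is a nonzero constant multiple of $\mu$.

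\emph{Summing.} This gives
\[
\sum_{\boldsymbol{a}\in\FF_q^t}\dim\ker B(\boldsymbol{a})\leq q^t u+\sum_{\boldsymbol{a}}\operatorname{mult}_{\boldsymbol{a}}(\mu).
\]
Applying \cref{lem:multiplicity-schwartz-zippel} with $S=\FF_q$ bounds the last sum by $D(s-u)q^{t-1}$. Together these yield $q^{t-1}(qu+D(s-u))$.

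\emph{Main obstacle.} The main point is the local multiplicity bound. It is a multivariate analogue of \cref{lem:kernel-dimension-determinant-valuation}, with $\operatorname{mult}$ replacing valuation. It works because $\operatorname{mult}_{\boldsymbol{a}}$ is superadditive on products and the lowest-degree part of a sum is controlled termwise. Handling the singular case through a maximal nonvanishing minor, rather than through kernel lifting, avoids any degree blow-up.
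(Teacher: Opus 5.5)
Your proposal is correct and follows the paper's proof essentially step for step. You choose a nonzero $(s-u)\times(s-u)$ minor of total degree at most $D(s-u)$, bound $\dim\ker B(\boldsymbol{a})\leq u+\dim\ker B'(\boldsymbol{a})$ via ranks, bound the latter by $\operatorname{mult}_{\boldsymbol{a}}$ of the minor after right-multiplying by a constant invertible matrix, and finish with \cref{lem:multiplicity-schwartz-zippel}. The opening kernel-lifting discussion can simply be dropped, and the degenerate case $u=s$ is immediate, as the paper notes.
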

\begin{proof}
The assertion is immediate if $u=s$. Otherwise, choose an
$(s-u)\times(s-u)$ minor $C(\boldsymbol{X})$ of $B(\boldsymbol{X})$ whose
determinant $h(\boldsymbol{X})$ is nonzero. Then,
$\deg h\leq D(s-u)$.

For $\boldsymbol{a}\in\FF_q^t$, set
$v_{\boldsymbol{a}}=\dim_{\FF_q}\ker C(\boldsymbol{a})$. Since $C$ is a
minor of $B$, we have
\[
    \dim_{\FF_q}\ker B(\boldsymbol{a})-u
    \leq v_{\boldsymbol{a}}.
\]
Choose $R_{\boldsymbol{a}}\in\operatorname{GL}_{s-u}(\FF_q)$ whose last
$v_{\boldsymbol{a}}$ columns span $\ker C(\boldsymbol{a})$. These columns
of $C(\boldsymbol{X})R_{\boldsymbol{a}}$ vanish at $\boldsymbol{a}$, so
the determinant has multiplicity at least $v_{\boldsymbol{a}}$ there.
Since $\det R_{\boldsymbol{a}}$ is a nonzero constant, it follows that
\[
    v_{\boldsymbol{a}}
    \leq\operatorname{mult}_{\boldsymbol{a}}(h).
\]
Applying \cref{lem:multiplicity-schwartz-zippel} with $S=\FF_q$ gives
\[
    \sum_{\boldsymbol{a}\in\FF_q^t}
    \left(\dim_{\FF_q}\ker B(\boldsymbol{a})-u\right)
    \leq Dq^{t-1}(s-u).
\]
Adding $q^tu$ to both sides proves the result.
\end{proof}

We now define the construction. Recall the framework presented in \cref{subsec:construction-framework}.

\begin{definition}[Affine-evaluation AGPV construction]
\label{def:affine-evaluation-agpv-family} Let $F/\FF_q$ be a function field,
let $\ell\geq2$ be even, and let $\mathcal{P}_{\ell}$ be a set of
degree-$\ell$ places of $F$. For each $P\in\mathcal{P}_{\ell}$, fix an
$\FF_q$-basis
$
    \{ 1,\omega_{P,1},\ldots,\omega_{P,\ell-1} \}
$
of $\kappa(P)$. For
\[
    b=b_0+\sum_{j=1}^{\ell-1}b_j\omega_{P,j}\in\kappa(P),
\]
define
\[
    g_{P,b}(X_1,\ldots,X_{\ell-1})
    =b_0+\sum_{j=1}^{\ell-1}b_jX_j.
\]
For $\boldsymbol{a}\in\FF_q^{\ell-1}$, let
$\lambda_{P,\boldsymbol{a}}\colon\kappa(P)\to\FF_q$ be the linear map
$b\mapsto g_{P,b}(\boldsymbol{a})$. Let
$
    E_{P,\boldsymbol{a}}\colon V\longrightarrow\FF_q^r
$ be defined by
\[
    E_{P,\boldsymbol{a}}(f)
    =\bigl(
       \lambda_{P,\boldsymbol{a}}(\tau_0(f)(P)),\ldots,
       \lambda_{P,\boldsymbol{a}}(\tau_{r-1}(f)(P))
     \bigr),
\]
and set $H_{P,\boldsymbol{a}}=\ker E_{P,\boldsymbol{a}}$. Finally, let
\begin{align*}
    \mathcal{H} = \mathcal{H}(F, \mathcal{P}_\ell, P_\infty, Q, k, r, \ell)
    &=\bigl(H_{P,\boldsymbol{a}}\bigr)_
      {P\in\mathcal{P}_{\ell},\,\boldsymbol{a}\in\FF_q^{\ell-1}}, \\
    \mathcal{E} = \mathcal{E}(F, \mathcal{P}_\ell, P_\infty, Q, k, r, \ell)
    &=\bigl(E_{P,\boldsymbol{a}}\bigr)_
      {P\in\mathcal{P}_{\ell},\,\boldsymbol{a}\in\FF_q^{\ell-1}}.
\end{align*}
The collections have size $n=q^{\ell-1}|\mathcal{P}_{\ell}|$.
\end{definition}

\begin{remark}
This construction introduces only one parameter beyond those of
\cref{def:agpv-sd}, namely $\ell$.
\end{remark}
We first analyze the construction over an arbitrary function field.

\begin{proposition}[Affine-evaluation kernel bound]
\label{prop:affine-evaluation-agpv}
Let $\ell\geq2$ be even, let $k=\dim_{\FF_q}V$, let $1\leq r\leq k$,
and let $\mathcal{P}_{\ell}$ be a set of $N$ degree-$\ell$ places. For
$1\leq s\leq r$, set
\[
    A_s=q^{\ell-2}\left(
      sN+(q-1)\left\lfloor
        \frac{d+(s-1)\widetilde d}{\ell}
      \right\rfloor
    \right).
\]
Then, $\mathcal{H}(\mathcal{P}_{\ell})$ is an
$(s,A_s)$-strong subspace design for every $1\leq s\leq r$.
\end{proposition}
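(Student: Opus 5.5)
Fix $W\leq V$ of dimension $s$ with basis $\mathcal{B}=(f_1,\ldots,f_s)$, and let $\Delta=\det M_{\mathcal{B}}\neq0$ (by \cref{prop:invertible-pv-moore}). The plan has two stages: bound the intersection dimension place by place using \cref{lem:affine-kernel-growth}, then sum over places using the valuations of $\Delta$.

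\textbf{Kernel bound at a single place.} For each $P\in\mathcal{P}_\ell$, the reduction $M_{\mathcal{B}}(P)$ lies in $\kappa(P)^{s\times s}$. Expand each entry in the basis $\{1,\omega_{P,1},\ldots,\omega_{P,\ell-1}\}$ and replace it by the corresponding affine-linear polynomial $g_{P,b}$. This gives a matrix $B_P(\boldsymbol{X})\in\FF_q[\boldsymbol{X}]^{s\times s}$ with entries of total degree at most $1$, in $t=\ell-1$ variables. As in \cref{lemma:PV-Moore-det-valuations}, if $f=\sum c_jf_j\in W\cap H_{P,\boldsymbol a}$, then the first $s$ coordinates of $E_{P,\boldsymbol a}(f)$ vanish. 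Hence $(c_j)\in\ker B_P(\boldsymbol a)$, so $\dim(W\cap H_{P,\boldsymbol a})\leq\dim_{\FF_q}\ker B_P(\boldsymbol a)$. Applying \cref{lem:affine-kernel-growth} with $D=1$ gives
\[
    \sum_{\boldsymbol a\in\FF_q^{\ell-1}}\dim(W\cap H_{P,\boldsymbol a})\leq q^{\ell-2}\bigl(qu_P+(s-u_P)\bigr)=q^{\ell-2}\bigl(s+(q-1)u_P\bigr),
\]
where $u_P=\dim_{\FF_q(\boldsymbol X)}\ker B_P(\boldsymbol X)$.

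\textbf{Relating $u_P$ to $v_P(\Delta)$.} The key step is to show $u_P\leq\dim_{\kappa(P)}\ker M_{\mathcal{B}}(P)$. The point is that $B_P(\boldsymbol X)$ contains the $\kappa(P)$-matrix $M_{\mathcal{B}}(P)$ up to a change of scalars. Since $\{1,\omega_{P,j}\}$ is an $\FF_q$-basis of $\kappa(P)$, the $\FF_q$-linear map $\FF_q+\sum_j\FF_qX_j\to\kappa(P)$ sending $X_j\mapsto\omega_{P,j}$ is an isomorphism, and applying it to $B_P$ entrywise returns $M_{\mathcal B}(P)$. More concretely, write $B_P=B^{(0)}+\sum_j X_jB^{(j)}$ with $B^{(j)}\in\FF_q^{s\times s}$, so that $M_{\mathcal B}(P)=B^{(0)}+\sum_j\omega_{P,j}B^{(j)}$. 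Every $r\times r$ minor of $B_P(\boldsymbol X)$ is a polynomial with $\FF_q$-coefficients in the $X_j$, and the corresponding minor of $M_{\mathcal B}(P)$ is its specialization at $X_j=\omega_{P,j}$. Consequently, if some minor of $M_{\mathcal B}(P)$ is nonzero, the same minor of $B_P$ is a nonzero polynomial. This gives $\operatorname{rank}B_P\geq\operatorname{rank}M_{\mathcal B}(P)$, and hence $u_P\leq\dim\ker M_{\mathcal B}(P)\leq v_P(\Delta)$ by \cref{lem:kernel-dimension-determinant-valuation}. Because $u_P$ is an integer, this is the only care needed here.

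\textbf{Summing over places.} Since $\deg P=\ell$, we have
\[
    \sum_P \ell\,v_P(\Delta)\leq\deg(\Delta)_0=\deg(\Delta)_\infty\leq d+(s-1)\widetilde d,
\]
exactly as in \cref{prop:strong-subspace-design}. Therefore $\sum_P u_P\leq\lfloor(d+(s-1)\widetilde d)/\ell\rfloor$, with the floor justified because the left-hand side is an integer. Summing the per-place bound over the $N$ places then yields
\[
    \sum_{P,\boldsymbol a}\dim(W\cap H_{P,\boldsymbol a})\leq q^{\ell-2}\bigl(sN+(q-1)\textstyle\sum_P u_P\bigr)\leq A_s.
\]
This is the claimed bound. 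I expect the main obstacle to be the rank-comparison step between $B_P$ and $M_{\mathcal B}(P)$, specifically getting the direction of the specialization argument right.
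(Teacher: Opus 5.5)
Your proposal is correct and follows the paper's proof essentially step for step. Both arguments apply the affine kernel-growth lemma with $D=1$ at each place, and both bound $u_P\le\dim_{\kappa(P)}\ker M_{\mathcal B}(P)\le v_P(\Delta_{\mathcal B})$ via the specialization $X_j\mapsto\omega_{P,j}$ before summing over the zero divisor of $\Delta_{\mathcal B}$. Your minor-specialization argument just spells out the step the paper states as an immediate consequence of $B_P(\omega_{P,1},\ldots,\omega_{P,\ell-1})=M_{\mathcal B}(P)$; the ``$r\times r$ minor'' there should read as a minor of arbitrary size, to avoid clashing with the parameter $r$.
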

\begin{proof}
Fix $1\leq s\leq r$, an $s$-dimensional subspace $W\leq V$, and an
$\FF_q$-basis $\mathcal{B}=(f_1,\ldots,f_s)$ of $W$. Set
\[
    D_s=d+(s-1)\widetilde d,
    \qquad
    b_s=\left\lfloor\frac{D_s}{\ell}\right\rfloor.
\]
For each $P\in\mathcal{P}_{\ell}$, define
\[
    B_P(\boldsymbol{X})
    =\left(
      g_{P,\tau_{i-1}(f_j)(P)}(\boldsymbol{X})
     \right)_{i,j\in[s]}, \qquad u_P=\dim_{\FF_q(\boldsymbol{X})}\ker B_P(\boldsymbol{X}).
\]
For $\boldsymbol{a}\in\FF_q^{\ell-1}$ and
$f=\sum_{j=1}^s c_jf_j\in W$, by the definition of
$H_{P,\boldsymbol{a}}$ we have $f\in H_{P,\boldsymbol{a}}$ implies
$
    B_P(\boldsymbol{a})(c_1,\ldots,c_s)^T=0,
$
and hence
\[
    \dim_{\FF_q}(W\cap H_{P,\boldsymbol{a}})
    \leq \dim_{\FF_q}\ker B_P(\boldsymbol{a}).
\]
Applying \cref{lem:affine-kernel-growth} therefore gives
\[
    \sum_{\boldsymbol{a}\in\FF_q^{\ell-1}}
    \dim_{\FF_q}(W\cap H_{P,\boldsymbol{a}})
    \leq q^{\ell-2}\bigl(s+(q-1)u_P\bigr).
\]

It remains to bound the sum of the kernel dimensions $u_P$. By
\cref{prop:invertible-pv-moore}, the determinant
$\Delta_{\mathcal{B}}:=\det M_{\mathcal{B}}$ is nonzero. Moreover, the first
row of $M_{\mathcal{B}}$ has entries in $\mathcal{L}(dP_\infty)$, while
each of the remaining $s-1$ rows has entries in
$\mathcal{L}(\widetilde dP_\infty)$. Therefore,
\[
    0\neq\Delta_{\mathcal{B}}\in\mathcal{L}(D_sP_\infty).
\]
For every $P\in\mathcal{P}_{\ell}$, we have
$
    B_P(\omega_{P,1},\ldots,\omega_{P,\ell-1})
    =M_{\mathcal{B}}(P),
$
and therefore
\[
    u_P\leq\dim_{\kappa(P)}\ker M_{\mathcal{B}}(P).
\]
Combining this with \cref{lem:kernel-dimension-determinant-valuation} gives
\[
    u_P\leq v_P(\Delta_{\mathcal{B}}).
\]
Since the places in $\mathcal{P}_{\ell}$ have degree $\ell$, we obtain
\[
    \ell\sum_{P\in\mathcal{P}_{\ell}}u_P
    \leq\ell\sum_{P\in\mathcal{P}_{\ell}}
      v_P(\Delta_{\mathcal{B}})
    \leq\deg((\Delta_{\mathcal{B}})_0)
    \leq D_s.
\]
Thus, $\sum_{P\in\mathcal{P}_{\ell}}u_P\leq b_s$. Summing the preceding
kernel bound over $P\in\mathcal{P}_{\ell}$ now gives
\begin{align*}
    \sum_{P\in\mathcal{P}_{\ell}}
    \sum_{\boldsymbol{a}\in\FF_q^{\ell-1}}
    \dim_{\FF_q}(W\cap H_{P,\boldsymbol{a}})
    &\leq q^{\ell-2}\left(
      sN+(q-1)\sum_{P\in\mathcal{P}_{\ell}}u_P
    \right) \\
    &\leq q^{\ell-2}\bigl(sN+(q-1)b_s\bigr)
     =A_s.
\end{align*}
\end{proof}

\subsection{Evaluation in a function field tower with many degree-$d$ places}

To instantiate the preceding construction, we need an explicit
family of function fields with many degree-$d$ places, together with efficient
algorithms for computing Riemann--Roch spaces and evaluating their elements at
these places. We obtain these ingredients by descending the
Garcia--Stichtenoth tower over $\FF_{q^d}$ to $\FF_q$. We call the resulting
family the $d$-descended Garcia--Stichtenoth function fields.

\begin{proposition}[Function field family with many degree-$d$ places]
\label{cor:degree-d-places-gs-tower}
Let $q$ be a prime power, let $d\geq2$ be even, and set $r=q^{d/2}$.
Let $(F_i)_{i\geq1}$ be the tower over $\FF_q$ defined by
\[
    F_1=\FF_q(x_1),
    \qquad
    F_{i+1}=F_i(x_{i+1}),
\]
where
\[
    x_{i+1}^{r}+x_{i+1}
    =\frac{x_i^{r}}{x_i^{r-1}+1}.
\]
Let $g_i=g(F_i)$. There exists an explicit set $\mathcal{P}_d^{(i)}$ of
degree-$d$ places in $F_i$, represented by evaluation tuples in
$\FF_{q^d}^i$, such that
\[
    B_d(F_i) \geq |\mathcal{P}_d^{(i)}|\geq\frac{q^{d/2}-1}{2d}\,g_i.
\]
For every $i\geq1$, there exists a unique pole $P_\infty$ of $x_1$ in
$F_i$ and it is rational. For every integer $\mu\geq0$, an explicit $\FF_q$-basis of
$\mathcal{L}(\mu P_\infty)$ can be computed in time
$\poly(q,d,g_i,\mu)$. Moreover, given a degree-$d$ place represented by an evaluation tuple in $\FF_{q^d}^i$, every element of $\mathcal{L}(\mu P_\infty)$ can be evaluated at this place in time
$\poly(q,d,g_i,\mu)$.
\end{proposition}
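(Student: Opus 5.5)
Note first that $r$ here denotes $q^{d/2}$, not the rank parameter. We write $F_i' = F_i\cdot\FF_{q^d}$ for the constant-field extension.

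\textbf{Step 1: identify $F_i'$ with the Garcia--Stichtenoth tower.} The defining equations have coefficients in $\FF_p\subseteq\FF_q$. Since $q^d=r^2$, the field $F_i'$ is exactly the $i$th field of the Garcia--Stichtenoth tower over $\FF_{r^2}$ (\cref{def:gs-tower}).

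To apply \cref{lem:places-under-constant-extension} we need $\FF_q$ to be the full constant field of $F_i$. This holds because $P_\infty$ is totally ramified in $F_i'/F_1'$, so $[F_i':\FF_{q^d}(x_1)]=[F_i:\FF_q(x_1)]$. The same total ramification shows that the pole $P_\infty$ of $x_1$ is unique in $F_i$, and rational because it is rational in $F_i'$. The lemma then gives $g(F_i)=g(F_i')=g_i$, together with the formulas of \cref{thm:gs-tower-parameters}.

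\textbf{Step 2: counting degree-$d$ places.} Let $\mathcal{R}_i$ be the set of rational places of $F_i'$ from \cref{subsec:gs-tower-background}. They are given by tuples $(\alpha_1,\ldots,\alpha_i)\in\FF_{q^d}^i$, and there are $r^{i}(r-1)$ of them.

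Each such place lies over a place of $F_i$ whose degree $e$ divides $d$, namely the place cut out by the evaluation $x_j\mapsto\alpha_j$ with residue field $\FF_q(\alpha_1,\ldots,\alpha_i)$. It has degree exactly $d$ iff this residue field is $\FF_{q^d}$. By \eqref{eq:rational-places-constant-extension}, each place of $F_i$ of degree $e\mid d$ accounts for exactly $e$ rational places of $F_i'$, its Galois conjugates.

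The places of $\mathcal{R}_i$ lying over places of degree $e<d$ are those whose tuples lie in $\FF_{q^e}^i$ with $e\mid d$, $e\le d/2$. These are at most $\sum_{e\mid d,\,e<d}N(F_i^{(e)})$. Hasse--Weil (\cref{thm:hasse-weil}) bounds each term by $q^e+1+2g_iq^{e/2}$. Summing over $e\le d/2$ gives at most roughly $2q^{d/2}+4g_iq^{d/4}$, which is $o(r\,g_i)$ for $q^{d/2}\ge 4$ or so.

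Hence at least $r^i(r-1)-O(q^{d/2}+g_iq^{d/4})$ tuples have residue field $\FF_{q^d}$. Grouping them into Frobenius orbits of size $d$ gives
\[
|\mathcal{P}_d^{(i)}|\ \ge\ \frac{r^i(r-1)-O(q^{d/2}+g_iq^{d/4})}{d}.
\]
Using $g_i\le r^i$, this is at least $\frac{(r-1)g_i}{2d}$.

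\textbf{Main obstacle.} The difficulty is absorbing the error term for the smallest cases, e.g.\ $q=2$, $d=2$, $r=2$, and small $i$. The factor $2$ in the target bound provides the slack. We use $r^i(r-1)\ge (r-1)g_i$ and need the loss to be at most half of this. For the subfield contribution, I would first try the direct count instead of Hasse--Weil: a tuple lies in $\FF_{q^e}^i$ only if $\alpha_1\in\FF_{q^e}$ and each subsequent coordinate has at most $r$ choices. This gives at most $q^{d/2}r^{i-1}\le r^i$ such tuples. That is at most $r^i(r-1)/2$ once $r\ge3$. The case $r=2$ needs a separate check, or uses the exact genus formula $g_i<r^i$ for extra room.

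\textbf{Explicitness.} Representatives are enumerated by walking the tree of tuples over $\FF_{q^d}$ and keeping one tuple per Frobenius orbit; the residue field of a tuple is tested by checking $\alpha_j^{q^e}=\alpha_j$. For the bases, apply Shum et al.\ over $\FF_{q^d}$. Their basis of $\mathcal{L}(\mu P_\infty)$ has coefficients $c_{\boldsymbol\nu}$ found by linear algebra on the monomial expression, which is defined over $\FF_q$. Since $\mathcal{L}$ commutes with constant-field extension, solving the same system over $\FF_q$ gives an $\FF_q$-basis in $\poly(q,d,g_i,\mu)$ time. Evaluation at a degree-$d$ place is then substitution of its tuple, and the denominators $\pi_j$ are nonzero there as in \cref{subsec:gs-tower-background}.
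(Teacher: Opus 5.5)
Your overall plan matches the paper's: extend constants to $\FF_{q^d}=\FF_{r^2}$, take the $r^i(r-1)$ split tuples of the Garcia--Stichtenoth tower, discard the tuples lying in a proper subfield, and group the rest into $q$-Frobenius orbits of size $d$. The gap is in the discarding step.

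\textbf{The case $(q,d)=(2,2)$ is not proved.} Your crude estimate (number of $\alpha_1$ in a proper subfield, times $r$ choices for each later coordinate) loses an amount of order $r^i$. The factor-$2$ slack absorbs this only for $r\ge3$. For $(q,d)=(2,2)$, your bound $q^{d/2}r^{i-1}=2^i$ equals the main term $r^i(r-1)=2^i$, so the argument yields no places at all. Your fallback via the exact genus formula cannot help, because $g_i\ge1$ for all $i\ge2$.

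The missing observation is the one the paper uses. If $\alpha\in\FF_{q^b}$ with $d/b$ even, then $\alpha^r=\alpha$, so the condition $\alpha_1^r+\alpha_1\neq0$ becomes $2\alpha_1\neq0$, which fails in characteristic $2$. Hence in characteristic $2$ no split tuple has $\alpha_1$ in such a subfield. For $(2,2)$ nothing is discarded, and $2B_2(F_i)\ge2^i\ge g_i/2$.

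The paper pushes this further to get exact subfield counts. In odd characteristic, such tuples are forced to satisfy $x_{j+1}=x_j/4$. For $d/b$ odd, the tower equation becomes a trace/norm relation for $\FF_{h^2}/\FF_h$ with $h=q^{b/2}$, leaving only $h$ choices per step. This gives a total error below $2r+2r^{(i+1)/3}$, rather than order $r^i$.

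\textbf{Smaller points.}
\begin{itemize}
\item Your bound $q^{d/2}r^{i-1}$ is wrong when $d$ has an odd prime factor, since the union of the proper subfields of $\FF_{q^d}$ then has more than $q^{d/2}$ elements. The correct crude bound is $\sum_{b\mid d,\,b<d}q^b\cdot r^{i-1}<2r^i$. This is harmless there, because $r\ge8$ in that case.
\item For the Riemann--Roch bases you take a different route from the paper. Rerunning the Shum et al.\ linear algebra over $\FF_q$ presupposes that their output is the solution space of a system with $\FF_q$-coefficients. The paper avoids relying on the algorithm's internals. It takes $\Tr_{\widehat F_i/F_i}(\beta_af_j)$ for an arbitrary $\FF_{q^d}$-basis $(f_j)$. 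These traces span $\mathcal{L}_{F_i}(\mu P_\infty)$ because $\mathcal{L}_{\widehat F_i}(\mu\widehat P_\infty)=\FF_{q^d}\cdot\mathcal{L}_{F_i}(\mu P_\infty)$ and the trace is surjective. It then extracts a basis by Gaussian elimination. Your version can be justified in the same spirit: the space is stable under Frobenius on constants, so its reduced row-echelon basis, relative to spanning functions defined over $\FF_q$, has entries in $\FF_q$. As written, however, it is an assumption.
\item In Step 1, the passage from the degree equality to ``$\FF_q$ is the full constant field'' is left implicit. The cleanest justification is a rational place of $F_i$. Your totally ramified $P_\infty$ works, since it has residue degree $1$ over the pole of $x_1$; the paper uses the tuple $(0,\ldots,0)$.
\end{itemize}
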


The proof of \cref{cor:degree-d-places-gs-tower} is given in
\cref{app:function-fields-missing-proofs}.

\begin{theorem}[Strong subspace designs over arbitrary finite fields]
\label{thm:arbitrary-field-strong-subspace-designs}
Let $q$ be a prime power, let $\ell\geq2$ be even, and let $k\geq2$.
For every $1\leq r\leq k$, there exists an explicit collection
$\mathcal{H}$ of $n$ subspaces of $\FF_q^k$, each of codimension at most
$r$, such that, for every $1\leq s\leq r$, the collection is an
$(s,A_s)$-strong subspace design, where
\[
    A_s=
      C_\ell(q)sn, \qquad C_\ell(q)=\frac{1}{q}+\frac{12(q-1)}{q(q^{\ell/2}-1)}.
\]
Moreover,
\[
    n\leq q^{\ell-1}\left\lceil
      \frac{q^{\ell/2}-1}{2\ell}q^\ell k
    \right\rceil.
\]
\end{theorem}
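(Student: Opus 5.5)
The plan is to instantiate \cref{prop:affine-evaluation-agpv} with the $\ell$-descended Garcia--Stichtenoth tower of \cref{cor:degree-d-places-gs-tower}, taking $d=\ell$ there, so that $q^{\ell}$ is a square. The ambient dimension is first enlarged from $k$ to the genus $k'$ of a suitable level of the tower, and the resulting design is then restricted back to $\FF_q^k$.

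\textbf{Step 1: choosing the level and the dimension.} Let $i$ be the smallest index with $g_i\geq\max(k,2)$. Since the genera grow by a factor of roughly $q^{\ell/2}$ per level (by the formulas of \cref{thm:gs-tower-parameters} over $\FF_{q^\ell}$), we get $g_i\leq q^{\ell}k$; this needs a short check of the ratio $g_{i+1}/g_i$. Set $k'=g_i$, so that $g=k'$ and $d=k'+g-1=2g-1$. By \cref{fact:Bertrand} pick a prime $m$ with $2g-1<m<2(2g-1)$. Then \cref{prop:explicit-prime-degree-place} gives an explicit place $Q$ of degree $m$, and the framework of \cref{subsec:construction-framework} applies with $V=\mathcal{L}(dP_\infty)\cong\FF_q^{k'}$. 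From \cref{cor:degree-d-places-gs-tower} take a set $\mathcal{P}_\ell$ of exactly
\[
    N=\left\lceil\frac{q^{\ell/2}-1}{2\ell}\,k'\right\rceil
\]
degree-$\ell$ places; this is possible because at least $\frac{q^{\ell/2}-1}{2\ell}g_i$ such places are available. Then $n=q^{\ell-1}N$, and since $k'\leq q^\ell k$ this gives the claimed bound on $n$. Explicitness follows from \cref{cor:degree-d-places-gs-tower} and \cref{prop:explicit-prime-degree-place}, together with linear algebra over $\FF_q$.

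\textbf{Step 2: the intersection bound.} By \cref{prop:affine-evaluation-agpv}, and using $b_s\leq D_s/\ell$ together with $D_s\leq s\widetilde d<6sk'$ (as in the proof of \cref{prop:strong-subspace-design}), we get
\[
    \frac{A_s}{sn}\leq\frac1q+\frac{(q-1)D_s}{q\ell sN}
    <\frac1q+\frac{6(q-1)k'}{q\ell N}
    \leq\frac1q+\frac{12(q-1)}{q(q^{\ell/2}-1)},
\]
where the last inequality uses $N\geq (q^{\ell/2}-1)k'/(2\ell)$. This is exactly $C_\ell(q)$.

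\textbf{Step 3: restricting back to dimension $k$.} Fix any $k$-dimensional subspace $U\leq V$ and identify it with $\FF_q^k$. Replace each $H$ by $H\cap U$; its codimension in $U$ is at most its codimension in $V$, hence at most $r$. For $W\leq U$ we have $W\cap(H\cap U)=W\cap H$, so the $(s,A_s)$ bounds carry over unchanged with the same $n$.

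\textbf{Expected obstacle.} The main point to verify carefully is the parameter bookkeeping in Step 1. This means confirming the genus growth $g_{i+1}\leq q^{\ell}g_i$, starting at a level with $g\geq2$ so that \cref{prop:explicit-prime-degree-place} applies, and ensuring the condition $g\leq k'$ from the framework; this last condition is automatic with $k'=g$. The rounding in $N$ and the constant $6$ from $\widetilde d<6k'$ then yield the stated $C_\ell(q)$.
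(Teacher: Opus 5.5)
Your proposal is correct and follows the paper's proof essentially step for step. The paper likewise takes the smallest tower genus $\widetilde{k}\geq k$, sets $d=2\widetilde{k}-1$ with a Bertrand prime $m$, uses $N=\lceil\frac{q^{\ell/2}-1}{2\ell}\widetilde{k}\rceil$ degree-$\ell$ places, bounds $A_s/(sn)$ via $d+(s-1)\widetilde d<6s\widetilde{k}$, and restricts to a $k$-dimensional subspace $U$. The genus-growth check you flag is exactly the paper's remark that consecutive positive genera grow by a factor of at most $q^{\ell/2}+1\leq q^{\ell}$, and the first positive genus $(q^{\ell/2}-1)^2$ is already below $q^{\ell}k$.
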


\begin{proof}
Let $\widetilde{k}$ be the smallest genus in the $\ell$-descended
Garcia--Stichtenoth tower that is at least $k$, and let
$F/\FF_q$ be the corresponding function field. The genus formula in
\cref{thm:gs-tower-parameters} shows that consecutive positive genera
grow by a factor of at most $q^{\ell/2}+1\leq q^\ell$. Hence
\[
    k\leq\widetilde{k}<q^\ell k.
\]
Let $P_\infty$ be the unique pole of $x_1$ in $F$, and set
$
    d=2\widetilde{k}-1.
$
By \cref{fact:Bertrand}, there is a prime number $m$ such that $d<m<2d$, and we can
find such a prime deterministically by enumeration. Applying
\cref{prop:explicit-prime-degree-place} to the $\ell$-descended tower gives
an explicit place $Q$ of degree $m$.
Set
\[
    N=\left\lceil
      \frac{q^{\ell/2}-1}{2\ell}\widetilde{k}
    \right\rceil.
\]
By \cref{cor:degree-d-places-gs-tower}, we may choose a set
$\mathcal{P}_{\ell}$ of $N$ degree-$\ell$ places of $F$. Applying
\cref{prop:affine-evaluation-agpv} gives an explicit collection
$\widetilde{\mathcal H} = \mathcal{H}(F, \mathcal{P}_\ell, P_\infty, Q, \widetilde{k}, r, \ell)$ of $n=q^{\ell-1}N$ subspaces in $\FF_q^{\widetilde{k}}$, each of codimension at most $r$,
which form an $(s, A_s)$-subspace design for every $1 \leq s \leq r$.

Fix some $k$-dimensional subspace $U\leq V$, identify it with
$\FF_q^k$, and set
\[
    \mathcal H:=(H\cap U)_{H\in\widetilde{\mathcal H}}.
\]
For every $H\in\widetilde{\mathcal H}$ and $W\leq U$, we have
\[
    W\cap(H\cap U)=W\cap H,\qquad
    \operatorname{codim}_U(H\cap U)\leq\operatorname{codim}_V H\leq r.
\]
Thus, $\mathcal{H}$ is a collection of $n$ subspace in $\FF_q^k$, each of codimension at most $r$,
which performs an $(s,A_s)$-subspace design.
Moreover, since $m<2d$, we have
$d+(s-1)\widetilde d<6s\widetilde{k}$. Thus, for every $s$-dimensional
subspace $W\leq U$ we obtain,
\[
    \frac{1}{n}\sum_{H\in\mathcal H}\dim_{\FF_q}(W\cap H)
    <\left(
      \frac{1}{q}+\frac{6(q-1)\widetilde{k}}{q\ell N}
    \right)s
    \leq C_\ell(q)s.
\]
This proves the asserted value of $A_s$. The bound on $n$ follows from
$\widetilde{k}<q^\ell k$.

Finally, the construction is explicit by
\cref{prop:explicit-prime-degree-place,cor:degree-d-places-gs-tower}.
The former gives the place $Q$, an explicit representation of its residue
field, and the residue map at $Q$. The latter gives bases of the required
Riemann--Roch spaces on the descended tower, as well as the degree-$\ell$
places and the evaluations at these places. Using these, we compute
the right inverse $I$, the lifted Frobenius maps $\tau_j$, and the kernels
defining the subspaces in $\widetilde{\mathcal H}$ by linear algebra over
$\FF_q$. The subspace $U$ and the intersections defining $\mathcal H$ are
obtained from $\widetilde{\mathcal{H}}$ by linear algebra.
\end{proof}

The following corollary instantiates the preceding theorem so that the
normalized intersection bound is within an arbitrary additive $\gamma$ of
$1/q$.

\begin{corollary}[Strong subspace designs over arbitrary finite fields]
\label{cor:all-field-strong-subspace-designs}
Let $q$ be a prime power, let $0 < \gamma < 1$, and let $k\geq2$. For every
$1\leq r\leq k$,
there exists an explicit collection $\mathcal{H}$ of $n$ subspaces of
$\FF_q^k$, each of codimension at most $r$, satisfying
\[
    n\leq q\left(1+\frac{12}{\gamma}\right)^2
      \left\lceil
        \frac{3q^2(1+12/\gamma)^2k}{\gamma}
      \right\rceil
    =O\left(q^3k/\gamma^{5}\right).
\]
For every $1\leq s\leq r$, the collection is an
$(s,A_s)$-strong subspace design, where
\[
    A_s=\left(\frac{1}{q}+\gamma\right)sn.
\]
In particular, every nonzero subspace $W\leq\FF_q^k$ of dimension at most
$r$ satisfies
\[
    \frac{1}{n}\sum_{H\in\mathcal{H}}
    \dim_{\FF_q}(W\cap H)
    <\left(\frac{1}{q}+\gamma\right)\dim_{\FF_q}W.
\]
\end{corollary}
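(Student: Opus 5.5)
The plan is to apply \cref{thm:arbitrary-field-strong-subspace-designs} with a suitable even $\ell$. Then $A_s=C_\ell(q)sn$. It therefore suffices to choose the smallest even $\ell\geq2$ for which
\[
    \frac{12(q-1)}{q(q^{\ell/2}-1)}\leq\gamma .
\]
For that $\ell$ we get $C_\ell(q)\leq 1/q+\gamma$. The last assertion then follows directly from the strong design property divided by $n$. Strictness comes from the strict inequality already present in the proof of \cref{thm:arbitrary-field-strong-subspace-designs}: there the normalized bound was $<(1/q+6(q-1)\widetilde k/(q\ell N))s$. Alternatively, we can pick $\ell$ so that the bound holds strictly.

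\textbf{Choosing $\ell$.} Since $12(q-1)/q<12$, it is enough to have $q^{\ell/2}-1\geq 12/\gamma$, i.e.\ $q^{\ell/2}\geq 1+12/\gamma$. Let $\ell$ be the smallest even integer with this property. If $\ell=2$, then $q^{\ell/2}=q$. Otherwise minimality gives $q^{\ell/2-1}<1+12/\gamma$, so $q^{\ell/2}<q(1+12/\gamma)$. In both cases
\[
    q^{\ell/2}\leq q\left(1+\tfrac{12}{\gamma}\right),
    \qquad
    q^{\ell}\leq q^2\left(1+\tfrac{12}{\gamma}\right)^2 .
\]

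\textbf{Bounding $n$.} Substitute into the size bound of \cref{thm:arbitrary-field-strong-subspace-designs}:
\[
    n\leq q^{\ell-1}\Bigl\lceil\tfrac{q^{\ell/2}-1}{2\ell}q^{\ell}k\Bigr\rceil .
\]
The prefactor satisfies $q^{\ell-1}\leq q(1+12/\gamma)^2$. Inside the ceiling we need the factor $(q^{\ell/2}-1)/(2\ell)$ to be at most $3/\gamma$, together with $q^{\ell}\leq q^2(1+12/\gamma)^2$. The main obstacle is exactly this ratio $(q^{\ell/2}-1)/(2\ell)\lesssim 1/\gamma$. Using $q^{\ell/2}\leq q(1+12/\gamma)$ alone gives roughly $q/\gamma$, which is too large by a factor of $q$. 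I would handle it as follows:
\begin{itemize}
\item When $\ell>2$, minimality gives $q^{\ell/2-1}-1<12/\gamma$, so $q^{\ell/2}-1< q\cdot 12/\gamma+q-1$. Dividing by $2\ell\geq 8$ then has to be traded against the $q$ factor. I expect this to work only roughly, so I would instead recheck whether the extra $q$ can be absorbed into the $q^2$ already in the bound, using the slack in $q^{\ell}$ versus $q^{2}(1+12/\gamma)^2$.
\item When $\ell=2$, we have $q\geq1+12/\gamma$, so $(q-1)/4\leq q/4$. The needed bound is then obtained by absorbing the factor into the $(1+12/\gamma)^2$ estimate, using $q/4\leq q^{\ell}/(4q)$.
\end{itemize}
If the precise constant $3$ is off, the asymptotic form $n=O(q^3k/\gamma^5)$ still follows. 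The total is $q\cdot\gamma^{-2}\cdot q^2\gamma^{-2}\cdot\gamma^{-1}=q^3/\gamma^5$, with the remaining $q$ factors tracked through $q^{\ell}\lesssim q^2/\gamma^2$.

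Explicitness is inherited from \cref{thm:arbitrary-field-strong-subspace-designs}. The running time is polynomial because $\ell=O(\log_q(1/\gamma))$, so $q^{O(\ell)}=\poly(q,1/\gamma)$.
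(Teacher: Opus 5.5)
Your choice of $\ell$ coincides with the paper's, $\ell=2\lceil\log_q(1+12/\gamma)\rceil$. The intersection bound and its strictness are also fine, since $\frac{12(q-1)}{q(q^{\ell/2}-1)}\leq\frac{q-1}{q}\gamma<\gamma$.

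The gap is in the bound on $n$, which cannot be obtained by applying \cref{thm:arbitrary-field-strong-subspace-designs} as a black box. That theorem, and the collection built in its proof, uses all $\lceil\frac{q^{\ell/2}-1}{2\ell}\widetilde k\rceil$ available degree-$\ell$ places. But $q^{\ell/2}-1$ may exceed $12/\gamma$ by a factor close to $q$.

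Concretely, take $\gamma=1/2$ and $q\geq25$, so that $\ell=2$. For $k=2$ the tower genus is $\widetilde k=(q-1)^2$, so the theorem's collection has $n=q\lceil(q-1)^3/4\rceil=\Theta(q^4)$ members. This exceeds the claimed bound, which is $O(q^3)$ here, once $q$ is large. Hence neither of your two ``absorption'' arguments can succeed.

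Your fallback claim that $n=O(q^3k/\gamma^5)$ survives is also false. The product $q\cdot\gamma^{-2}\cdot q^2\gamma^{-2}\cdot\gamma^{-1}$ silently assumes $(q^{\ell/2}-1)/(2\ell)=O(1/\gamma)$, which is exactly the step that fails. This route only yields $O(q^4k/\gamma^5)$.

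The missing idea is to use fewer places. \Cref{prop:affine-evaluation-agpv} holds for an arbitrary set of $N$ degree-$\ell$ places, and with $n=q^{\ell-1}N$ it gives
\[
\frac1n\sum_{H}\dim_{\FF_q}(W\cap H)<\Bigl(\frac1q+\frac{6(q-1)\widetilde k}{q\ell N}\Bigr)s.
\]
The paper therefore reruns the construction from the proof of \cref{thm:arbitrary-field-strong-subspace-designs} with only $N=\lceil 6\widetilde k/(\ell\gamma)\rceil$ places. This is legitimate because $q^{\ell/2}-1\geq12/\gamma$ gives $N\leq\lceil\frac{q^{\ell/2}-1}{2\ell}\widetilde k\rceil$.

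This choice of $N$ keeps the error term at most $\frac{q-1}{q}\gamma$. It also gives $n=q^{\ell-1}N\leq q(1+12/\gamma)^2\lceil3\widetilde k/\gamma\rceil$, using $\ell\geq2$. Together with $\widetilde k<q^\ell k\leq q^2(1+12/\gamma)^2k$, this is exactly the stated bound. In short, the number of places must be tuned to $\gamma$, not to the possibly much larger quantity $q^{\ell/2}$.
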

\begin{proof}
Set
\[
    \ell=2\left\lceil
      \log_q\left(1+\frac{12}{\gamma}\right)
    \right\rceil.
\]
Let $\widetilde{k}$ be the tower genus chosen in the proof of
\cref{thm:arbitrary-field-strong-subspace-designs}, and set
\[
    N=\left\lceil\frac{6\widetilde{k}}{\ell\gamma}\right\rceil.
\]
Then $\ell$ is even and
\[
    q^{\ell/2}-1\geq\frac{12}{\gamma},
\]
and hence $C_\ell(q)-1/q<\gamma$. Moreover,
\[
    N\leq\left\lceil
      \frac{q^{\ell/2}-1}{2\ell}\widetilde{k}
    \right\rceil.
\]
Applying the construction in the proof of
\cref{thm:arbitrary-field-strong-subspace-designs} with this choice of
$N$ gives a collection of size $n=q^{\ell-1}N$. If
$\dim_{\FF_q}W=s\leq r$, then
\begin{align*}
    \frac{1}{n}\sum_{H\in\mathcal{H}}
      \dim_{\FF_q}(W\cap H)
    &<\left(
      \frac{1}{q}+\frac{6(q-1)\widetilde{k}}{q\ell N}
    \right)s \\
    &\leq\left(
      \frac{1}{q}+\frac{q-1}{q}\gamma
    \right)s
    <\left(\frac{1}{q}+\gamma\right)s.
\end{align*}
This proves the asserted value of $A_s$.

Finally,
\[
    q^{\ell-1}\leq q\left(1+\frac{12}{\gamma}\right)^2,
    \qquad
    N\leq\left\lceil
      \frac{3q^2(1+12/\gamma)^2k}{\gamma}
    \right\rceil,
\]
where the second inequality follows from
$\widetilde{k}<q^\ell k\leq q^2(1+12/\gamma)^2k$. This gives the claimed
bound on $n$.
\end{proof}

\subsection{Rank condensers and extractors}
\label{subsec:rank-condensers-extractors}

We now record the consequences for lossy rank condensers and lossless rank extractors.

\begin{corollary}[Lossy rank condensers over arbitrary finite fields]
\label{cor:all-field-lossy-rank-condensers}
Let $\varepsilon,\delta\in(0,1]$, and let $q$ be a prime power satisfying
\[
    q>\frac{1}{\varepsilon\delta}.
\]
For every integer $k\geq2$ and every $1\leq r\leq k$, there exists an explicit
$(r,\varepsilon,\delta)$-lossy rank condenser
$\mathcal{E}=(E_i)_{i\in[n]}\subseteq\FF_q^{r\times k}$ of size
\[
    n=O\left(
      \frac{q^3k}
      {\bigl(\varepsilon\delta-1/q\bigr)^5}
    \right).
\]
\end{corollary}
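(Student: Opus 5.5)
The plan is to combine \cref{cor:all-field-strong-subspace-designs} with the reduction \cref{prop:strong-design-to-lossy-condenser}. Set $\gamma=\varepsilon\delta-1/q$. Since $q>1/(\varepsilon\delta)$, we have $\gamma>0$, and since $\varepsilon\delta\leq1$, also $\gamma<1$. So \cref{cor:all-field-strong-subspace-designs} applies. For $k\geq2$ and $1\leq r\leq k$, it yields an explicit collection $\mathcal{H}=(H_i)_{i\in[n]}$ of subspaces of $\FF_q^k$ with the following properties:
\begin{itemize}
\item each $H_i$ has codimension at most $r$;
\item $n=O(q^3k/\gamma^5)=O\bigl(q^3k/(\varepsilon\delta-1/q)^5\bigr)$;
\item every $r$-dimensional $W$ satisfies $\sum_i\dim(W\cap H_i)<(1/q+\gamma)rn=\varepsilon\delta rn$.
\end{itemize}
For each $i$, we take $E_i\in\FF_q^{r\times k}$ with $\ker E_i=H_i$. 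Such a matrix exists because $\operatorname{codim}H_i\leq r$: choose a basis of the annihilator of $H_i$ and pad with zero rows. It is computable by linear algebra, so explicitness is inherited.

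Next we verify the hypothesis of \cref{prop:strong-design-to-lossy-condenser} with $t=r$. The point to watch is the strict inequality. The corollary gives a \emph{strict} bound $\sum_i\dim(W\cap H_i)<\varepsilon\delta rn$ for each $W$. Since there are finitely many $W$, the maximum $A$ of these sums is an integer with $A<\varepsilon\delta rn$. We also have
\[
\varepsilon\delta rn\leq(\lfloor\delta n\rfloor+1)(\lfloor\varepsilon r\rfloor+1),
\]
because $x<\lfloor x\rfloor+1$ for $x=\delta n$ and $x=\varepsilon r$. Hence $A<(\lfloor\delta n\rfloor+1)(\lfloor\varepsilon r\rfloor+1)$.

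Alternatively, rather than using the proposition as a black box, I can rerun its counting argument directly. Every bad index $i$ has $\dim(W\cap H_i)>\varepsilon r$, so
\[
|\mathcal{B}|\cdot\varepsilon r<\sum_i\dim(W\cap H_i)<\varepsilon\delta rn,
\]
which gives $|\mathcal{B}|<\delta n$.

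Either way, \cref{prop:strong-design-to-lossy-condenser} (or this direct count) shows that $\mathcal{E}=(E_i)$ is an $(r,\varepsilon,\delta)$-lossy rank condenser of the stated size. The only mildly delicate step is the strictness and integrality bookkeeping in the previous paragraph. Everything else is substitution of $\gamma=\varepsilon\delta-1/q$ into the size bound.
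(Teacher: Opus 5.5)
Your proposal is correct and matches the paper's proof: the paper also sets $\gamma=\varepsilon\delta-1/q$ and re-runs the construction behind \cref{cor:all-field-strong-subspace-designs} (same $\ell$ and $N$) to get an $(r,A_r)$-strong design with $A_r<\varepsilon\delta rn<(\lfloor\varepsilon r\rfloor+1)(\lfloor\delta n\rfloor+1)$, then concludes via \cref{prop:strong-design-to-lossy-condenser}; citing the corollary directly, as you do, is a cleaner packaging of the same argument. One cosmetic point: in your direct count, the first inequality $|\mathcal{B}|\cdot\varepsilon r<\sum_i\dim(W\cap H_i)$ should be $\leq$, since it fails as stated when $\mathcal{B}=\emptyset$ and the sum is $0$, but the strict second inequality still gives $|\mathcal{B}|<\delta n$.
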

\begin{proof}
Set
\[
    \gamma=\varepsilon\delta-\frac{1}{q},
    \qquad
    \ell=2\left\lceil
      \log_q\left(1+\frac{12}{\gamma}\right)
    \right\rceil.
\]
Let $\widetilde{k}$ be the tower genus chosen in the proof of
\cref{thm:arbitrary-field-strong-subspace-designs}, and set
\[
    N=\left\lceil\frac{6\widetilde{k}}{\ell\gamma}\right\rceil.
\]
The choice of $\ell$ gives
\[
    q^{\ell/2}-1\geq\frac{12}{\gamma},
\]
and hence
\[
    N\leq
    \left\lceil
      \frac{q^{\ell/2}-1}{2\ell}\widetilde{k}
    \right\rceil.
\]
We may therefore apply the construction in the proof of
\cref{thm:arbitrary-field-strong-subspace-designs} with this choice of
$N$. By \cref{prop:affine-evaluation-agpv}, the resulting collection is
an $(r,A_r)$-strong subspace design of size $n=q^{\ell-1}N$, where
\[
    \frac{A_r}{rn}
    <\frac{1}{q}+\frac{6(q-1)\widetilde{k}}{q\ell N}
    \leq\frac{1}{q}+\frac{q-1}{q}\gamma
    <\varepsilon\delta.
\]
Consequently,
\[
    A_r<\varepsilon\delta rn
    <\bigl(\lfloor\varepsilon r\rfloor+1\bigr)
      \bigl(\lfloor\delta n\rfloor+1\bigr),
\]
and the claim follows from
\cref{prop:strong-design-to-lossy-condenser}.

Finally, the definition of $\ell$ gives
$q^{\ell-1}=O(q/\gamma^2)$ and $q^\ell=O(q^2/\gamma^2)$, while
$N=O(\widetilde{k}/\gamma)=O(q^2k/\gamma^3)$. This proves the stated
bound on $n$, since $\gamma=\varepsilon\delta-1/q$.
\end{proof}

Taking the rank loss in the preceding corollary to be smaller than $1/r$
gives lossless rank extractors.

\begin{corollary}[Explicit lossless rank extractors]
\label{cor:all-field-lossless-rank-extractors}
Let $1 \leq r \leq k$ be integers, and let $0 < \delta < 1$. For every prime power
$q>r/\delta$, there exists an explicit
$(r,\delta)$-lossless rank extractor $\mathcal{E}=(E_i)_{i\in[n]}\subseteq\FF_q^{r\times k}$ of size
\[
    n=O\left(
      \frac{q^3r^5k}{(\delta-r/q)^5}
    \right).
\]
In particular, for all integers $1\leq r \leq k$ and every prime power $q>r$, there exists an explicit $r$-lossless rank disperser over
$\FF_q$ of size $O(r^5kq^8)$.
\end{corollary}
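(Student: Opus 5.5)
The plan is to derive this from \cref{cor:all-field-lossy-rank-condensers} with loss parameter $\varepsilon$ chosen so small that any nonzero rank loss is already forbidden. A lossless rank extractor fails at index $i$ exactly when $\operatorname{rank}(E_iM)\leq r-1$. This is the same as $\operatorname{rank}(E_iM)<(1-\varepsilon)r$ as soon as $(1-\varepsilon)r>r-1$, that is, $\varepsilon<1/r$. So I will take $\varepsilon=1/r$ exactly and argue through \cref{prop:strong-design-to-lossy-condenser}. With $\varepsilon=1/r$ we have $\lfloor\varepsilon r\rfloor+1=2$, which is weaker than needed. I therefore rerun that counting argument directly: a bad index has $\dim(W\cap H_i)\geq1$, so the number of bad indices is at most $A_r$. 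It then suffices to have $A_r<\lfloor\delta n\rfloor+1$, and $A_r<\delta n$ suffices.

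Concretely, I set $\gamma=(\delta-r/q)/r=\delta/r-1/q$. This is positive because $q>r/\delta$. I then invoke \cref{cor:all-field-strong-subspace-designs}, or more precisely the construction in the proof of \cref{thm:arbitrary-field-strong-subspace-designs} with $\ell=2\lceil\log_q(1+12/\gamma)\rceil$ and $N=\lceil 6\widetilde k/(\ell\gamma)\rceil$, exactly as in the proof of \cref{cor:all-field-lossy-rank-condensers}. This gives the strict bound
\[
A_r<\Bigl(\frac1q+\gamma\Bigr)rn=\delta n .
\]
Taking $E_i$ with $\ker E_i=H_i$ and codimension at most $r$, padded with zero rows if needed to lie in $\FF_q^{r\times k}$, gives $\operatorname{rank}(E_iM)=r-\dim(W\cap H_i)$. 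Hence at most $\delta n$ indices fail. The size is $n=O(q^3k/\gamma^5)=O(q^3r^5k/(\delta-r/q)^5)$.

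For the disperser statement, given $q>r$, I choose $\delta$ with $r/q<\delta<1$, say $\delta=(r+1)/(q+1)$ or, more simply, $\delta=r/q+1/(2q)$. Then $\delta-r/q=\Theta(1/q)$, and $\delta<1$ holds since $q\geq r+1$. The size becomes $O(q^3r^5k\cdot q^5)=O(r^5kq^8)$. Since $\delta<1$, fewer than $n$ indices fail, so some $E_i$ has full rank $r$ on every rank-$r$ matrix $M$.

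The only delicate point is the strictness in the threshold. I need the fraction of failing indices to be at most $\delta$, and to be strictly less than $1$ for the disperser. The corollary's strict inequality $<(1/q+\gamma)\dim W$ provides exactly this, so no floor-function loss appears. The remaining work is routine bookkeeping of the $O(\cdot)$ constants through $q^{\ell-1}=O(q/\gamma^2)$ and $N=O(q^2k/\gamma^3)$.
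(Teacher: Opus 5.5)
Your argument is correct and is essentially the paper's: the paper picks $\varepsilon=\frac12\bigl(\frac1r+\frac1{q\delta}\bigr)<1/r$ and applies \cref{cor:all-field-lossy-rank-condensers}, whereas you unpack \cref{prop:strong-design-to-lossy-condenser} and directly count the indices with $\dim(W\cap H_i)\geq 1$. This lets you take $\gamma=\delta/r-1/q$, twice the paper's $\varepsilon\delta-1/q$, which only changes constants, and your disperser choice $\delta=(2r+1)/(2q)$ is exactly the paper's $\delta_0$. One small slip: the alternative $\delta=(r+1)/(q+1)$ gives $\delta-r/q=(q-r)/(q(q+1))$, which is of order $1/q^2$ when $q=r+1$, so that option does not yield $O(r^5kq^8)$; keep $\delta=r/q+1/(2q)$.
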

\begin{proof}
Let
\[
    \varepsilon
    =\frac{1}{2}\left(\frac{1}{r}+\frac{1}{q\delta}\right).
\]
Since $q>r/\delta$, we have $\varepsilon r<1$. Moreover,
\[
    \varepsilon\delta-\frac{1}{q}
    =\frac{1}{2}\left(\frac{\delta}{r}-\frac{1}{q}\right)
    =\frac{\delta-r/q}{2r}>0.
\]
Applying \cref{cor:all-field-lossy-rank-condensers} with loss parameter
$\varepsilon$ and bad-map fraction $\delta$ gives an explicit collection of
size
\[
    n=O\left(
      \frac{q^3r^5k}{(\delta-r/q)^5}
    \right).
\]
Every good map preserves rank at least
$(1-\varepsilon)r>r-1$, and hence preserves rank $r$. Thus, at most
$\lfloor\delta n\rfloor\leq\delta n$ maps fail to preserve the
rank of each rank-$r$ input.

For the final assertion, assume that $r<q$ and set
$\delta_0=(2r+1)/(2q)$. Then $r/q<\delta_0<1$, and the preceding
bound becomes $n=O(r^5kq^8)$. Since $\delta_0<1$, at least one map
preserves the rank of each rank-$r$ input, so the collection is an
$r$-lossless rank disperser.
\end{proof}

\subsection{Strong blocking sets over arbitrary finite fields}
\label{subsec:strong-blocking-sets}

Fancsali and Sziklai \cite{FS14lines, FS16lines} gave a reduction from
lossless rank dispersers to strong blocking sets
(stated formally in \cite[Lemma~6.1]{GRSZ26}). Hence, our results imply new strong blocking sets.
We begin with the definitions.

\begin{definition}[Projective space and projective subspace]
The projective space $\operatorname{PG}(k-1,q)$ is the set of
one-dimensional subspaces of $\FF_q^k$. For every nonzero subspace
$U\leq\FF_q^k$, the set
\[
    \operatorname{PG}(U)
    =\{L\leq U:\dim_{\FF_q}(L)=1\}
\]
is a \emph{projective subspace} of $\operatorname{PG}(k-1,q)$. Its dimension
is $\dim_{\FF_q}(U)-1$, and its codimension in
$\operatorname{PG}(k-1,q)$ is $k-\dim_{\FF_q}(U)$. For a nonempty set
$S\subseteq\operatorname{PG}(k-1,q)$, its \emph{projective span} is the
smallest projective subspace containing $S$.
\end{definition}

\begin{definition}[Strong blocking set]
Let $1\leq s\leq k-1$. A set
$B\subseteq\operatorname{PG}(k-1,q)$ is a \emph{strong $s$-blocking set} if,
for every codimension-$s$ projective subspace
$\Sigma\subseteq\operatorname{PG}(k-1,q)$, the intersection
$B\cap\Sigma$ spans $\Sigma$; that is, $\Sigma$ is the smallest projective
subspace containing $B\cap\Sigma$.
\end{definition}
We next describe the above reduction.
Let $\mathcal{E}=(E_i)_{i\in[M]}\subseteq\FF_q^{(s+1)\times k}$ be an
$(s+1)$-lossless rank disperser, and set
$V_i=\operatorname{rowspan}(E_i)$. Then
\[
    B=\bigcup_{i\in[M]}\operatorname{PG}(V_i)
\]
is a strong $s$-blocking set in $\operatorname{PG}(k-1,q)$ and
\begin{equation}
\label{eq:rank-disperser-to-blocking-set}
    |B|\leq M\frac{q^{s+1}-1}{q-1}
    \leq2Mq^s.
\end{equation}
Using our lossless rank extractors, we obtain the following corollary.

\begin{corollary}[Strong blocking sets over arbitrary finite fields]
\label{cor:prime-field-strong-blocking-sets}
Let $q$ be a prime power, and let $s\geq1$ satisfy
\[
    q>s+1.
\]
For every integer $k\geq s+1$, there exists an explicit strong $s$-blocking set
$B\subseteq\operatorname{PG}(k-1,q)$ satisfying
\[
    |B|=O\bigl(s^5kq^{s+8}\bigr).
\]
\end{corollary}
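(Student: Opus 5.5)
The plan is to apply the Fancsali--Sziklai reduction \eqref{eq:rank-disperser-to-blocking-set} to the lossless rank disperser of \cref{cor:all-field-lossless-rank-extractors}, instantiated with rank parameter $r=s+1$.

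First, set $r=s+1$. The hypotheses $q>s+1=r$ and $k\geq s+1=r$ are exactly those of the final assertion of \cref{cor:all-field-lossless-rank-extractors}, so we get an explicit $(s+1)$-lossless rank disperser $\mathcal{E}=(E_i)_{i\in[M]}\subseteq\FF_q^{(s+1)\times k}$ of size $M=O\bigl((s+1)^5kq^8\bigr)=O(s^5kq^8)$, using $s\geq1$. By definition, for every rank-$(s+1)$ matrix $X\in\FF_q^{k\times(s+1)}$ some $E_i$ satisfies $\operatorname{rank}(E_iX)=s+1$. This is precisely the hypothesis of the reduction of \cite{FS14lines, FS16lines} (\cite[Lemma~6.1]{GRSZ26}).

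Second, set $V_i=\operatorname{rowspan}(E_i)$ and $B=\bigcup_i\operatorname{PG}(V_i)$. The reduction, which we take as given, states that $B$ is a strong $s$-blocking set in $\operatorname{PG}(k-1,q)$. Its size is bounded by \eqref{eq:rank-disperser-to-blocking-set}:
\[
|B|\leq M\frac{q^{s+1}-1}{q-1}\leq 2Mq^s=O\bigl(s^5kq^{s+8}\bigr).
\]

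Third, explicitness: each $E_i$ is computed in polynomial time by the corollary, and enumerating the projective points of each row space takes $O(q^{s+1})$ time per matrix, which is polynomial in the output size. The proof therefore has no real obstacle. The only checks are that $s+1\leq k$ and $q>s+1$ match the corollary's hypotheses, and that absorbing $(s+1)^5$ into $O(s^5)$ is valid.
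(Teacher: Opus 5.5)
Your proposal is correct and follows the paper's proof: set $r=s+1$, use the final assertion of the lossless rank extractor corollary to get an $(s+1)$-lossless rank disperser of size $M=O(r^5kq^8)$, and apply the Fancsali--Sziklai reduction with $|B|\leq 2Mq^s$. Your remarks on explicitness and on absorbing $(s+1)^5$ into $O(s^5)$ only spell out what the paper leaves implicit.
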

\begin{proof}
Set $r=s+1$. Since $q>r$,
\cref{cor:all-field-lossless-rank-extractors} gives an explicit
lossless rank disperser of size $M=O(r^5kq^8)$. Thus,
\cref{eq:rank-disperser-to-blocking-set} gives an explicit strong
$s$-blocking set satisfying
\[
    |B|\leq2Mq^s
    =O\bigl(r^5kq^{s+8}\bigr)
    =O\bigl(s^5kq^{s+8}\bigr).
\]
\end{proof}

\paragraph{Comparison with previous results.}
\Cref{tab:strong-blocking-set-comparison} compares our construction with
the results in \cite[Table~1]{GRSZ26}. Let $c_0>0$ be the sufficiently large absolute constant in
\cite[Corollary~6.2]{GRSZ26}. We first consider the
range $s+1<q<(2s)^{c_0}$. In this range, our bound
$O(s^5kq^{s+8})=kq^s s^{O(1)}$ improves all applicable previous bounds
in the table as $s\to\infty$. In particular, it replaces the exponential
factors appearing in \cite[Corollary~6.2(3), Proposition 6.8]{GRSZ26} and \cite[Theorem 17]{BT26} by a polynomial in $s$.
We next consider $q\geq(2s)^{c_0}$. For prime $q$, our
construction replaces the quasipolynomial factor
$(2s)^{O(\log(2s))}$ in \cite[Corollary~6.2(2)]{GRSZ26} by
$O(s^5q^8)$, which is polynomial in $s$ and $q$. Thus, whenever
$q=s^{O(1)}$, it improves the dependence on $s$ from quasipolynomial
to polynomial. For non-prime $q$ in this upper range,
\cite[Corollary~6.2(1)]{GRSZ26} retains the smaller bound
$O(s(k-s)q^{s+1})$ for every $k\geq s+1$, and the bound
$O(s(k-s)q^s)$ for infinitely many such $k$.
Finally, when $q\leq s+1$, our result does not apply, whereas
\cite[Corollary~6.2(3), Proposition~6.8]{GRSZ26} and
\cite[Theorem~17]{BT26} apply over every prime power.

\begin{table}[htbp]
\centering
\setlength{\belowcaptionskip}{6pt}
\caption{Explicit strong $s$-blocking sets in
$\operatorname{PG}(k-1,q)$, extending \cite[Table~1]{GRSZ26}.
The bounds shown hold for every $k\geq s+1$; the bounds in
\cite[Corollary~6.2]{GRSZ26} improve by a factor of $q$ for infinitely
many $k$.
Here $C>0$ is a sufficiently large absolute constant, $o_{qs}(1)$ tends to zero as $qs\to\infty$, and
$\mu_{s,q}=\max\{C\log(2s)/\log q,2\}$.}
\label{tab:strong-blocking-set-comparison}
\begingroup
\small
\setlength{\tabcolsep}{4pt}
\renewcommand{\arraystretch}{1.35}
\begin{tabular}{@{}>{\raggedright\arraybackslash}p{0.33\textwidth}
                  >{\raggedright\arraybackslash}p{0.27\textwidth}
                  >{\raggedright\arraybackslash}p{0.36\textwidth}@{}}
\toprule
Reference & Field size & Upper bound on $|B|$ \\
\midrule
\cite[Theorem~16]{BT26}
  & $q$ square, $q\geq(2s)^{Cs}$
  & $2^{O(s^2\log(2s))}kq^s$ \\
\cite[Theorem~17]{BT26}
  & Every prime power
  & $kq^{O(s^2)}$ \\
\cite[Corollary~6.2(1)]{GRSZ26}
  & $q$ non-prime, $q\geq(2s)^{C}$
  & $O(s(k-s)q^{s+1})$ \\
\cite[Corollary~6.2(2)]{GRSZ26}
  & $q$ prime, $q\geq(2s)^{C}$
  & $(2s)^{O(\log(2s))}(k-s)q^{s+1}$ \\
\cite[Corollary~6.2(3)]{GRSZ26}
  & Every prime power
  & $O(\mu_{s,q}^{s+1}s(k-s)q^{s+1})$ \\
\cite[Proposition~6.8]{GRSZ26}
  & Every prime power
  & $kq^{(2+o_{qs}(1))s}$ \\
\midrule
\Cref{cor:prime-field-strong-blocking-sets} 
  & $q>s+1$
  & $O(s^5kq^{s+8})$ \\
\bottomrule
\end{tabular}
\par\smallskip
\endgroup
\end{table}

\subsection{Condensers for affine sources}
\label{subsec:affine-source-condensers}

An affine $(n,k)$-source over $\FF_q$ is the uniform distribution on an
affine subspace of $\FF_q^n$ of dimension at least $k$. A seeded condenser for affine sources
maps such a source to a shorter one while retaining most of its
entropy. For a discrete random variable $Z$, its \emph{min-entropy}%
\footnote{For an affine source, min-entropy equals Shannon entropy because
the distribution is uniform on its support.}
is
\[
    H_\infty(Z)=-\log \left(\max_z\Pr[Z=z]\right).
\]
Seeded condensers for \emph{general} weak sources, introduced in \cite{RR99,RSW06}, are a standard tool in pseudorandomness. Here we
restrict attention to affine sources. Although \emph{deterministic} condensers exist
for affine sources, the seeded setting permits constructions that are linear
in the source for every fixed seed, a property used in several works.

\begin{definition}[seeded condenser for affine sources \cite{GR08a,DG26}]
\label{def:seeded-affine-condenser}
Let $q$ be a prime power, let $1\leq k\leq n$ and $m,D\geq1$ be
integers, and let
$0\leq k'\leq\min\{k,m\}$ and $\varepsilon\in[0,1]$.
We say that a map
\[
    \operatorname{Cond}\colon\FF_q^n\times[D]\longrightarrow\FF_q^m
\]
is a \emph{seeded affine $(k\to_{\varepsilon}k')$-condenser} if,
for every affine $(n,k)$-source $X$,
\[
    \bigl|\{y\in[D]:
      H_\infty(\operatorname{Cond}(X,y))<k'\log q\}\bigr|
    \leq\varepsilon D.
\]
Its seed length is $d=\log D$. The condenser is \emph{lossless} if
$k'=k$.
The condenser is \emph{linear} if for every fixed seed
$y \in [D]$, the map $\text{Cond}(\cdot, y):\FF_q^n \longrightarrow \FF_q^m$ is linear.\footnote{An alternative definition requires that, for every affine
$(n,k)$-source $X$ and an independent uniform seed $Y\in[D]$, the joint
distribution $(Y,\operatorname{Cond}(X,Y))$ be $\varepsilon$-close to a
distribution with min-entropy at least $d+k'\log q$. The definition above
implies this condition.
Conversely, for linear condensers, the alternative definition with error
$\varepsilon$ implies the definition above with error at most
$\varepsilon/(1-1/q)\leq2\varepsilon$ (\cite[Proposition 2.9]{Rao08}). Thus, for linear condensers, the two definitions are equivalent up to a factor of $2$ in
the error.}
\end{definition}

For $i\in[D]$, let $E_i\colon\FF_q^n\to\FF_q^m$ be a linear map with
$m\geq k$, and set $\operatorname{Cond}(x,i)=E_ix$. If $X$ is uniform
on $x_0+W$, then
\[
    H_\infty(E_iX)=\operatorname{rank}(E_i|_W)\log q.
\]
Thus, a $(k,\varepsilon,\delta)$-lossy rank condenser is a linear seeded affine
$(k\to_{\delta}(1-\varepsilon)k)$-condenser. This equivalence was recently 
made explicit, and refined, in \cite{DG26}. Our work thus readily implies the following.

\begin{corollary}[Condensers for affine sources]
\label{cor:affine-source-condensers}
Let $q$ be a prime power and let $\delta,\varepsilon\in(0,1)$ satisfy
$\delta\varepsilon>1/q$.
For every integer $n\geq1$ and every $1\leq k\leq n$, there exists an
explicit linear seeded affine
$(k\to_{\varepsilon}(1-\delta)k)$-condenser
$
    \operatorname{Cond}\colon\FF_q^n\times[D]\longrightarrow\FF_q^k,
$
which is linear in its first argument. Its seed length satisfies
\[
    d\leq\log n+3\log q
      +5\log \left(\frac{1}{\delta\varepsilon-1/q}\right)+O(1).
\]
For fixed $q,\delta,$ and $\varepsilon$, the construction has
$d=\log n+O_{q,\delta,\varepsilon}(1)$.
\end{corollary}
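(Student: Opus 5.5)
Note that the parameter names here are swapped relative to \cref{cor:all-field-lossy-rank-condensers}: in \cref{def:seeded-affine-condenser} the subscript $\varepsilon$ is the fraction of bad seeds, while the retained entropy is $(1-\delta)k$. So the corollary asks for a $(k,\delta,\varepsilon)$-lossy rank condenser, i.e.\ rank loss $\delta$ and bad fraction $\varepsilon$. The hypothesis $\delta\varepsilon>1/q$ is symmetric, so \cref{cor:all-field-lossy-rank-condensers} applies with its loss parameter set to $\delta$ and its bad-map fraction set to $\varepsilon$. It is invoked with ambient dimension $n$ in place of $k$ and with rank $r=k$.

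For the case $n\geq2$, this gives an explicit collection $(E_i)_{i\in[D]}\subseteq\FF_q^{k\times n}$. Every rank-$k$ matrix $M\in\FF_q^{n\times k}$ satisfies $\operatorname{rank}(E_iM)\geq(1-\delta)k$ for all but $\varepsilon D$ indices $i$. The collection has size
\[
    D=O\left(\frac{q^3n}{(\delta\varepsilon-1/q)^5}\right).
\]
Set $\operatorname{Cond}(x,i)=E_ix$, which is linear in $x$ for every fixed seed. The case $n=1$ forces $k=1$, and then the identity map with $D=1$ works trivially.

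For the condensing property, let $X$ be uniform on $x_0+W$ with $\dim W\geq k$. It suffices to treat $\dim W=k$, since restricting to a $k$-dimensional subspace $W'\leq W$ can only decrease $\operatorname{rank}(E_i|_{W})$. Indeed, $H_\infty(E_iX)=\operatorname{rank}(E_i|_W)\log q\geq\operatorname{rank}(E_i|_{W'})\log q$, because $E_iX$ is uniform on an affine translate of $E_i(W)$. Let $M$ be the matrix whose columns form a basis of $W'$. Then $\operatorname{rank}(E_i|_{W'})=\operatorname{rank}(E_iM)$, so the lossy rank condenser guarantee gives
\[
    H_\infty(\operatorname{Cond}(X,i))\geq(1-\delta)k\log q
\]
for all but at most $\varepsilon D$ seeds. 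The target value $k'=(1-\delta)k$ satisfies $k'\leq\min\{k,m\}$ because $m=k$.

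The seed length follows by taking logarithms: $d=\log D\leq\log n+3\log q+5\log(1/(\delta\varepsilon-1/q))+O(1)$. When $q,\delta,\varepsilon$ are fixed, this is $\log n+O_{q,\delta,\varepsilon}(1)$.

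The only point needing care is the hidden constant in the $O(\cdot)$. One has to check that the bound in \cref{cor:all-field-lossy-rank-condensers} really is an absolute constant times $q^3n/\gamma^5$, with $\gamma=\delta\varepsilon-1/q$. This holds because $q^{\ell-1}\leq q(1+12/\gamma)^2$ and $N\leq\lceil 3q^2(1+12/\gamma)^2n/\gamma\rceil$, and $\gamma<1$. Multiplying these gives at most $c\,q^3n/\gamma^5$ for an absolute constant $c$, which contributes exactly the additive $O(1)$ in the bound on $d$.
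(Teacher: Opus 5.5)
Your proposal is correct and follows the paper's proof. Like the paper, you invoke \cref{cor:all-field-lossy-rank-condensers} with ambient dimension $n$, rank $k$, loss parameter $\delta$ and bad-map fraction $\varepsilon$, set $\operatorname{Cond}(x,i)=E_ix$, and take logarithms of $D=O\bigl(q^3n/(\delta\varepsilon-1/q)^5\bigr)$. The extra points you address are the $n=1$ edge case, the reduction of sources of dimension larger than $k$ to a $k$-dimensional subspace, and the check that the hidden constant is absolute; the paper leaves these implicit, and you handle each of them soundly.
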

\begin{proof}
Set $\gamma=\delta\varepsilon-1/q$.
Apply \cref{cor:all-field-lossy-rank-condensers} with ambient dimension
$n$, rank $k$, loss parameter $\delta$, and bad-map fraction
$\varepsilon$, and set $\operatorname{Cond}(x,i)=E_ix$ for the resulting
maps $E_i\in\FF_q^{k\times n}$. The
construction in that corollary uses
\[
    D=O\left(\frac{q^3n}{(\delta \varepsilon - 1/q)^5}\right).
\]
Taking logarithms gives the seed-length bound.
\end{proof}

Note that with, say, $\eps=\delta=3/4$, one can even take $q=2$.

\paragraph{Comparison with previous results}
\Cref{tab:affine-condenser-comparison} compares the seed length, output
dimension and parameter restrictions of previous
constructions. For every fixed $q,\delta,$ and $\varepsilon$ satisfying
$\delta\varepsilon>1/q$, our construction has $m=k$ and seed length
$\log n+O(1)$.
Over constant-sized fields, the earlier
bounds below either get worse seed length or or vanishing output entropy rate. Our construction is the first to achieve both
optimal seed length dependence on $n$,
and constant output entropy rate, over all fixed fields -- and in particular over $\FF_2$ (when $\eps,\delta$ are large enough).

The lossless rank extractors of Gabizon and Raz
\cite[Theorem~5]{GR08a} and Forbes and Shpilka
\cite[Theorem~4.1]{FS12} give lossless affine extractors.
Forbes and Shpilka improve the number of bad maps from $nk^2$ to
$nk-\binom{k+1}{2}$, thereby saving a $\log k$ term in the seed length.
However, both families are indexed by elements of $\FF_q$, so achieving
error $\varepsilon$ requires $q=\Omega(nk^2/\varepsilon)$ and
$q=\Omega(nk/\varepsilon)$, respectively.

The lossless condensers of Guruswami, Umans, and Vadhan \cite[Theorem~1.7]{GUV09},
later made linear by Cheraghchi \cite[Corollary 2.23]{Che10},
apply to arbitrary sources over any field, at the cost of a constant-factor increase in seed length.
Their second construction~\cite[Theorem~4.4]{GUV09} avoids this
constant-factor increase in seed length, but has vanishing output entropy rate.

The folded Reed-Solomon construction of Guruswami and Kopparty
\cite[Theorem~14]{GK16}, together with the lossy analysis of Forbes and
Guruswami \cite[Proposition~6.8]{FG15}, allows every $k\leq m<n$.
Its seed length is
$\log (n/(m-k+1))+\log (1/(\delta\varepsilon))+O(1)$, but requires
the field size be larger than $n$.
Applying the field reduction of \cite[Proposition~8.5]{FG15} gives seed length
$\log (n/k)+\log (1/(\delta\varepsilon))+O(1)$ over every field, at
the cost of vanishing output entropy rate.

Guruswami, Xing, and Yuan \cite[Theorem~1.2]{GXY18} gave a function field generalization
of the construction of \cite{GK16}, 
which relaxes the
field-size restriction,
but their condenser bounds require either a growing field
or a vanishing output entropy rate. Indeed, for output dimension
$m\geq4k$, their construction supplies
$\Omega((n/m)q^{\lfloor m/(2k)\rfloor})$ maps, whereas $O(n \log_q n /(m\delta\varepsilon))$ maps suffice to obtain a $(k\to_{\varepsilon}(1-\delta)k)$-condenser.
Thus their bounds guarantee sufficiently many maps when
$
    q^{\lfloor m/(2k)\rfloor}
    \gtrsim\frac{\log_q n}{\delta\varepsilon}.
$
For fixed loss and error, keeping $m=O(k)$ therefore requires $q$
to grow with $n$, although polylogarithmic field size suffices.
Over a fixed field, this condition instead requires
$m/k=\Omega(\log_q\log_q n)$, so the guaranteed output entropy rate
$(1-\delta)k/m$ tends to zero.

Goyal, Guruswami, and Hsieh \cite{GGH26} work over every field size and allow arbitrary positive loss and error, but their
output dimension grows exponentially with $k^2$, so their guaranteed
output entropy rate vanishes as $k$ grows.

Finally,
Guo, Raj, Shangguan, and Zhang \cite{GRSZ26} obtain $m=k$ and seed
length $\log n+O(1)$ for infinitely many input dimensions, but require $q$ to grow polynomially with $k$;
over prime fields, their seed length also contains an additional
polylogarithmic term in $k$.

In the $q=2$ regime, a recent work of Doron and Goodman \cite{DG26} interpreted subspace designs as linear seeded condensers for affine sources, and combined \cite{GK16,FG15} with other objects from extractor theory. They were able to get a $(k \rightarrow_{\eps} (1/2+\beta)m)$ seeded linear affine condenser from $\mathbb{F}_2^n$ to $\mathbb{F}_2^m$ with seed length $\log(n/k)+O(\log\log n)$ and $m \ge \frac{k}{\operatorname{polylog(n)}}$ for \emph{some} nontrivial $\eps$, and some constant $\beta > 0$. For an arbitrary $\eps > 0$, they get seed length $O(\log(n/k)+\log(1/\eps)+\log\log n)$.

\begin{table}[p]
\centering
\setlength{\belowcaptionskip}{6pt}
\caption{$(k\to_{\varepsilon}(1-\delta)k)$ seeded linear affine condensers from $\FF_q^n$ to $\FF_q^m$ with seed length $d$.
Seed lengths are upper bounds, with additive $O(1)$ terms omitted.
All implicit constants are absolute.
When $\delta$ does not appear, the construction is lossless (i.e. $\delta=0$).
The output entropy rate is $\frac{(1-\delta)k}{m}$.}
\label{tab:affine-condenser-comparison}
\begingroup
\small
\setlength{\tabcolsep}{4pt}
\renewcommand{\arraystretch}{1.2}
\begin{tabular}{@{}>{\raggedright\arraybackslash}p{0.22\textwidth}
                  >{\raggedright\arraybackslash}p{0.20\textwidth}
                  >{\raggedright\arraybackslash}p{0.23\textwidth}
                  >{\raggedright\arraybackslash}p{0.29\textwidth}@{}}
\toprule
Reference & Field size & Output dimension $m$ & Seed length $d$ \\
\midrule
\cite[Theorem~5 and Remark~6.2]{GR08a}
 & $q \geq \Omega(nk^2/\varepsilon)$
 & $k$
 & $\log (n/\varepsilon)+2\log k$\\
\midrule
\cite[Theorem~4.1]{FS12}
 & $q\geq \Omega(nk/\varepsilon)$
 & $k$
 & $\log (nk/\varepsilon)$ \\
\midrule
\cite[Corollary 2.23]{Che10}
 & Every $q$
 & $O(k)+d$
 & $O(\log (nk/\varepsilon) + \log(q))$ \\
\midrule
\cite[Theorem~4.4]{GUV09}
 & Every $q$
 & $O(kd)$
 & $\log (nk/\varepsilon) + \log(q)$ \\
\midrule
\cite[Theorem~14]{GK16}
 & $q\geq \Omega \left( \frac{mn}{\delta\varepsilon(m-k+1)} \right )$
 & $m\geq k$
 & $\log\!\dfrac{n}{m-k+1}$ $+\log\!\dfrac{1}{\delta\varepsilon}$ \\
\midrule
\cite[Proposition~8.5]{FG15}
 & Every $q$
 & $O\!\left(k\left(1+\log_q\!\dfrac{n}{\delta\varepsilon}\right)\right)$
 & $\log(n/(\delta\varepsilon k))$ \\
\midrule
\cite[Theorem~1.2]{GXY18}
 & $q\geq\left(\dfrac{\log n}{\delta\varepsilon}\right)^{O(k/m)}$
 & $m\geq4k$
 & $\log(n/(\delta\varepsilon m))$ $+\log\log_q n$ \\
\midrule
\cite[Theorem~1.1]{GGH26}
 & Every $q$
 & $\poly(k,1/(\delta\varepsilon))q^{k^2}$
 & $\log(n/(m\delta\varepsilon))$ \\
\midrule
\cite[Theorem~4.5]{GRSZ26}
 & $q\geq\poly(k,1/(\delta\varepsilon))$; non-prime
 & $k$
 & $\log (n/(\delta\varepsilon))$ \\
\midrule
\cite[Theorem~5.10]{GRSZ26}
 & $q\geq\poly(k,1/(\delta\varepsilon))$; prime
 & $k$
 & $\log (n/(\delta\varepsilon))$
   $+O\left(\log k \cdot\log(k/(\delta\varepsilon))\right)$ \\
\midrule
\cite{DG26}
 & $q=2$
 & $\textnormal{see discussion above}$
 &  \\
\midrule
\Cref{cor:affine-source-condensers}
 & $q>1/(\delta\varepsilon)$
 & $k$
 & $\log n+O\!\left(\log\!\dfrac{q}{\delta\varepsilon-1/q}\right)$ \\
\bottomrule
\end{tabular}
\par\smallskip
\begin{minipage}{\textwidth}
\footnotesize
\end{minipage}
\endgroup
\end{table}

\paragraph{AI Usage.} All ideas, constructions, and proofs, were developed by the authors, except for the case $\text{char}\,\FF_q \neq 2$ of \cref{prop:explicit-prime-degree-place}, for which we assisted GPT-5.6 Sol. GPT-5.6 Sol was also used for language editing.

\FloatBarrier

\bibliographystyle{alpha}
\bibliography{bib}

\clearpage
\appendix
\crefalias{section}{appendix}
\section{Missing Proofs}
\label{app:function-fields-missing-proofs}

We begin with the proof of the case $\text{char}\,\FF_q = 2$ of \cref{prop:explicit-prime-degree-place}.
\begin{proof}[Proof of \cref{prop:explicit-prime-degree-place}]

It remains to consider characteristic 2. Using Shoup's algorithm, construct an
irreducible polynomial $h\in\FF_2[T]$ of degree $m$, and set
\[
    L=\FF_2[T]/(h),\qquad K=\FF_q[T]/(h).
\]
The condition $\gcd(m,e)=1$ implies that $h$ remains irreducible over
$\FF_q$. Thus $L\cong\FF_{2^m}$ is a subfield of
$K\cong\FF_{q^m}$. Write $\tau=\Tr_{L/\FF_2}$; since $m$ is odd,
$\tau(1)=1$. The $\FF_2$-linear map
\[
    A\colon L\longrightarrow L,\qquad A(z)=z^r+z
\]
has kernel $\FF_2$. Indeed, $A(z)=0$ if and only if $z^{2^s}=z$, and the
fixed field of $z\mapsto z^{2^s}$ in $L\cong\FF_{2^m}$ is
$\FF_{2^{\gcd(s,m)}}=\FF_2$. Moreover, Frobenius invariance of the trace gives
\[
    \tau(A(z))=\tau(z^{2^s})+\tau(z)=0
\]
for every $z\in L$, and hence $\im(A)\subseteq\ker(\tau)$. Both spaces have
dimension $m-1$ over $\FF_2$: the first because $\dim_{\FF_2}\ker(A)=1$, and
the second because $\im(\tau) = \FF_2$. Therefore
$\im(A)=\ker(\tau)$.
Consequently, whenever $\tau(b)=0$, we can solve $A(z)=b$ by
Gaussian elimination over $\FF_2$.

We first construct $c\in L^\times$ such that
\[
    \FF_2(c)=L,\qquad \tau(c)=\tau(c^{-1})=0.
\]
Set $M=2m+1$. Since $m\geq5$, the group $L^\times$ contains more than $M$
elements. Enumerate any $M+1$ of them, and choose one, denoted by $u$, such
that $u^M\neq1$. Such an element exists
because $X^M-1$ has at most $M$ roots. In particular, $u\notin\FF_2$, so $u$
has degree $m$ over $\FF_2$, since the prime-degree extension $L/\FF_2$ has no
proper intermediate fields. Its Frobenius orbit
is disjoint from that of $u^{-1}$. Indeed, an equality
$u^{2^a}=u^{-1}$ with $0\leq a<m$ would, after applying the $2^a$-power
Frobenius once more, give
\[
    u^{2^{2a}}=(u^{-1})^{2^a}=(u^{2^a})^{-1}=u.
\]
Since the Frobenius orbit of $u$ has size $m$, this gives $m\mid2a$, and hence
$a=0$. But then $u=u^{-1}$, which in characteristic two implies $u=1$, a
contradiction. Thus
\[
S=\{1\}\cup\{u^{2^a}:0\leq a<m\}
      \cup\{u^{-2^a}:0\leq a<m\}
\]
has $M$ distinct nonzero elements.

We claim that there is some $1\leq j\leq M$ with $\tau(u^j+u^{-j})=0$.
Otherwise, for every such $j$ we
would have
\[
    \sum_{a\in S}a^j=1+\tau(u^j+u^{-j})=0.
\]
Enumerate $S=\{s_1,\ldots,s_M\}$ and consider the matrix
\[
    V=(s_k^j)_{1\leq j,k\leq M}.
\]
This matrix is invertible. Indeed, multiplying its $k$th column by $s_k^{-1}$ gives
the Vandermonde matrix $(s_k^{j-1})_{1\leq j,k\leq M}$, which is invertible
because the elements of $S$ are distinct. On the other hand, the preceding
equalities say that $V(1,\ldots,1)^\top=0$, a contradiction. We find a
suitable $j$ by testing all $M$ choices.

We next verify that $u^j\neq1$. Let $n$ be the multiplicative order of $u$.
For every positive integer $a$, we have
\[
    u^{2^a}=u \quad\Longleftrightarrow\quad n\mid 2^a-1.
\]
Since the Frobenius orbit of $u$ has size $m$, it follows that $m$ is the
multiplicative order of $2$ modulo $n$. Now choose a prime divisor $\rho$ of
$n$. As $n\mid 2^m-1$, the multiplicative order of $2$ modulo $\rho$ divides
$m$. It cannot equal $1$, since that would imply $\rho\mid 2-1$, and therefore
equals $m$ because $m$ is prime. By Fermat's little theorem this order divides
$\rho-1$, so $m\mid\rho-1$. Both $m$ and $\rho$ are odd, and hence
$2m\mid\rho-1$. Consequently,
\[
    n\geq\rho\geq2m+1=M.
\]
Equality $n=M$ would imply $u^M=1$, contrary to the choice of $u$. Thus
$n>M$.

Set $t=u^j$ and $c=t+t^{-1}$. Since $1\leq j\leq M<n$, we have $t\neq1$.
It follows that $c\neq0$, because in characteristic two the equality
$t+t^{-1}=0$ would imply $(t+1)^2=0$, and hence $t=1$. The choice of $j$
gives $\tau(c)=0$. We now verify the corresponding assertion for $c^{-1}$.
Since the characteristic is two,
\[
    c^{-1}
    =\frac{t}{t^2+1}
    =\frac{t}{(t+1)^2}
    =\frac{1}{t+1}+\frac{1}{(t+1)^2}.
\]
Let $w=(t+1)^{-1}$. Then $c^{-1}=w+w^2$, and Frobenius invariance gives
$\tau(w^2)=\tau(w)$. Therefore
\[
    \tau(c^{-1})=\tau(w)+\tau(w^2)=2\tau(w)=0.
\]
Finally, $c\notin\FF_2$, because $c\neq0$, $\tau(c)=0$, and $\tau(1)=1$.
Since $m$ is prime, $c$ therefore generates $L/\FF_2$.

We now construct the coordinates of $Q$. Solve $A(z)=c^{-1}$ and set
$a_1=z^{-1}$. Since $c^{-1}\neq0$ whereas $A$ vanishes on $\FF_2$, every
solution $z$ lies outside $\FF_2$. In particular, $z\neq0$, so $a_1$ is
well-defined, and
\[
    \phi(a_1)=\frac{1}{z^r+z}=c.
\]
Since $m$ is prime, $a_1\notin\FF_2$ generates $L/\FF_2$. Moreover,
$\tau(\phi(a_1))=0$. Suppose that $a_j\in L\setminus\FF_2$ has
been constructed with $\tau(\phi(a_j))=0$. Since the only elements
of $L$ fixed by $z\mapsto z^r$ lie in $\FF_2$, the value
$\phi(a_j)$ is defined and nonzero. Solve $A(b)=\phi(a_j)$. As
$\ker(A)=\FF_2$, the two solutions are $b$ and $b+1$; neither lies in
$\FF_2$, because their image under $A$ is nonzero. Furthermore,
\[
    \phi(b+1)+\phi(b)
    =\frac{(b+1)^{r+1}+b^{r+1}}{A(b)}
    =1+\frac{1}{\phi(a_j)}.
\]
Frobenius invariance gives
$\tau(a_j^{-r})=\tau(a_j^{-1})$, so
$1/\phi(a_j)=a_j^{-1}+a_j^{-r}$ has trace zero. Therefore the two values
$\phi(b)$ and $\phi(b+1)$ have different traces: taking the trace of both sides of the above equation gives
\[
    \tau(\phi(b+1))+\tau(\phi(b))=\tau(1)=1.
\]
Thus
\[
    a_{j+1}=b+\tau(\phi(b))
\]
selects $b$ if $\tau(\phi(b))=0$ and $b+1$ otherwise. It therefore satisfies
\[
    A(a_{j+1})=\phi(a_j),\qquad
    \tau(\phi(a_{j+1}))=0.
\]
In either case, $a_{j+1}\notin\FF_2$. This completes the induction and
constructs $a_1,\ldots,a_i$ without vanishing denominators.

The evaluation $x_j\mapsto a_j$ satisfies the defining equations of $F_i$, which defines a place $Q$.
Since $a_1$ generates $L/\FF_2$, its residue field is
\[
    \FF_q(a_1,\ldots,a_i)=\FF_q(a_1)=\FF_qL=K.
\]
Therefore $\deg Q=m$.
All steps can be carried out in time $\poly(q,\ell,g)$.
\end{proof}

Next, we prove \cref{cor:degree-d-places-gs-tower}.

\begin{proof}[Proof of \cref{cor:degree-d-places-gs-tower}]
The tuple $(0,\ldots,0)$ defines a rational place of $F_i$, so $F_i$ has
full constant field $\FF_q$. Its constant-field extension $F_i^{(d)}$ is
the $i$'th level of the Garcia--Stichtenoth tower over
$\FF_{q^d}=\FF_{r^2}$. It has the same genus $g_i$, and
$g_i\leq r^i$ by \cref{thm:gs-tower-parameters}.

Let $\mathcal{S}_i$ be the set of evaluation tuples
$(x_1,\ldots,x_i)\in\FF_{r^2}^i$ satisfying the tower equations and
$x_1^r+x_1\neq0$. The trace map
$z\mapsto z^r+z$ from $\FF_{r^2}$ to $\FF_r$ is surjective with kernel of
size $r$. There are therefore $r^2-r$ choices for $x_1$. Moreover, if
$x_j^r+x_j\neq0$, then
\[
    \frac{x_j^r}{x_j^{r-1}+1}
    =\frac{x_j^{r+1}}{x_j^r+x_j}\in\FF_r^\times,
\]
so there are exactly $r$ choices for $x_{j+1}$, all with nonzero trace.
These tuples represent distinct affine rational places of $F_i^{(d)}$, and
hence
\[
    |\mathcal{S}_i|=(r^2-r)r^{i-1}=r^i(r-1).
\]

For each proper divisor $b$ of $d$, let $\mathcal{S}_i(b)$ consist of the
tuples in $\mathcal{S}_i$ whose coordinates lie in $\FF_{q^b}$. If
$d/b$ is even, then $z^r=z$ on $\FF_{q^b}$. Thus,
$\mathcal{S}_i(b)$ is empty in characteristic two, whereas in odd
characteristic the recurrence $x_{j+1}=x_j/4$ gives
$|\mathcal{S}_i(b)|=q^b-1$.

If $d/b$ is odd, then $b$ is even. Setting $h=q^{b/2}$, we have
$z^r=z^h$ on $\FF_{q^b}=\FF_{h^2}$. The tower equation is therefore
\[
    x_{j+1}^h+x_{j+1}
    =\frac{x_j^{h+1}}{x_j^h+x_j}.
\]
Here $z^h+z$ and $z^{h+1}$ are the trace and norm from $\FF_{h^2}$ to
$\FF_h$. There are $h^2-h$ choices for $x_1$ of nonzero trace and $h$
choices at every subsequent step. Consequently,
\[
    |\mathcal{S}_i(b)|=h^i(h-1)
    \qquad\text{when $d/b$ is odd}.
\]

A tuple in $\mathcal{S}_i$ has a $q$-Frobenius orbit of size less than $d$
only if it belongs to $\mathcal{S}_i(b)$ for some proper divisor $b$ of
$d$. The even-quotient cases contribute fewer than
\[
    \sum_{b=1}^{d/2}q^b<2r
\]
tuples. In the odd-quotient cases, $b\leq d/3$ and
$h^i(h-1)<q^{b(i+1)/2}$. Hence these cases contribute fewer than
\[
    \sum_{b=1}^{\lfloor d/3\rfloor}q^{b(i+1)/2}
    <2r^{(i+1)/3}
\]
tuples.

For $\boldsymbol{a}=(a_1,\ldots,a_i)\in\mathcal{S}_i$, let
$\widehat P_{\boldsymbol{a}}$ be the affine rational place of $F_i^{(d)}$
represented by $\boldsymbol{a}$, and let
\[
    \operatorname{Orb}_q(\boldsymbol{a})
    =\bigl\{
      (a_1^{q^v},\ldots,a_i^{q^v}):0\leq v<d
     \bigr\}
\]
be its $q$-Frobenius orbit. By
\cref{lem:places-under-constant-extension}, if this orbit has size $d$,
then the places $\widehat P_{\boldsymbol{b}}$ with
$\boldsymbol{b}\in\operatorname{Orb}_q(\boldsymbol{a})$ lie above a
unique degree-$d$ place $P_{\boldsymbol{a}}$ of $F_i$. Moreover, two full
orbits give the same place of $F_i$ if and only if they are equal. Let
$\mathcal{P}_i$ be the set of places obtained from the full orbits in
$\mathcal{S}_i$. We can enumerate this set explicitly by representing
each place by a representative from its orbit. Since every full orbit contains $d$ tuples, for
every $i\geq2$ we have
\[
    d|\mathcal{P}_i|
    >(r-1)r^i-2r-2r^{(i+1)/3}.
\]
The genus formula in \cref{thm:gs-tower-parameters} gives
$g_i/r^i\to1$. Therefore,
\[
    \liminf_{i\to\infty}\frac{|\mathcal{P}_i|}{g_i}
    \geq\frac{r-1}{d}
    =\frac{q^{d/2}-1}{d}.
\]

We now deduce the bound on $B_d(F_i)$. It is trivial for $i=1$, since
$g_1=0$, so assume that $i\geq2$. If $r\geq4$, then
\[
    2r+2r^{(i+1)/3}\leq\frac{r-1}{2}r^i.
\]
The preceding construction of $\mathcal{P}_i$ therefore gives
\[
    dB_d(F_i)
    \geq d|\mathcal{P}_i|
    > (r-1)r^i-2r-2r^{(i+1)/3}
    \geq\frac{r-1}{2}r^i
    \geq\frac{r-1}{2}g_i.
\]
The only cases with $r<4$ are $(q,d)=(2,2)$ and $(q,d)=(3,2)$. In the
first case, $\mathcal{S}_i(1)$ is empty, while in the second it has size
two. Thus,
\[
    2B_2(F_i)\geq
    \begin{cases}
        2^i, & r=2,\\
        2\cdot3^i-2, & r=3,
    \end{cases}
    \geq\frac{r-1}{2}g_i.
\]
Dividing by $d$ proves the general bound.

It remains to prove the assertion about the Riemann--Roch spaces. Let
$\widehat F_i=F_i^{(d)}$, and let $\widehat P_\infty$ be the unique pole
of $x_1$ in $\widehat F_i$. Its restriction $P_\infty$ to $F_i$ is the
unique pole of $x_1$ in $F_i$, and $\widehat P_\infty$ is the unique
place above it. Since $\widehat P_\infty$ is rational,
\cref{lem:places-under-constant-extension} implies that $P_\infty$ is
rational as well. The constant-field-extension theorem for
Riemann--Roch spaces now gives
\begin{equation}
\label{eq:rr-space-constant-extension}
    \mathcal{L}_{\widehat F_i}(\mu\widehat P_\infty)
    =\FF_{q^d} \cdot \mathcal{L}_{F_i}(\mu P_\infty);
\end{equation}
see \cite[Theorem~3.6.3(d)]{Stich}. In particular, the two spaces have
the same dimension over their respective constant fields.

The field $\widehat F_i/\FF_{q^d}$ is a Garcia--Stichtenoth function
field. By \cite{ShumEtAl2001}, we can compute an $\FF_{q^d}$-basis
$f_1,\ldots,f_t$ of
$\mathcal{L}_{\widehat F_i}(\mu\widehat P_\infty)$ in polynomial time.
Let $\sigma$ be the $q$-Frobenius automorphism of $\widehat F_i/F_i$;
that is, $\sigma$ acts on the constants by $c\mapsto c^q$ and fixes the
tower generators $x_1,\ldots,x_i$. By
\cref{eq:rr-space-constant-extension}, the $\sigma$-invariant subspace of
$\mathcal{L}_{\widehat F_i}(\mu\widehat P_\infty)$ is precisely
$\mathcal{L}_{F_i}(\mu P_\infty)$.

Choose an $\FF_q$-basis $\beta_1,\ldots,\beta_d$ of $\FF_{q^d}$. For
$a\in[d]$ and $j\in[t]$, compute
\[
    h_{a,j}
    =\Tr_{\widehat F_i/F_i}(\beta_af_j)
    =\sum_{v=0}^{d-1}\sigma^v(\beta_af_j).
\]
Each $h_{a,j}$ belongs to $\mathcal{L}_{F_i}(\mu P_\infty)$. Moreover,
these elements span this space over $\FF_q$. Indeed, the elements
$\beta_af_j$ form an $\FF_q$-basis of
$\mathcal{L}_{\widehat F_i}(\mu\widehat P_\infty)$, so their traces span
the image of $\Tr_{\widehat F_i/F_i}$. This image is the entire descended
Riemann--Roch space: for any $f\in\mathcal{L}_{F_i}(\mu P_\infty)$, choose
$\beta\in\FF_{q^d}$ with $\Tr_{\FF_{q^d}/\FF_q}(\beta)=1$, which exists
because finite fields are separable. Then
\[
    \Tr_{\widehat F_i/F_i}(\beta f)=f.
\]
We can therefore select an $\FF_q$-basis from the elements $h_{a,j}$ by
Gaussian elimination.

Finally, let $P$ be a degree-$d$ place represented by a tuple
$\boldsymbol{a}\in\FF_{q^d}^i$, and let $\widehat P_{\boldsymbol{a}}$ be
the corresponding rational place of $\widehat F_i$. Under the
identification $\kappa(P)\cong\FF_{q^d}$ determined by
$\boldsymbol{a}$, evaluation at $P$ agrees with evaluation at
$\widehat P_{\boldsymbol{a}}$. The latter can be computed efficiently by
the algorithm of \cite{ShumEtAl2001}.
\end{proof}

\end{document}